\documentclass[11pt,reqno]{amsart}

\usepackage{fullpage}
\usepackage[OT1]{fontenc}

\usepackage{amsthm,amsmath,amsfonts,amssymb}
\usepackage[authoryear]{natbib}
\usepackage[colorlinks,citecolor=blue,urlcolor=blue]{hyperref}
\usepackage{graphicx}
\usepackage{placeins,pdflscape}
\usepackage{algorithm}
\usepackage{algpseudocode}
\usepackage{booktabs}

\theoremstyle{plain}
\newtheorem{proposition}{Proposition}[section]
\newtheorem{corollary}[proposition]{Corollary}

\newcommand{\bs}{\boldsymbol}

\title[Amortized Bayesian Boundary Detection]
{Amortized Bayesian Disease Mapping and Boundary Detection on Heterogeneous Spatial Graphs}

\author{Luca Aiello\textsuperscript{$\ast$,1} and Sudipto Banerjee\textsuperscript{$\ast$,2}}

\thanks{\textsuperscript{$\ast$}Department of Biostatistics, University of California, Los Angeles.}
\thanks{\textsuperscript{1}\href{mailto:laiello@g.ucla.edu}{laiello@g.ucla.edu} \textsuperscript{2}
\href{mailto:sudipto@ucla.edu}{sudipto@ucla.edu}.}

\keywords{
Amortized Bayesian inference,
boundary detection,
disease mapping,
spatial health disparities,
spatial statistics,
simulation-based inference
}

\begin{document}

\begin{abstract}
Spatial disease maps help public-health researchers identify geographic
inequalities, but standard Bayesian smoothing can obscure localized
disparities when neighboring communities have sharply different socioeconomic
or behavioral profiles. Analysts therefore need to determine where smoothing
should be interrupted and repeat that analysis as maps, adjacency structures,
and outcomes change. We develop a covariate-informed Bayesian boundary model
and an amortized posterior approximation trained across heterogeneous areal
graphs. The model distinguishes local interruptions in smoothing from broader
residual spatial dependence; the trained approximation handles maps with
different numbers of regions. Simulations examine posterior calibration,
boundary-probability recovery, replicated-data behavior, and
MCSE-controlled agreement with prior-matched MCMC.
In contrast to traditional approaches that analyze these data separately, we
demonstrate the effectiveness of using a single trained deep learning network
to analyze respiratory hospitalizations in Greater Glasgow; lung cancer
incidence in California; and tracheal, bronchial, and lung cancer mortality
in South Korea, comprising 58 to 241 regions. Selected boundary density is
greatest in Glasgow and lowest in South Korea despite substantial residual
spatial dependence in both, showing that local interruption and broader spatial
persistence need not vary together. Across all three applications, edge-level
boundary probabilities agree substantially with dataset-specific analyses,
although posterior spread and thresholded boundary sets differ. These results
support reusable Bayesian boundary analysis across the evaluated disease-map
class and identify the validation needed before deployment to new applications.
\end{abstract}

\maketitle

\section{Introduction}
\label{sec:introduction}

Disease mapping is an applied tool for characterizing geographic variation in incidence, mortality, and hospitalization and for identifying local patterns that warrant scientific or public-health investigation \citep{koch2005cartographies,wakefield2007disease,lawson2016handbook,lawson2018bayesian}. Because counts from small areas can be unstable, practitioners commonly use Bayesian spatial models that borrow information across neighboring regions through conditional autoregressive (CAR) or related Gaussian Markov random field priors \citep{besag1974spatial,besag1991bayesian,leroux2000estimation}. This borrowing improves precision when disease risk varies gradually, but it can conceal localized disparities when neighboring communities lie on opposite sides of sharp socioeconomic, environmental, administrative, or behavioral contrasts. Therefore, analyzing disease maps must identify both where smoothing is useful and where it is insufficiently supported.

We consider three health applications concerning spatial disparities or boundary detection on disease maps. Respiratory hospitalizations are analyzed across 134 areas in Greater Glasgow, where neighboring communities can differ substantially in income deprivation. Lung cancer incidence is examined across the 58 counties of California, where smoking prevalence varies geographically. Tracheal, bronchial, and lung cancer mortality is studied across 241 South Korean municipalities, again using local smoking-prevalence contrasts to inform possible interruptions in smoothing. The applications differ in graph size, topology, covariate structure, and spatial heterogeneity, but pose the same substantive question: which neighboring areas can reasonably share spatial information, and which borders show evidence that this borrowing should be interrupted? Answering this question helps applied analysts distinguish broad spatial persistence from localized departures that global smoothing may obscure.

This article addresses a growing operational challenge in such applications: public (global) health organizations increasingly demand near-instantaneous disease mapping and boundary detection, particularly to interface spatial analyses with large language models (LLMs). Conventional workflows require fitting a separate Markov chain Monte Carlo (MCMC) sampler whenever outcomes, geographic resolutions, or adjacency graphs change. This computational bottleneck motivates an amortized "train-once and apply widely" paradigm. By using a single pre-trained network, researchers can analyze disparate, highly heterogeneous geographic data without model retraining or per-dataset MCMC sampling. We therefore ask whether posterior inference for a spatial boundary model can be learned across simulated graphs and transferred, with appropriate validation, to new disease maps.

Although boundary detection in areal disease mapping has a rich methodological literature, frameworks have not been extended to amortized learning settings. Early approaches identified boundaries through large posterior differences in fitted risks across neighboring areas \citep{lu2005bayesian}, while later formulations introduced edge-specific indicators or neighbor-specific weights to weaken spatial dependence across selected borders \citep{lu2007bayesian,ma2007bayesian,ma2010hierarchical}. More structured approaches relate boundary formation to observed dissimilarities between neighboring areas, reducing spatial smoothing when adjacent regions differ sufficiently in relevant covariates \citep{lee2012boundary,lee2014bayesian,rushworth2017adaptive,lee2021improved,gao2023spatial,wu2025assessing}. We follow this covariate-informed perspective but couple it with a directed acyclic graph autoregressive (DAGAR) prior \citep{datta2019spatial}. The resulting model separates two components of spatial variation: a boundary parameter determines where edges are removed because of covariate dissimilarity, while a residual dependence parameter controls spatial persistence over the retained graph. Posterior inference quantifies both local boundary evidence and residual spatial dependence.

To make this boundary model reusable across changing spatial settings, we investigate \emph{amortized Bayesian inference} (ABI). ABI trains a posterior approximator on $\text{(parameter, data)}$ pairs simulated from the generative model and reuses it for new datasets within a specified deployment regime \citep{radev2020bayesflow,sainsbury2024likelihood,zammit2025neural}. The intended benefit is a common inferential workflow across maps whose numbers of regions and adjacency structures differ. The resulting procedure remains an approximation of posterior inference under the stated statistical model and therefore requires explicit validation before practical deployment.

This transfer requires an input representation that is not tied to a fixed
number of regions. We represent each map as an unordered collection of
node-specific, graph-aware summaries constructed from observed counts,
offsets, covariates, and adjacency information. A permutation-invariant
SetTransformer \citep{zaheer2017deep,lee2019set} maps this variable-size
collection to a fixed-dimensional representation, and a conditional
normalizing flow
\citep{rezende2015variational,papamakarios2017masked,durkan2019neural,
papamakarios2021normalizing}
uses that representation to approximate the posterior distribution of the
model parameters. Training over maps that vary in size, topology, covariate
surface, boundary configuration, and residual dependence therefore yields a
single posterior operator over a class of spatial graphs rather than a
dataset-specific approximation.

Reliability is especially important when a learned approximation is intended for repeated use. We therefore evaluate parameter recovery, interval calibration, simulation-based calibration, parameter-posterior replicated-data diagnostics, and edge-level boundary probabilities on held-out maps. We also compare the learned posterior with a prior-matched MCMC implementation,
and use ablation experiments to assess the graph-aware representation. These evaluations characterize its accuracy across the evaluated regime and the aspects of inference for which material discrepancies remain.

The empirical analyses illustrate how the framework answers the same applied question in three distinct settings. Greater Glasgow has the largest selected boundary density, California has a sparser and more uncertain boundary pattern, and South Korea has the smallest boundary density despite a high posterior median for residual spatial dependence. These findings identify where a globally smooth representation is least compatible with the fitted boundary model and provide geographically specific results for substantive follow-up. Held-out simulations establish calibration and boundary recovery within the training regime, while agreement with dataset-specific Bayesian analyses supports stable transfer to empirical graphs with atypical topology and covariate smoothness.

Our intended contribution is summarized as follows. %
First, we formulate the practical disease-mapping question of where spatial borrowing should be interrupted while retaining a separate representation of broader residual dependence. Second, we develop an amortized posterior approximation that allows this analysis to be reused across areal graphs with different numbers of regions and adjacency structures. Third, we evaluate the resulting workflow through truth-based simulation, comparison with dataset-specific Bayesian computation, and three substantive health applications with distinct graph structures and boundary regimes. Together, these components provide applied statisticians with a transferable framework and an explicit validation strategy for assessing its use on new maps.

The paper proceeds as follows. Section~\ref{sec:applications} introduces the motivating data and scientific questions. Section~\ref{sec:model} develops the boundary model and the amortized inference framework. Section~\ref{sec:validation} presents the validation study; Section~\ref{sec:real_data} presents the data analyses and comparisons with dataset-specific benchmarks; and Section~\ref{sec:discussion} concludes.

\section{Spatial health applications and scientific questions}\label{sec:applications}

Together, these applications span distinct outcomes, spatial resolutions, graph
structures, and boundary-driving covariates. Their comparison asks whether
broad residual dependence and local interruptions in borrowing vary together
or constitute distinct spatial features.

\subsection{Respiratory disease in Greater Glasgow}
\label{sec:glasgow_data}

The first application considers respiratory-disease hospitalizations in 2010
for the 134 Intermediate Zones north of the River Clyde in the Greater Glasgow
and Clyde health board. The analysis files reproduce the
\texttt{respiratory}\allowbreak\texttt{data} object in the
\texttt{CARBayes}\allowbreak\texttt{data} package
\citep{CARBayes2013,CARBayesdata2022}. For each area,
the data contain the observed number of hospitalizations and an expected count
obtained by indirect standardization using Scotland-wide respiratory
hospitalization rates. The corresponding observed-to-expected ratio is the 
standardized hospitalization ratio (SHR). The areal graph contains 360 adjacent pairs, 
averaging 5.37 neighbors per area.

Socioeconomic deprivation is an important source of spatial heterogeneity in
this setting. We use the percentage of residents classified as income deprived
as the covariate defining dissimilarity between neighboring areas, following
the localized-smoothing analysis of \citet{lee2012boundary}. The resulting
application therefore asks whether a spatial risk surface that is broadly
correlated across Glasgow also contains localized interruptions in smoothing
between neighboring communities with different deprivation profiles.

\subsection{Lung cancer incidence in California}
\label{sec:california_data}

The second application considers lung cancer incidence during 2012--2016 across the 58 California counties. We use the processed county-level data analyzed by \citet{gao2023spatial}, with incidence counts originating from the Surveillance, Epidemiology, and End Results (SEER) Program of the National Cancer Institute \citep{seer}. Expected counts were obtained by indirect standardization using statewide incidence rates and county populations across 38 age--sex strata (19 age groups and two sexes). The corresponding observed-to-expected ratio is the standardized incidence ratio (SIR). The county adjacency graph contains 139 neighboring pairs, with an average of 4.79 neighbors per county.

We use county-level adult smoking prevalence for 2014--2016 from
\emph{California Tobacco Facts and Figures 2018}
\citep{californiaTobacco2018} to define dissimilarity between neighboring
counties. Smoking is a major risk factor for lung cancer and varies
substantially across California, making contrasts in smoking prevalence a
scientifically interpretable source of information about where spatial
smoothing may be inappropriate. Relative to Glasgow, the California map is
smaller and has a different graph topology and covariate distribution, allowing
us to examine the same boundary-detection question under a distinct spatial
configuration.

\subsection{Tracheal, bronchial, and lung cancer mortality in South Korea}
\label{sec:korea_data}

The third application considers deaths from malignant neoplasms of the
trachea, bronchus, and lung across South Korean municipalities.
Observed deaths were pooled over 2015--2019. Expected counts were obtained by indirect age--sex standardization using annual national cause-specific mortality rates and municipal midyear populations from the Korean Statistical Information Service \citep{kosis2026}. The corresponding observed-to-expected ratio is the standardized mortality ratio (SMR). We retain the largest connected component of the graph constructed from a 2019 municipal boundary layer \citep{koreaBoundaries2019}, comprising 241 municipalities and 626 neighboring pairs, with an average of 5.20 neighbors per municipality.

We use age-standardized current smoking prevalence from the Korean Community
Health-Related Factors Database \citep{kdca2026}, averaged over 2015--2019,
to define dissimilarity between neighboring municipalities. This application 
deals with 241 areas, providing a substantially larger
observed graph while remaining within the 40--300 area training range. It
therefore asks whether the same learned boundary mechanism can identify
localized smoking-associated discontinuities in tracheal, bronchial, and lung
cancer mortality without dataset-specific retraining.

Fig.~\ref{fig:data} summarizes the three motivating applications by displaying
the observed-to-expected outcome ratios together with the corresponding
boundary-driving covariates. The maps illustrate the different spatial scales
of the applications and the local covariate contrasts motivating localized
interruptions in smoothing.

\begin{figure}[t]
\centering
\includegraphics[height=0.26\textwidth,keepaspectratio]{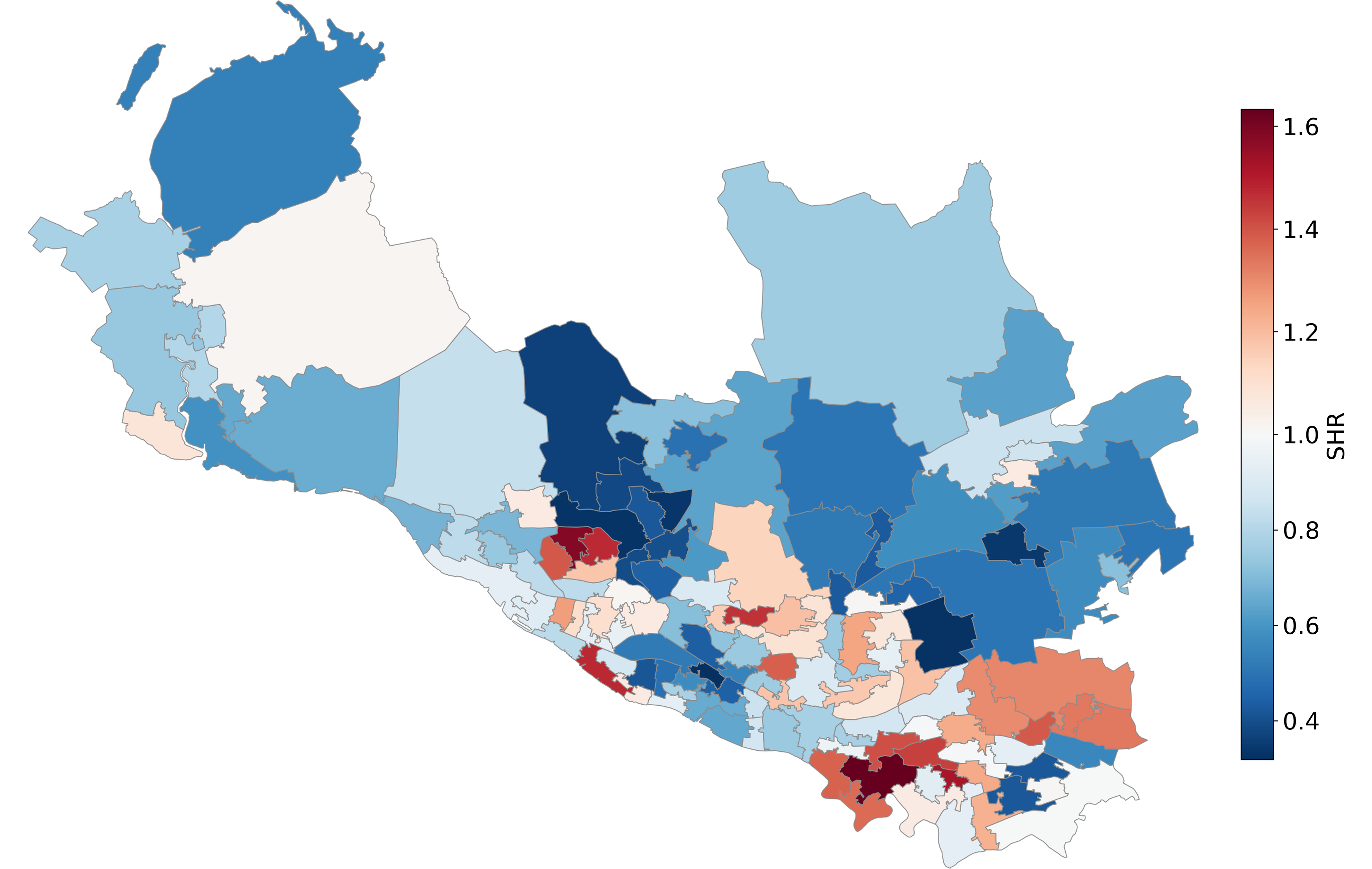}\hfill
\includegraphics[height=0.26\textwidth,keepaspectratio]{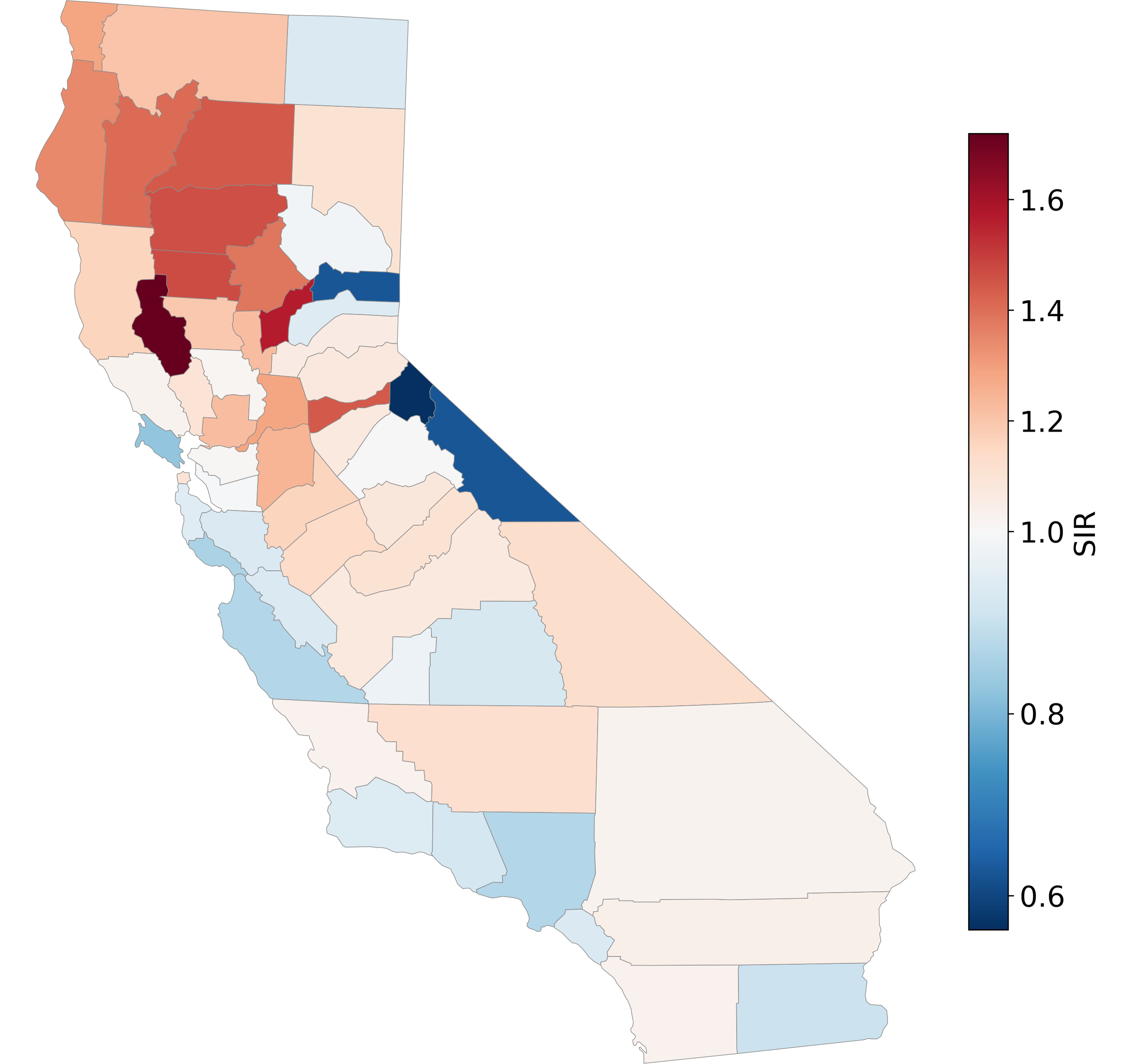}\hfill
\includegraphics[height=0.26\textwidth,keepaspectratio]{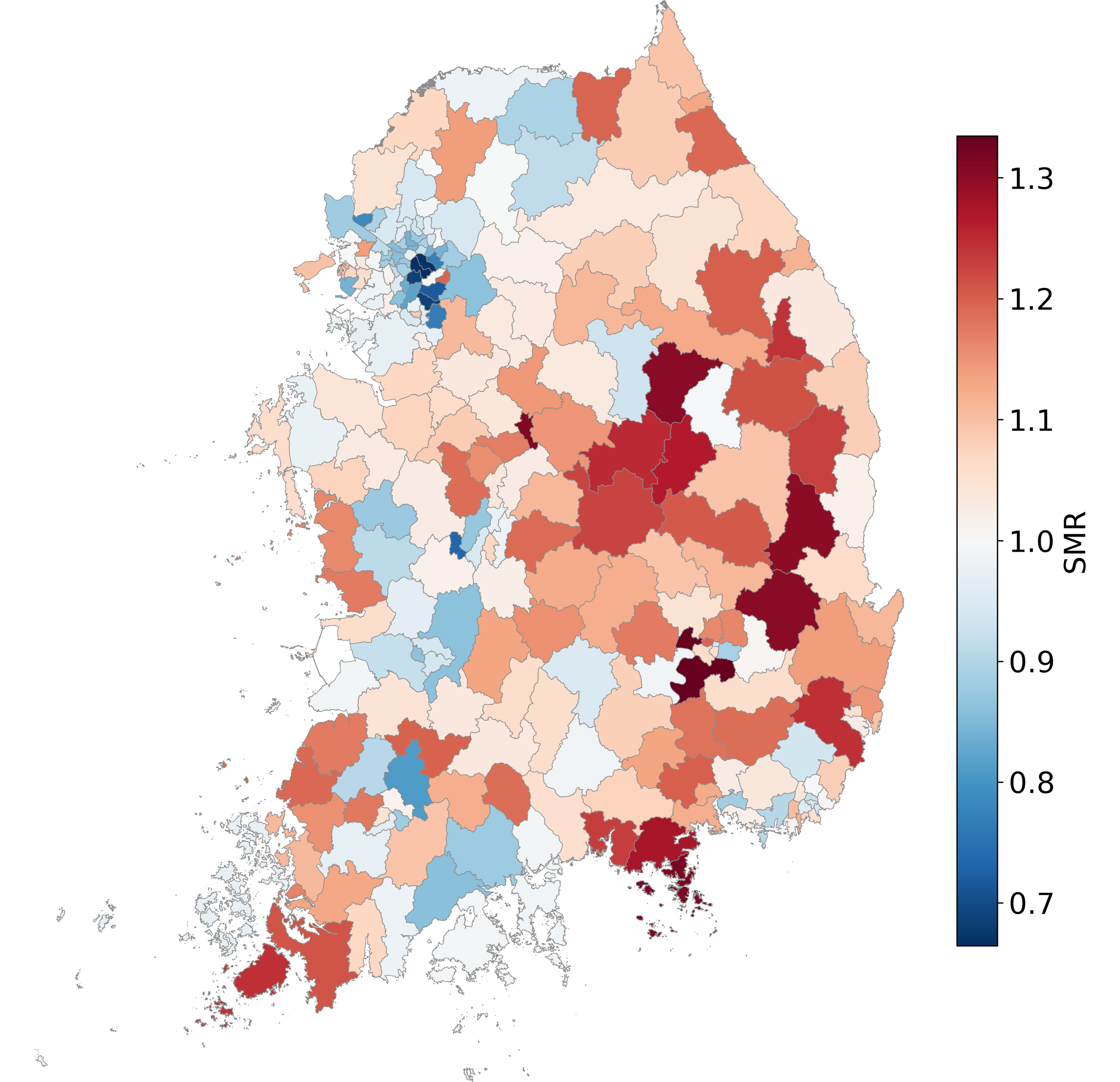}\\
\includegraphics[height=0.26\textwidth,keepaspectratio]{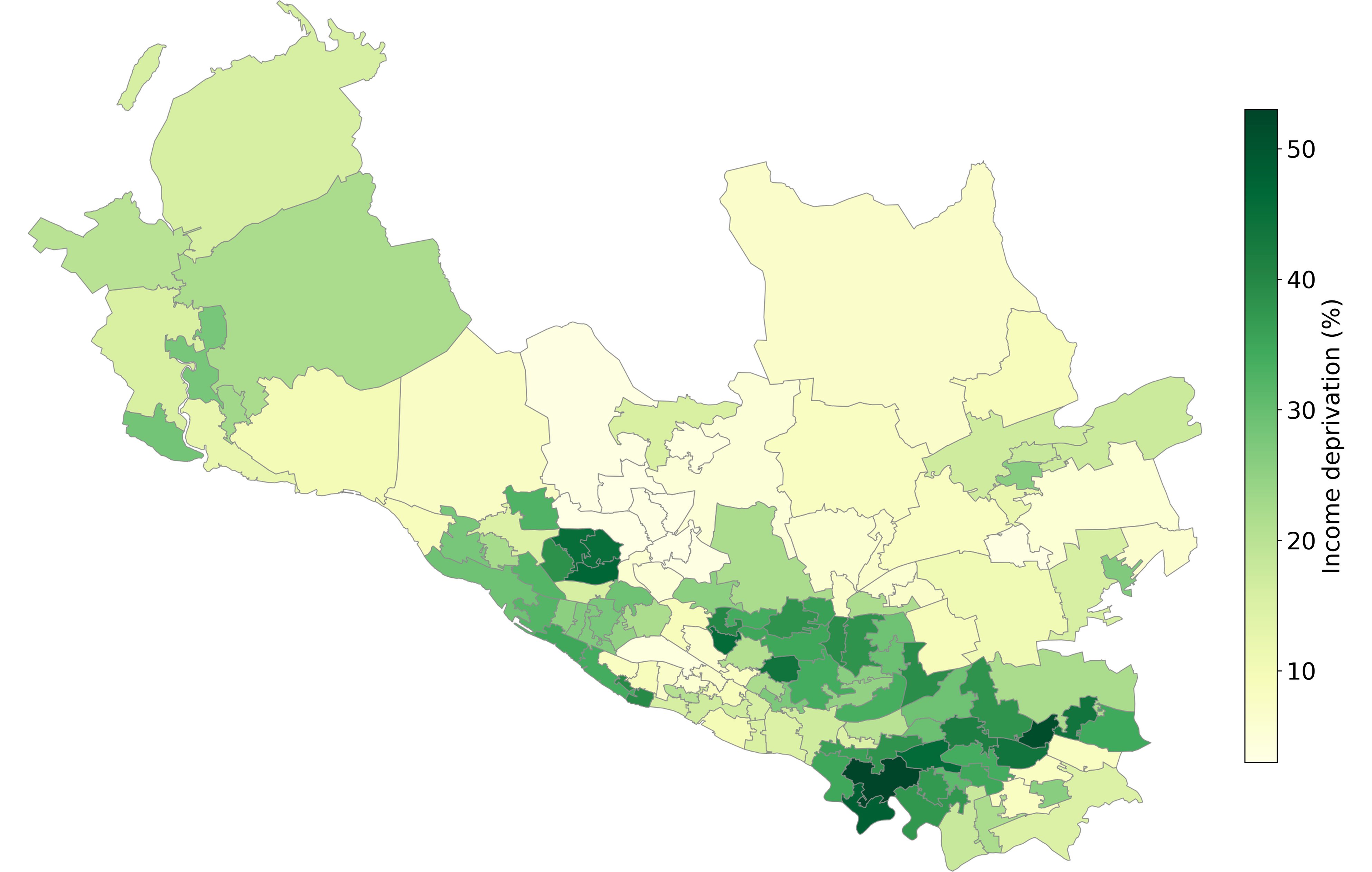}\hfill
\includegraphics[height=0.26\textwidth,keepaspectratio]{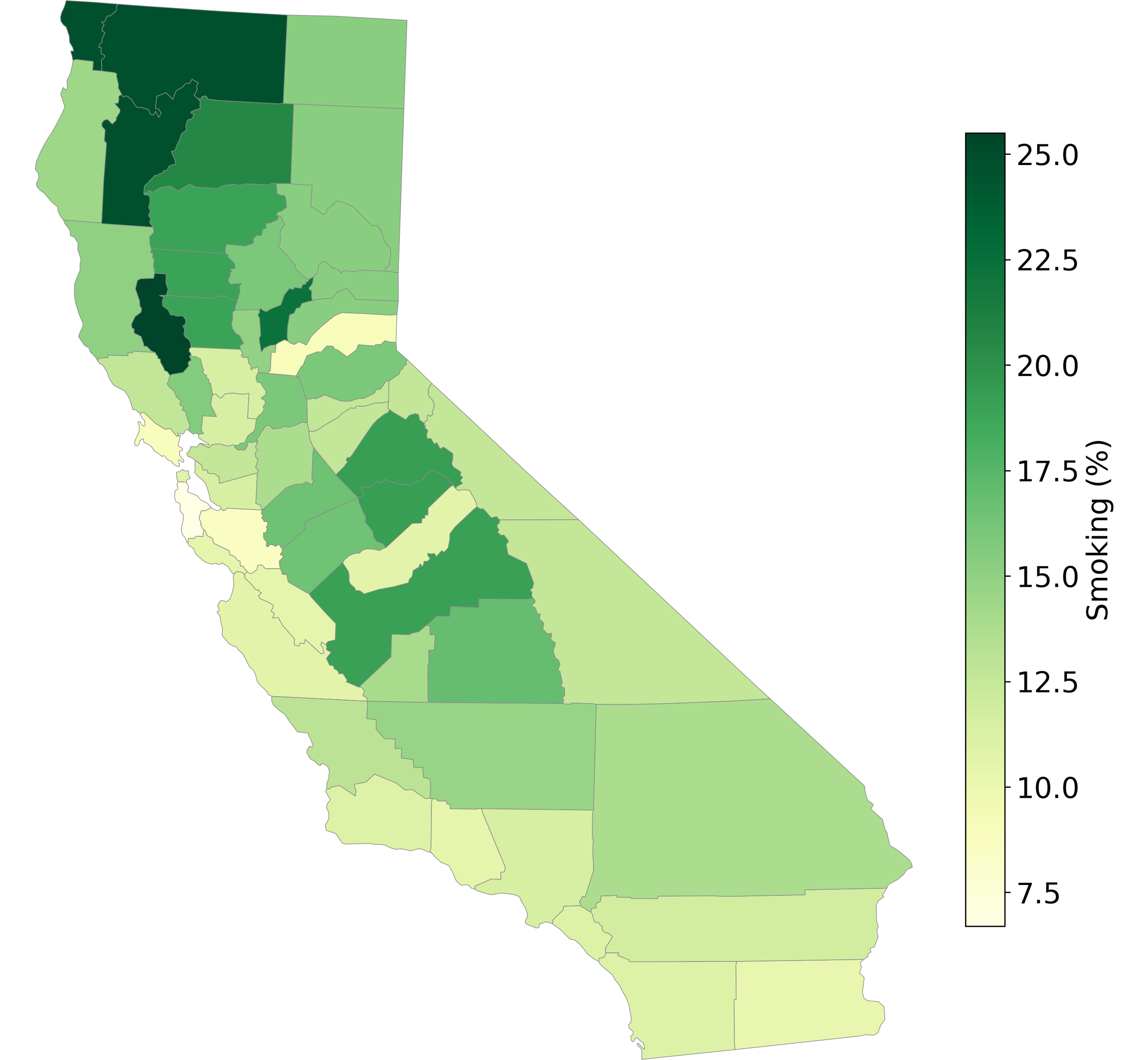}\hfill
\includegraphics[height=0.26\textwidth,keepaspectratio]{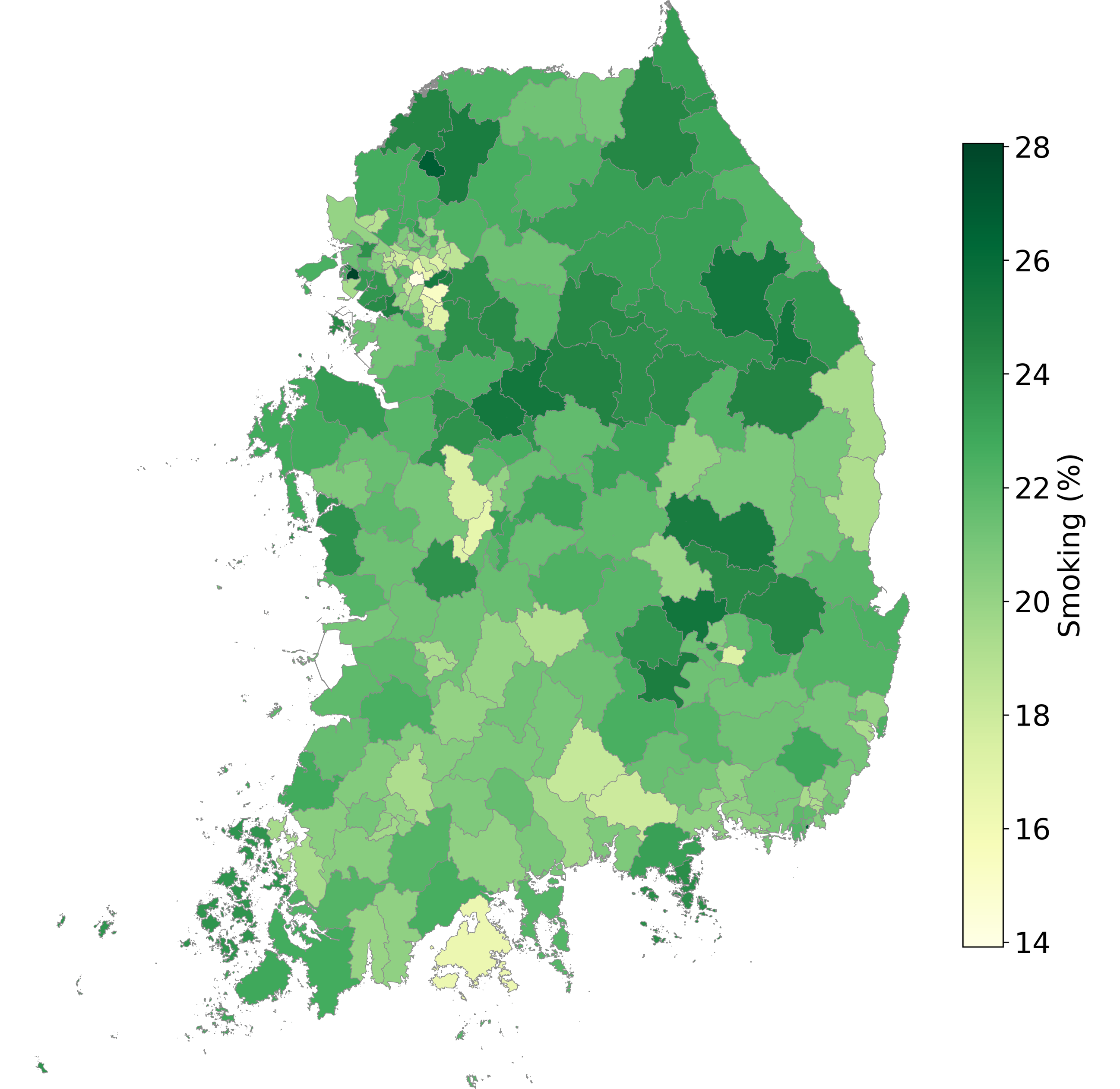}
\caption{Health outcomes and boundary-driving covariates in Greater Glasgow, California, and South Korea (left to right). Top: observed-to-expected ratios (SHR, SIR, and SMR, respectively). Bottom: income deprivation in Glasgow and smoking prevalence in California and South Korea (\%).}
\label{fig:data}
\end{figure}

The health outcomes of substantive interest are geographic patterns in hospitalization, cancer incidence, and mortality, summarized by the SHR in Greater Glasgow, SIR in California, and SMR in South Korea. Each ratio compares the observed count with that expected under the corresponding reference rates: values above one indicate more hospitalizations, cases, or deaths than expected, whereas values below one indicate fewer. Our analysis investigates the spatial structure underlying these outcome patterns, identifying where neighboring areas can share information and where the evidence supports interruptions in smoothing. For inference, we model the observed counts using expected counts as offsets, thereby accounting for differences in the precision of the observed ratios.

Across applications, residual dependence and boundary formation vary
independently. Glasgow combines strong persistence with widespread
interruptions, California shows a sparser and more uncertain pattern, and
South Korea combines strong persistence with localized boundaries on the
largest graph. The Glasgow--South Korea contrast is especially informative:
similar residual persistence accompanies markedly different local borrowing
structures. Thus, strong overall dependence does not imply that smoothing is
appropriate across every border. A common framework makes these regimes
comparable.

Fig.~\ref{fig:application_structure} examines neighboring similarity and its relationship with covariate differences. We summarize crude log risk as $r_i = \log(y_i+0.5)-\log(e_i)$. The top panels compare absolute differences between adjacent areas and an equally sized random sample of non-neighboring pairs. Adjacent contrasts are smaller on average: $0.346$ versus $0.477$ in Greater Glasgow, $0.180$ versus $0.235$ in California, and $0.105$ versus $0.140$ in South Korea. Together with positive Moran's $I$ values ($0.415$, $0.283$, and $0.466$, respectively), this indicates that adjacent areas tend to have similar risks, supporting spatial borrowing.

The bottom panels examine whether neighboring similarity weakens as deprivation or smoking prevalence differs more sharply across a border. Adjacent pairs are grouped from the lowest (Q1) to the highest (Q5) covariate-dissimilarity quintile. Mean contrasts in Q5 are $2.980$, $1.300$, and $1.380$ times those in Q1 for Glasgow, California, and South Korea, respectively. The increase is clearest in Glasgow and is not monotone in the other applications. Vertical bars describe variation across edges within each quintile, not uncertainty in the means. These patterns motivate spatial borrowing that can be relaxed at highly dissimilar borders; individual boundaries require model-based assessment.

\begin{figure}[t]
\centering
\includegraphics[width=\textwidth]{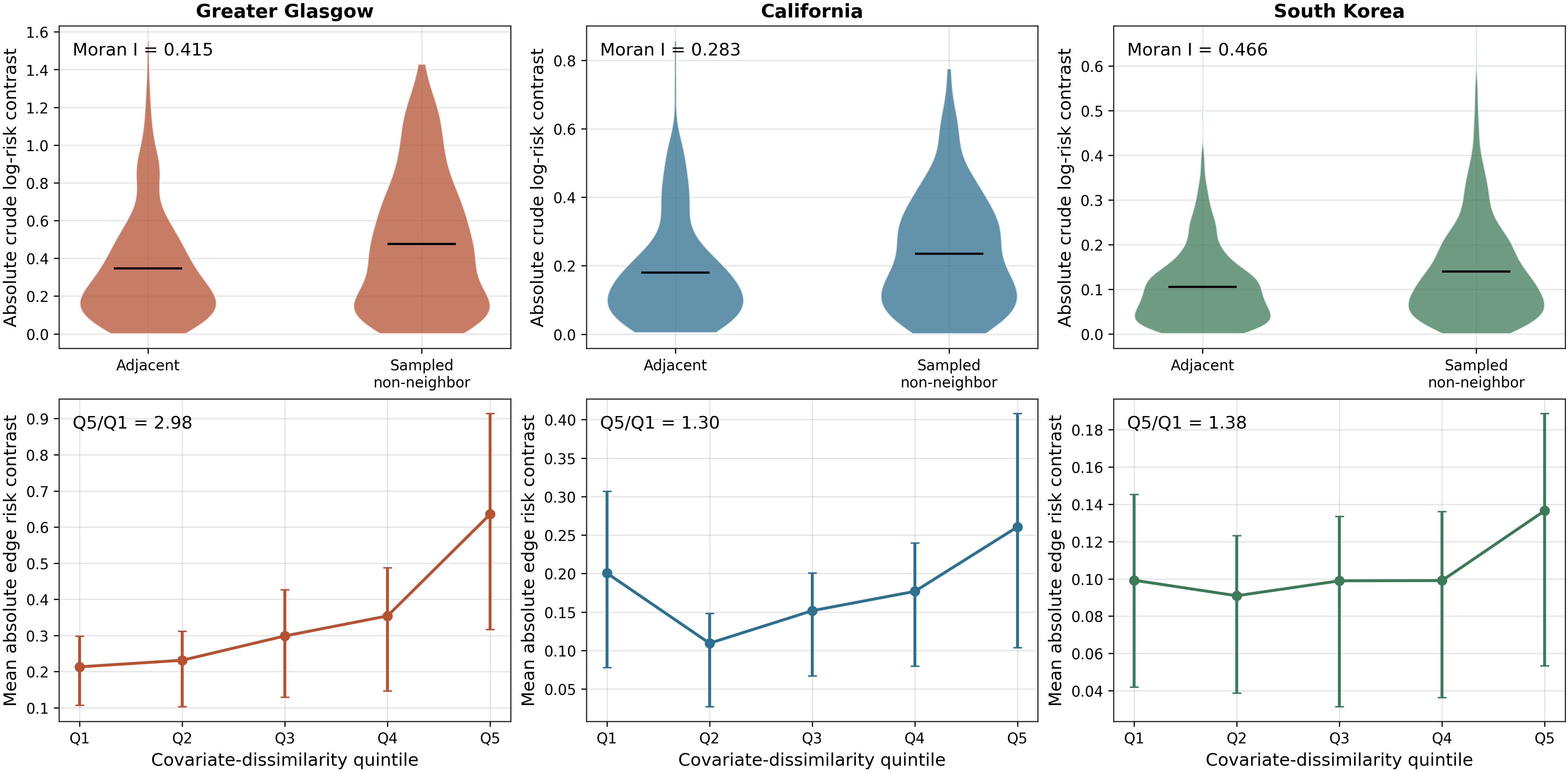}
\caption{Spatial similarity and covariate contrasts in Greater Glasgow, California, and South Korea (left to right). Top: crude log-risk differences for neighboring and sampled non-neighboring areas, with mean bars. Bottom: neighboring differences by covariate-dissimilarity quintile, with means and interquartile ranges.}
\label{fig:application_structure}
\end{figure}

The three applications share the same observed-data structure, i.e., counts,
expected counts, one boundary-driving covariate, and an areal graph, and all
fall within the 40--300 area training range. Their contiguity graphs are sparser
and their boundary-driving covariates
spatially smoother than typical training configurations, providing a meaningful
test of transfer beyond the simulator's central regime. Supplementary
Section~S4.2 documents these departures, while
Sections~\ref{sec:validation} and~\ref{sec:bayesian_benchmarks} evaluate whether
the resulting transfer remains reliable.

\subsection{Scientific and inferential questions}
\label{sec:questions}

The three applications motivate two substantive questions and a broader
inferential challenge. First, where do the disease maps provide evidence that
spatial smoothing should be interrupted between neighboring areas, and how is
this boundary structure related to geographic differences in income deprivation
or smoking prevalence? Second, how do such localized discontinuities coexist
with residual spatial dependence, and does the balance between boundary
formation and spatial persistence differ across the Glasgow, California, and
South Korean applications?

Beyond these application-specific questions, the central inferential challenge
is whether the Bayesian analysis itself can be made reusable across
heterogeneous spatial graphs. In a conventional workflow, each new disease map
would require a separate MCMC analysis, despite the fact that the underlying
statistical model and inferential targets remain unchanged. We therefore ask
whether a single posterior inference mechanism can instead be learned once
over a sufficiently rich distribution of simulated graphs and subsequently
reused to generate posterior draws for new maps that differ in dimension,
topology, and covariate structure. Such a procedure is useful only if this
transfer preserves the inferential content of the original Bayesian analysis:
posterior uncertainty must remain well calibrated, edge-level boundary evidence
must remain reliable, and the resulting posterior summaries should agree
closely with those obtained from dataset-specific MCMC.
These questions motivate the development of a boundary-detection model and a reusable
amortized posterior framework.

\section{Boundary detection with reusable Bayesian inference}
\label{sec:model}

The data and objectives in Section~\ref{sec:applications} require an analysis
that distinguishes localized interruptions in smoothing from broader residual
dependence across graphs with different sizes and adjacency structures. We
first specify a covariate-informed boundary model and then construct an
amortized posterior approximation across heterogeneous areal graphs.

\subsection{Covariate-informed spatial boundary model}
\label{sec:boundary_model}

Let $i=1,\ldots,N$ index the areal units, $y_i$ denote the observed disease
count, $e_i$ the expected count, and $x_i$ an area-level covariate used to
characterize boundary information. We specify
\begin{equation}\label{eq:likelihood}
y_i\mid w_i,\beta_0 \sim \operatorname{Pois}(\lambda_i),
\qquad
\log\lambda_i=\log e_i+\beta_0+w_i ,
\end{equation}
so that $\exp(\beta_0+w_i)$ is the relative risk for area $i$, $\beta_0$ is an overall intercept, and $w_i$
captures residual spatial variation. Let $\bs{A}=(a_{ij})$ denote the observed geographic adjacency matrix and define
the covariate dissimilarity between neighboring areas by
$z_{ij}=|x_i-x_j|$. Following the covariate-informed localized-smoothing
perspective of \citet{lee2012boundary}, we construct an effective adjacency
matrix $\bs{A}^*=(a^*_{ij})$ as
\begin{equation}\label{eq:boundary}
a^*_{ij}
=
a_{ij}\,
\mathbb{I}\!\left(z_{ij}\eta\leq\log 2\right).
\end{equation}
Thus, geographic neighbors are prevented from smoothing across their shared
border when their covariate dissimilarity is sufficiently large. Larger values
of the boundary parameter $\eta$ remove more dissimilar edges and therefore
support stronger local discontinuities. Because $\eta$ is inferred from the
data, each observed edge has posterior boundary probability
\begin{equation}\label{eq:boundary_probability}
p_{ij}
=
\Pr(a^*_{ij}=0\mid \mathcal{D})
=
\Pr(z_{ij}\eta>\log 2\mid \mathcal{D}),
\qquad a_{ij}=1,
\end{equation}
where $\mathcal{D}$ comprises the observed counts, offsets, covariates, and
geographic graph; $p_{ij}$ is the edge-level quantity used for boundary
ranking and selection.

Here, a boundary denotes removal of an edge from the adjacency structure used to 
construct the DAGAR prior; it does not directly assert that the disease-risk contrast 
across that border exceeds a specified magnitude. For a given dataset, posterior 
boundary probabilities are non-decreasing functions of covariate dissimilarity, because 
every edge shares the same posterior distribution of $\eta$. The outcomes inform this 
common distribution, while the covariate dissimilarities determine the ordering of candidate edges.

Residual spatial dependence over $\bs{A}^*$ is modeled using a DAGAR prior \citep{datta2019spatial},
\begin{equation}\label{eq:dagar}
\bs{w}\mid\sigma_w^2,\rho,\bs{A}^*
\sim
\mathcal{N}\!\left(\bs{0},\sigma_w^2 \bs{Q}(\rho;\bs{A}^*)^{-1}\right),
\end{equation}
where $\sigma_w^2>0$ controls residual spatial variability and
$\rho\in(0,1)$ controls spatial dependence over retained neighboring pairs.
The DAGAR precision $\bs{Q}(\rho;\bs{A}^*)$ is constructed from a fixed ordering of the
areal units; its complete specification is given in Supplementary
Section~S1.1.

The modified graph $\bs{A}^*$ and the residual dependence parameter
$\rho$ play complementary roles: $\eta$ determines which geographic edges are
interrupted according to covariate dissimilarity, whereas $\rho$ controls
residual smoothing over the retained graph. This construction is related to
the adjacency-modeling framework of \citet{aiello2023detecting}, but here the
objective is boundary detection. Unlike the localized CAR formulation of
\citet{lee2012boundary}, in which residual spatial dependence is fixed close to
one, we estimate $\rho$ jointly with $\eta$, allowing local interruptions in
borrowing and broader spatial persistence to vary separately.

Let $\bs{\theta}=(\beta_0,\sigma_w^2,\eta,\rho)$. We assign the priors $\beta_0\sim \mathcal{N}(0,\sigma_\beta^2)$ and $\sigma_w^2\sim\left|\mathcal{N}(0,0.5)\right|$ to the intercept and residual variance, respectively. For the boundary and residual dependence parameters, we use $\eta\sim\mathcal{U}(0,M)$ and $\rho\sim\mathcal{U}(0,1)$. Here, $M=\log 2/Z_{0.5}$, with $Z_{0.5}=\operatorname{median}\{z_{ij}:a_{ij}=1\}$ assumed to be positive, adapts $\eta$ to the observed dissimilarity scale and makes only edges with above-median dissimilarity removable. Further details on the prior specification and graph-specific scaling are provided in Supplementary Section~S1.2.

\subsection{Amortized posterior inference across varying-size graphs}
\label{sec:abi}

Posterior computation for the model in \eqref{eq:likelihood}--\eqref{eq:dagar}, 
with the parameter priors specified in Section~\ref{sec:boundary_model}, 
can be performed separately for each dataset but must be repeated whenever the
disease map changes. We instead use ABI to learn a reusable approximation of the
posterior of $\bs{\theta}$ from simulated parameter--data pairs
\citep{radev2020bayesflow,sainsbury2024likelihood,
zammit2025neural}. Training data are generated from $\mathcal{D}^{(g)}\sim p(\mathcal{D}\mid\bs{\theta}^{(g)})$ 
with $\bs{\theta}^{(g)}\sim p(\bs{\theta})$ where $g=1,\ldots,G$ and
$\mathcal{D}$ comprises the observable counts, offsets, covariates, and geographic
adjacency matrix. Graph size, topology, covariate structure, boundary
configuration, and residual dependence vary across simulated replicates, so the learned
posterior approximation is not tied to a single spatial layout. The specific
training distribution defining this deployment regime is described in
Section~\ref{sec:simulation_design}.

A central challenge is that the dimension of $\mathcal{D}$ changes with the number of
areas $N$. We therefore represent each dataset as an unordered collection
$\mathcal{S}(\mathcal{D})=\{\bs{s}_1,\ldots,\bs{s}_N\}$
of node-specific, graph-aware features constructed entirely from observable
quantities. These features summarize the local count and offset information,
residual agreement among geographic neighbors, covariate dissimilarities across
edges, and graph-level spatial dependence. Counts and offsets mainly inform 
$\beta_0$ and $\sigma_w^2$; neighborhood
roughness and graph-level autocorrelation inform $\sigma_w^2$ and $\rho$;
covariate dissimilarities and neighboring contrasts inform $\eta$.

The complete feature construction is given in Supplementary Section~S2.1. Importantly, neither the latent spatial effects $\bs{w}$ nor the filtered graph $\bs{A}^*$ is supplied to the network, since these quantities are unavailable in empirical applications. A permutation-invariant summary network maps the variable-size input to a
fixed-dimensional representation,
\begin{equation}
\bs{z}=h_{\bs{\psi}}\left(\mathcal{S}(\mathcal{D})\right).
\label{eq:summary}
\end{equation}
We use a SetTransformer for the mapping in \eqref{eq:summary}, allowing the same network
to process maps with different numbers of regions. Its output is invariant to permutations 
of the supplied node-feature rows. This encoder property does not remove the
ordering dependence of the DAGAR prior: the implemented simulator and analyses
use a fixed identity ordering of the supplied area indices. Conditional on
$\bs{z}$, a normalizing flow 
approximates the posterior through
\begin{equation}
q_{\bs{\phi}}(\bs{\theta}\mid\bs{z})
=
q_{\bs{\phi}}\!\left(
\bs{\theta}\mid
h_{\bs{\psi}}\left(\mathcal{S}(\mathcal{D})\right)
\right).
\label{eq:flow}
\end{equation}
The summary and inference networks \eqref{eq:summary}--\eqref{eq:flow} are trained jointly by minimizing,
\begin{equation}
(\widehat{\bs{\phi}},\widehat{\bs{\psi}})
=
\arg\min_{(\bs{\phi},\bs{\psi})}
-
\mathbb{E}_{p(\bs{\theta},\mathcal{D})}
\left[
\log q_{\bs{\phi}}\!\left(
\bs{\theta}\mid
h_{\bs{\psi}}\left(\mathcal{S}(\mathcal{D})\right)
\right)
\right].
\label{eq:loss}
\end{equation}

The SetTransformer handles variation in graph size by mapping each
node-feature set to a fixed-dimensional representation, while the conditional
flow performs posterior inference in the common four-parameter space.
To make the probabilistic construction of
$q_{\bs{\phi}}(\bs{\theta}\mid\bs{z})$ explicit, let
$f_{\bs{\phi}}(\cdot;\bs{z})$ denote the complete invertible transformation from
the parameter support to the Gaussian reference space, including any fixed
one-to-one support transformations together with the learned flow
transformation. Thus,
\begin{equation*}
\bs{u}
=
f_{\bs{\phi}}(\bs{\theta};\bs{z}),
\qquad
\bs{\theta}
=
f_{\bs{\phi}}^{-1}(\bs{u};\bs{z}),
\qquad
\bs{u}\sim\mathcal N(\bs{0},\bs{I}),
\end{equation*}
which induces
\begin{equation*}
q_{\bs{\phi}}(\bs{\theta}\mid\bs{z})
=
p_{\bs{u}}
(f_{\bs{\phi}}(\bs{\theta};\bs{z}))
\left|
\det
\frac{\partial f_{\bs{\phi}}(\bs{\theta};\bs{z})}
{\partial\bs{\theta}}
\right|.
\end{equation*}
For each fixed dataset and representation $\bs{z}$, this transformation is
bijective, so the change-of-variables formula implies that
$q_{\bs{\phi}}(\cdot\mid\bs{z})$ integrates to one over the corresponding
parameter support.
Thus, the learned representation affects the information on which posterior
inference is conditioned, but not the validity of the conditional flow as a
probability density. The corresponding normalization argument is given in
Supplementary Section~S2.2.

Let $\mathcal L(\bs{\phi},\bs{\psi})$ denote the minimization objective in
\eqref{eq:loss}. Then
\begin{align}\label{eq:loss_kl}
\mathcal L(\bs{\phi},\bs{\psi})
=
H(\bs{\theta}\mid\mathcal D)
+
\mathbb E_{p(\mathcal D)}
\left[
\operatorname{KL}
\left\{
p(\bs{\theta}\mid\mathcal D)
\,\Vert\,
q_{\bs{\phi}}(\bs{\theta}\mid\bs{z})
\right\}
\right],
\end{align}
where $H(\bs{\theta}\mid\mathcal D) = -\mathbb E_{p(\bs{\theta},\mathcal D)}
[\log p(\bs{\theta}\mid\mathcal D)]$ is the conditional entropy, and 
$\operatorname{KL}\{r\,\Vert\,s\}$ denotes the Kullback--Leibler 
divergence between densities $r$ and $s$ on a common parameter
space $\Theta$. Because $H(\bs{\theta}\mid\mathcal D)$ is independent 
of $(\bs{\phi},\bs{\psi})$, minimizing $\mathcal L(\bs{\phi},\bs{\psi})$ is equivalent
to minimizing the second term on the right hand side of \eqref{eq:loss_kl}. 
More details in Supplementary Section~S2.2.

The presence of the summary network introduces a distinct approximation
question. For a fixed representation map $h_{\bs{\psi}}$, the conditional
log-density objective is minimized, over unrestricted conditional densities,
by $p(\bs{\theta}\mid\bs{z})$. This conditional posterior coincides with the
full-data posterior $p(\bs{\theta}\mid\mathcal D)$ when the representation is
posterior-sufficient, that is, $\bs{\theta}\perp\!\!\!\perp\mathcal D\mid\bs{z}$. 
Joint training therefore encourages the representation to retain information
relevant to posterior inference, but exact sufficiency of the learned
representation is not assumed here. This distinction is consistent with
\citet{radev2020bayesflow}, who establish exact posterior recovery under
perfect convergence and sufficient summaries existence, while
identifying information loss through the summary network as a distinct source
of approximation error in practice.

The KL expectation term in \eqref{eq:loss_kl} contains two conceptually
distinct components. Proposition~\ref{prop:abi_decomposition} separates
information lost through the learned representation from conditional-density
approximation error. Its proof and the resulting exact-recovery condition are
given in Supplementary Section~S2.2. All distributions in the proposition are induced by the joint
prior-predictive law
$p(\bs{\theta},\mathcal D)
=
p(\bs{\theta})p(\mathcal D\mid\bs{\theta})$,
with
$\bs{z}=h_{\bs{\psi}}(\mathcal S(\mathcal D))$.
Here $p(\mathcal D)$ denotes the prior-predictive marginal and
$p(\bs{z})$ its induced distribution under the learned representation.
The dependence of $p(\bs{z})$ and
$p(\bs{\theta}\mid\bs{z})$ on $\bs{\psi}$ is suppressed for notational
simplicity.
\begin{proposition}[Approximation-error decomposition]
\label{prop:abi_decomposition}
Let $\bs{z}=h_{\bs{\psi}}(\mathcal S(\mathcal D))$ be a deterministic
representation of $\mathcal D$, and let
$q_{\bs{\phi}}(\bs{\theta}\mid\bs{z})$ be a conditional density. Assume that the
conditional densities below are defined with respect to a common dominating
measure and that the displayed expected KL divergences are finite.
Then
\begin{align*}
\mathbb E_{p(\mathcal D)}
\left[
\operatorname{KL}
\left\{
p(\bs{\theta}\mid\mathcal D)
\,\Vert\,
q_{\bs{\phi}}(\bs{\theta}\mid\bs{z})
\right\}
\right]
=
I(\bs{\theta};\mathcal D\mid\bs{z}) +
\mathbb E_{p(\bs{z})}
\left[
\operatorname{KL}
\left\{
p(\bs{\theta}\mid\bs{z})
\,\Vert\,
q_{\bs{\phi}}(\bs{\theta}\mid\bs{z})
\right\}
\right],
\end{align*}
where
$I(\bs{\theta};\mathcal D\mid\bs{z})
=
\mathbb E_{p(\mathcal D)}
\left[
\operatorname{KL}
\left\{
p(\bs{\theta}\mid\mathcal D)
\,\Vert\,
p(\bs{\theta}\mid\bs{z})
\right\}
\right]$
is the conditional mutual information between $\bs{\theta}$ and $\mathcal D$
given $\bs{z}$ under the induced joint distribution.
\end{proposition}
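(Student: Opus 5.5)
The plan is to insert the intermediate density $p(\bs{\theta}\mid\bs{z})$ inside the logarithm of the left-hand KL and split the resulting log-ratio into two pieces. Since $\bs{z}=h_{\bs{\psi}}(\mathcal S(\mathcal D))$ is a deterministic function of $\mathcal D$, for each $\mathcal D$ with $\bs{z}$ determined we write, almost everywhere with respect to the common dominating measure,
\begin{equation*}
\log\frac{p(\bs{\theta}\mid\mathcal D)}{q_{\bs{\phi}}(\bs{\theta}\mid\bs{z})}
=
\log\frac{p(\bs{\theta}\mid\mathcal D)}{p(\bs{\theta}\mid\bs{z})}
+
\log\frac{p(\bs{\theta}\mid\bs{z})}{q_{\bs{\phi}}(\bs{\theta}\mid\bs{z})}.
\end{equation*}
Integrating against $p(\bs{\theta}\mid\mathcal D)$ and then against $p(\mathcal D)$ gives the left-hand side as the sum of two expectations under the joint law $p(\bs{\theta},\mathcal D)$. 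The first is, by definition, $\mathbb E_{p(\mathcal D)}[\operatorname{KL}\{p(\bs{\theta}\mid\mathcal D)\Vert p(\bs{\theta}\mid\bs{z})\}]=I(\bs{\theta};\mathcal D\mid\bs{z})$.

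The second term is $\mathbb E_{p(\bs{\theta},\mathcal D)}[\log p(\bs{\theta}\mid\bs{z})-\log q_{\bs{\phi}}(\bs{\theta}\mid\bs{z})]$. Its integrand depends on $(\bs{\theta},\mathcal D)$ only through $(\bs{\theta},\bs{z})$. We therefore push the joint law forward to $(\bs{\theta},\bs{z})$, which has joint distribution $p(\bs{z})p(\bs{\theta}\mid\bs{z})$. Equivalently, by the tower property,
\begin{equation*}
\mathbb E\big[g(\bs{\theta},\bs{z})\big]
=
\mathbb E_{p(\bs{z})}\Big[\mathbb E\big\{g(\bs{\theta},\bs{z})\mid\bs{z}\big\}\Big].
\end{equation*}
The key fact is that, because $\bs{z}$ is $\mathcal D$-measurable, the conditional law of $\bs{\theta}$ given $\bs{z}$ is the mixture $\int p(\bs{\theta}\mid\mathcal D)\,p(\mathcal D\mid\bs{z})\,d\mathcal D$. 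This is exactly how $p(\bs{\theta}\mid\bs{z})$ is defined under the induced joint distribution. The inner expectation is thus $\operatorname{KL}\{p(\bs{\theta}\mid\bs{z})\Vert q_{\bs{\phi}}(\bs{\theta}\mid\bs{z})\}$, which yields the second term of the statement.

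The main obstacle is rigor in splitting the expectation of a sum into a sum of expectations. Each piece must be integrable rather than merely having a finite sum, otherwise we risk an $\infty-\infty$ situation. I would handle this with the stated finiteness of the displayed expected KL divergences. Both right-hand terms are nonnegative expected KLs, so finiteness of each, together with the pointwise identity, makes the split legitimate; negative parts are controlled because KL integrands have integrable negative parts when the divergence is finite. One also needs absolute continuity, $p(\bs{\theta}\mid\mathcal D)\ll p(\bs{\theta}\mid\bs{z})$, so that the log-ratios are well defined. This holds automatically: if $p(\bs{\theta}\mid\bs{z})=0$ on a set, the mixture representation forces $p(\bs{\theta}\mid\mathcal D)=0$ there for $p(\mathcal D\mid\bs{z})$-almost every $\mathcal D$. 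Combining the two terms then gives the decomposition. As a corollary, the left-hand side vanishes if and only if the representation is posterior-sufficient and $q_{\bs{\phi}}=p(\cdot\mid\bs{z})$ almost surely.
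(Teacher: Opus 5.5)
Your proposal is correct and follows essentially the same route as the paper: insert $p(\bs{\theta}\mid\bs{z})$ into the log-ratio, identify the first expectation with $I(\bs{\theta};\mathcal D\mid\bs{z})$, and handle the second term with the induced factorization $p(\bs{\theta},\bs{z})=p(\bs{z})p(\bs{\theta}\mid\bs{z})$ and the tower property. The only differences are that the paper also connects the first term to the standard conditional-mutual-information definition via $p(\bs{\theta}\mid\mathcal D,\bs{z})=p(\bs{\theta}\mid\mathcal D)$, which you treat as definitional, and that you spell out integrability and absolute-continuity details the paper leaves implicit.
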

The two terms on the right hand side in Proposition~\ref{prop:abi_decomposition} have distinct
inferential interpretations. The first, $I(\bs{\theta};\mathcal D\mid\bs{z})$, measures posterior-relevant information in the full data that is not retained by the learned representation and vanishes under posterior sufficiency. The second measures the conditional-density approximation error of the normalizing flow given that representation. Thus, even an exact conditional approximation to $p(\bs{\theta}\mid\bs{z})$ need not equal the full-data posterior if the representation discards parameter-relevant information. Finite simulation and optimization additionally determine how closely the objective is attained.

For a new observed map $\mathcal{D}_{\mathrm{obs}}$, we first construct the same
observable graph-aware features used during training and compute 
$\bs{z}_{\mathrm{obs}} = h_{\widehat{\bs{\psi}}}\left(\mathcal{S}(\mathcal{D}_{\mathrm{obs}})\right)$.
Posterior draws are then generated from
$q_{\widehat{\bs{\phi}}}
(\bs{\theta}\mid\bs{z}_{\mathrm{obs}})$, and boundary probabilities are 
estimated by averaging \eqref{eq:boundary} over the posterior draws of $\eta$. 
We refer to this entire amortized framework as
ABI-DAGAR.

\subsection{Deployment regime and validation strategy}
\label{sec:deployment}

The amortized draws are samples from
\begin{equation*}
q_{\widehat{\bs{\phi}}}
(\bs{\theta}\mid
h_{\widehat{\bs{\psi}}}(\mathcal S(\mathcal D))),
\end{equation*}
which we use as an approximation to
$p(\bs{\theta}\mid\mathcal D)$.
Proposition~\ref{prop:abi_decomposition} separates the representation and
conditional-density components of this approximation, so its adequacy must be
evaluated over the intended deployment regime. Moreover, the trained network
is intended for datasets resembling those generated by its training
distribution; substantially different graph structures, generated outcomes, or 
covariate behavior may require renewed validation or retraining.

Validation is therefore part of the inferential workflow.
Section~\ref{sec:validation} evaluates posterior calibration,
boundary-probability quality, replicated-data behavior, agreement with
prior-matched DAGAR MCMC, and the graph-aware representation before empirical
deployment. Fig.~\ref{fig:abi_workflow} summarizes this train
–validate–reuse workflow; the fitted SetTransformer and conditional normalizing flow are
reused unchanged across all three applications.

\begin{figure}[t]
\centering
\includegraphics[width=\textwidth]{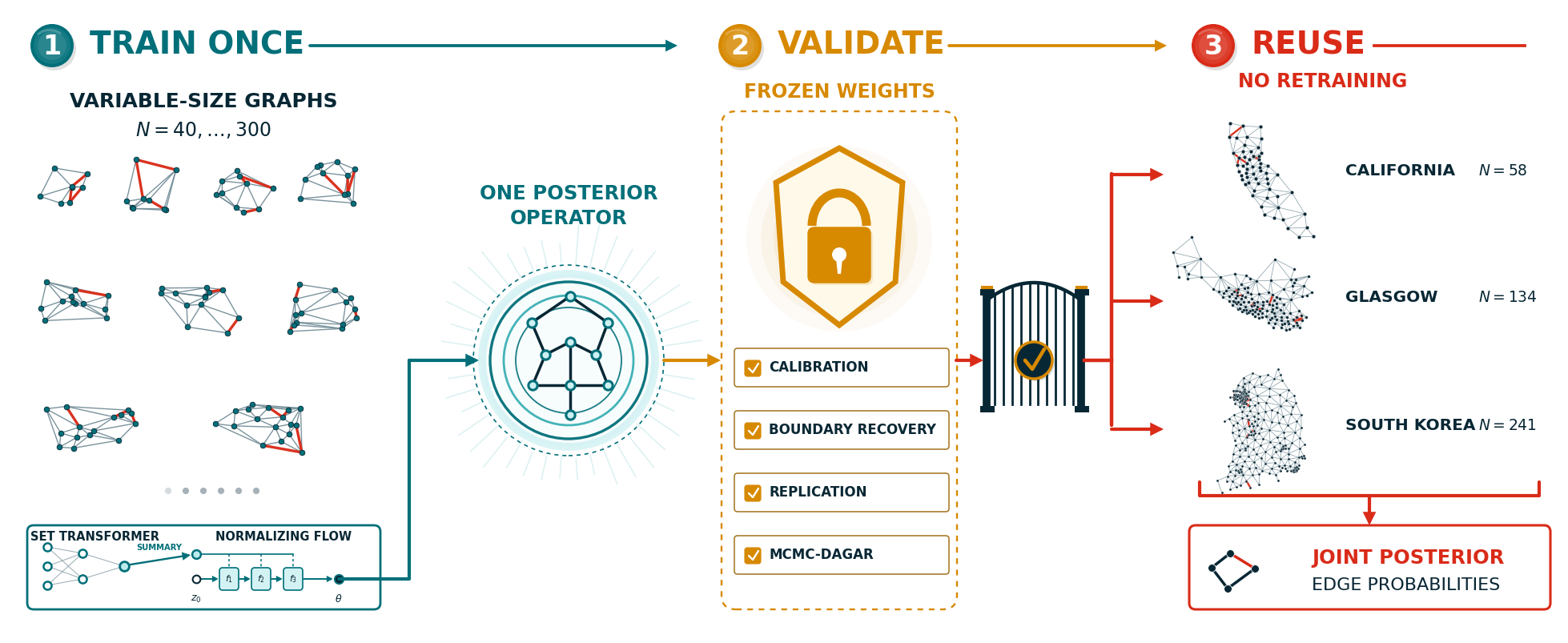}
\caption{ABI-DAGAR workflow. A common posterior operator is trained on variable-size simulated graphs, validated on held-out data, and reused without retraining across heterogeneous disease maps.}
\label{fig:abi_workflow}
\end{figure}

\section{Validation of the amortized posterior approximation}
\label{sec:validation}

We assess held-out parameter calibration, edge-level boundary probabilities, and agreement with prior-matched Bayesian computation. Replication experiments, representation ablation, computational details, and expanded results are reported in Supplementary Section~S3.

\subsection{Simulation design and posterior calibration}
\label{sec:simulation_design}

Training simulations used graphs with 40--300 areas obtained by Delaunay triangulation of random planar coordinates. Covariates, exposures, model parameters, filtered graphs, DAGAR effects, and Poisson counts were generated sequentially from the model in Section~\ref{sec:boundary_model}. Online training used 1,280,000 simulated datasets over 100 epochs; training and deployment are detailed in Supplementary Section~S2.3, with simulation specifications in Sections~S2.4 and S3.1.

We evaluated the fitted approximation on 200 independently generated held-out
datasets using 10,000 posterior draws per dataset. For
$\bs{\theta}=(\beta_0,\sigma_w^2,\eta,\rho)$, absolute posterior-mean biases
did not exceed $0.020$; root mean squared errors (RMSEs) were $0.019$, $0.164$,
$0.150$, and $0.158$, respectively, and empirical 95\% coverage ranged from
$0.945$ to $0.980$. Simulation-based calibration and recovery
diagnostics showed no systematic miscalibration or deterioration with graph
size or edge count. More details are reported in Supplementary
Section~S3.1.

\subsection{Boundary-probability validation}
\label{sec:boundary_validation}

We next evaluated the edge-level posterior boundary probabilities in
\eqref{eq:boundary_probability}. Pooling geographic edges across the 200
held-out datasets, these probabilities yielded an area under the receiver
operating characteristic curve (AUROC) of $0.974$, average precision of
$0.905$, and a Brier score of $0.053$. Formal definitions and complementary
dataset-level summaries are provided in Supplementary Section~S3.2. The AUROC
measures how well the probabilities rank true boundaries above non-boundaries,
average precision summarizes precision across levels of sensitivity,
and the Brier score measures the mean squared difference between posterior
boundary probabilities and binary boundary indicators, with lower values
indicating greater probabilistic accuracy.

Fig.~\ref{fig:boundary_calibration} examines calibration directly. Empirical
boundary frequencies closely follow the corresponding posterior probabilities
in the reliability diagram and remain aligned across edge-dissimilarity
strata. Thus, the amortized approximation recovers not only the ordering of
candidate boundaries but also meaningful variation on the probability scale.

\begin{figure}[t]
\centering
\includegraphics[width=\textwidth]{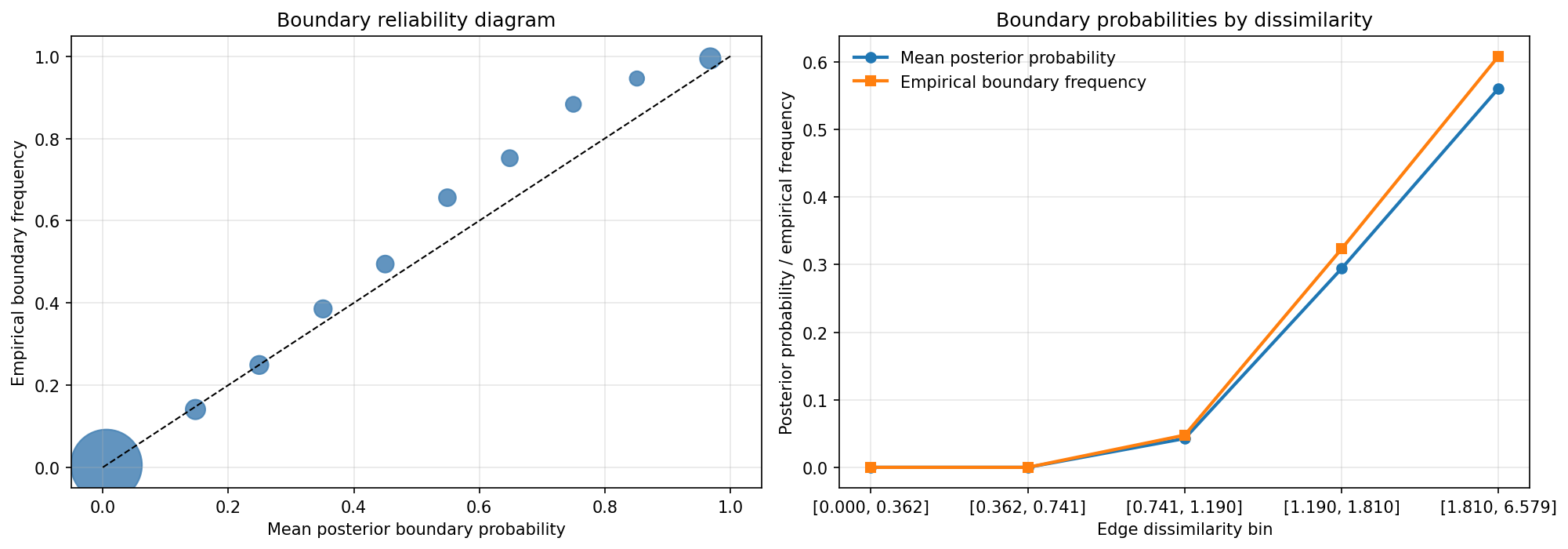}
\caption{Boundary-probability calibration on held-out simulated datasets.
Left: empirical boundary frequency versus mean posterior boundary probability,
with point diameter proportional to the number of edges in each probability
bin. Right: posterior and empirical boundary frequencies by edge-dissimilarity
bin.}
\label{fig:boundary_calibration}
\end{figure}

Under the median-probability rule, which classifies an edge as a boundary when
its posterior probability exceeds $0.5$, mean sensitivity and specificity
were $0.715$ and $0.970$, respectively. The 95\% posterior interval for the
total boundary count covered the truth in $0.975$ of datasets. The Bayesian
false discovery rate (FDR) rule of \citet{li2015bayesian}, applied with
$\alpha=0.05$, was more conservative, with mean sensitivity and specificity
of $0.170$ and $0.998$, respectively; it identified no true boundaries in 121
of the 161 datasets containing boundaries. The FDR rule is appropriate when
limiting the posterior expected proportion of false selections is primary,
whereas the median-probability rule offers greater sensitivity for exploratory
boundary identification while retaining high specificity.

Because thresholded classifications depend on the operating rule, we treat
posterior boundary probabilities as the primary inferential output and use
the median-probability rule to construct descriptive boundary maps. 
Additional threshold-based diagnostics and comparisons between the two
operating rules are provided in Supplementary Section~S3.2.

\subsection{Matched posterior benchmark and recovery-error decomposition}
\label{sec:matched_mcmc_validation}

Held-out calibration evaluates recovery under repeated draws from the training
distribution, but it does not distinguish statistical estimation error under
the target posterior from discrepancies introduced by the amortized
approximation. We therefore analyzed an additional fixed bank of 100 held-out
datasets using both ABI-DAGAR and dataset-specific MCMC. Both methods use the priors in
Section~\ref{sec:boundary_model}, the same Poisson likelihood, DAGAR
construction, and boundary mechanism. Accordingly, differences between the
two analyses reflect posterior-approximation and posterior-simulation error.
Each method retained 10,000 posterior draws; each MCMC fit used 20,000
iterations, with the first 10,000 discarded.

For dataset $g$, let $\bs{\theta}_g$ denote the generating parameter vector,
and let
\[
\bs a_g
=
\mathbb E_{q_{\widehat{\bs{\phi}}}(\bs{\theta}\mid\bs z_g)}
[\bs{\theta}],
\qquad
\bs m_g
=
\mathbb E_{p(\bs{\theta}\mid\mathcal D_g)}
[\bs{\theta}],
\]
where
$\bs z_g=h_{\widehat{\bs{\psi}}}(\mathcal S(\mathcal D_g))$.
The posterior-mean recovery error satisfies
\[
\|\bs a_g-\bs{\theta}_g\|_2^2
=
\|\bs m_g-\bs{\theta}_g\|_2^2
+
\|\bs a_g-\bs m_g\|_2^2
+
2(\bs m_g-\bs{\theta}_g)^\top(\bs a_g-\bs m_g).
\]
For parameter $j$, averaging the corresponding scalar identity over the
$G=100$ datasets gives
\begin{align*}
T_j&=G^{-1}\sum_g(a_{gj}-\theta_{gj})^2, &
R_j&=G^{-1}\sum_g(m_{gj}-\theta_{gj})^2,\\
D_j&=G^{-1}\sum_g(a_{gj}-m_{gj})^2, &
C_j&=2G^{-1}\sum_g(m_{gj}-\theta_{gj})(a_{gj}-m_{gj}).
\end{align*}
so that $T_j=R_j+D_j+C_j$.
Here $T_j$ and $R_j$ are the ABI and reference posterior mean squared error (MSE), 
$D_j$ is the ABI--MCMC posterior-mean discrepancy,
defined as the across-dataset mean squared difference between the two
posterior means, and $C_j$ is the interaction term in the recovery-error
identity, which can be negative.

The quantity $D_j$ is a moment-level diagnostic of posterior approximation,
not a divergence between complete posterior distributions. In particular, it
does not estimate either KL term in
Proposition~\ref{prop:abi_decomposition}. Rather, because matched MCMC
approximates $p(\bs{\theta}\mid\mathcal D)$, $D_j$ measures one
observable consequence of the combined representation and conditional-flow
errors characterized by that proposition; the present experiment cannot
separate those two mechanisms.

Finite posterior samples introduce simulation variability into all four
components. Let $\widehat T_j$, $\widehat R_j$, $\widehat D_j$, and
$\widehat C_j$ denote their corrected estimators. For ABI, posterior-mean
variance is estimated as the posterior variance divided by 10,000; for MCMC,
it is estimated using the spectral-variance Monte Carlo standard error
(MCSE). These corrections preserve
$\widehat T_j=\widehat R_j+\widehat D_j+\widehat C_j$; derivations and MCMC
diagnostics appear in Supplementary Section~S3.3.

\begin{table}[t]
\centering
\small
\caption{Finite-draw-corrected decomposition of posterior-mean recovery error
over 100 prior-matched held-out datasets. The interaction
term can be negative.}
\label{tab:matched_mcmc_decomposition}
\begin{tabular}{lrrrr}
\hline
Parameter
& $10^{3}\widehat T_j$
& $10^{3}\widehat R_j$
& $10^{3}\widehat D_j$
& $10^{3}\widehat C_j$ \\
\hline
$\beta_0$    &  0.652 &  0.322 &  0.278 &   0.052 \\
$\sigma_w^2$ & 28.719 & 29.349 & 10.089 & -10.719 \\
$\eta$       & 21.452 & 13.684 &  8.681 &  -0.913 \\
$\rho$       & 17.392 & 10.907 &  9.926 &  -3.441 \\
\hline
\end{tabular}
\end{table}

Table~\ref{tab:matched_mcmc_decomposition} shows that
$\widehat D_j<\widehat R_j$ for all four parameters, although the two are of
similar magnitude for $\beta_0$ and $\rho$. Negative values of
$\widehat C_j$ for the spatial-structure parameters indicate that the
reference-posterior error and the ABI--MCMC posterior-mean difference tend to
have opposite signs. For example, the smaller $\widehat T_j$ for
$\sigma_w^2$ does not imply a negligible approximation discrepancy; rather,
the negative interaction partially offsets
$\widehat R_j+\widehat D_j$.

Across datasets, paired ABI and MCMC posterior means had correlations of
$1.000$, $0.959$, $0.835$, and $0.933$ for $\beta_0$, $\sigma_w^2$, $\eta$,
and $\rho$, respectively. For the primary edge-level target, within-dataset
correlations between ABI and MCMC boundary probabilities had mean $0.908$ and
median $0.947$, with mean edgewise absolute difference $0.071$. Both methods
achieved perfect within-dataset AUROC and average precision under the shared
scalar threshold mechanism. Matched MCMC had a lower mean Brier score
($0.040$ versus $0.062$) and moderately higher sensitivity and specificity
under the median-probability rule, whereas boundary-count interval coverage
was similar for the two methods. Complete truth-based boundary-recovery
results are reported in Supplementary Section~S3.3. Thus, the matched
benchmark identifies nonzero differences in posterior means and probability
accuracy while showing substantial agreement in the principal edge-level
boundary evidence.

\subsection{Supporting validation and computation}
\label{sec:additional_validation}

Parameter-posterior replication showed appropriate count coverage and no systematic discrepancy in residual spatial dependence or local edge contrasts. Representation ablations showed that restricting the network to core observations substantially degraded inference for $\eta$, $\rho$, and boundary probabilities, whereas removing individual non-core blocks had smaller effects. Training required 5 hours and 45 minutes; posterior generation required
$0.706$ s per validation dataset. In the matched experiment, ABI required
$0.772$ s per dataset compared with $2.707$ s for MCMC. Complete
replicated-data, ablation, MCMC, and computational results are in
Supplementary Sections~S3.3--S3.6.

\section{Spatial health disparity analyses}
\label{sec:real_data}

We now return to the three spatial health applications introduced in
Section~\ref{sec:applications}. The same trained amortized posterior
approximator was applied to all three datasets
without dataset-specific retraining, and posterior summaries were based on
100,000 draws from
$q_{\widehat{\bs{\phi}}}
(\bs{\theta}\mid\bs{z}_{\mathrm{obs}})$.
Our primary inferential targets are the boundary parameter $\eta$, residual
spatial dependence $\rho$, and edge-level posterior boundary probabilities.
We use the median-probability rule for descriptive boundary maps, classifying
an observed geographic edge as a boundary when its posterior boundary
probability exceeds $0.5$. Additional details on data provenance are provided in
Supplementary Section~S4.1, while fitted-risk and DAGAR-MCMC results are
reported in Supplementary Sections~S4.4 and~S4.6, respectively.

Fig.~\ref{fig:real_boundary_probabilities} presents the primary spatial
results from ABI-DAGAR. It displays the continuous posterior evidence that
borrowing is interrupted across each geographic border, before imposing a
decision threshold. The maps reveal
different spatial regimes: boundary evidence is widespread in Greater
Glasgow, concentrated on fewer county borders in California, and organized
into localized municipal clusters in South Korea.
Geographic reference maps and local atlases identifying selected
high-probability boundaries are provided in Supplementary
Section~S4.3, while the posterior boundary probability
and covariate dissimilarity relationship is examined in Supplementary Section~S4.5.

\begin{figure}[t]
\centering
\includegraphics[height=0.26\textwidth,keepaspectratio]{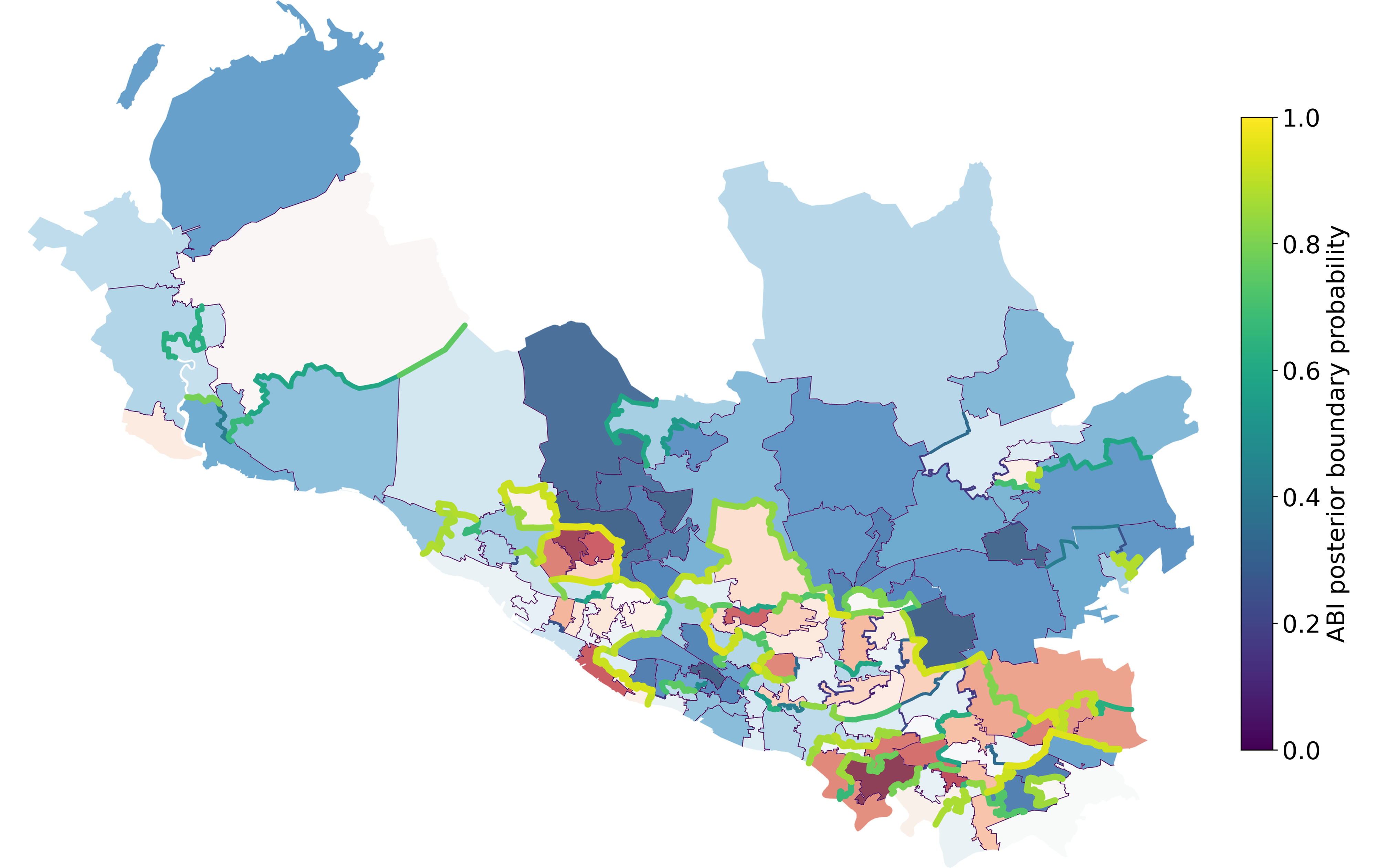}\hfill
\includegraphics[height=0.26\textwidth,keepaspectratio]{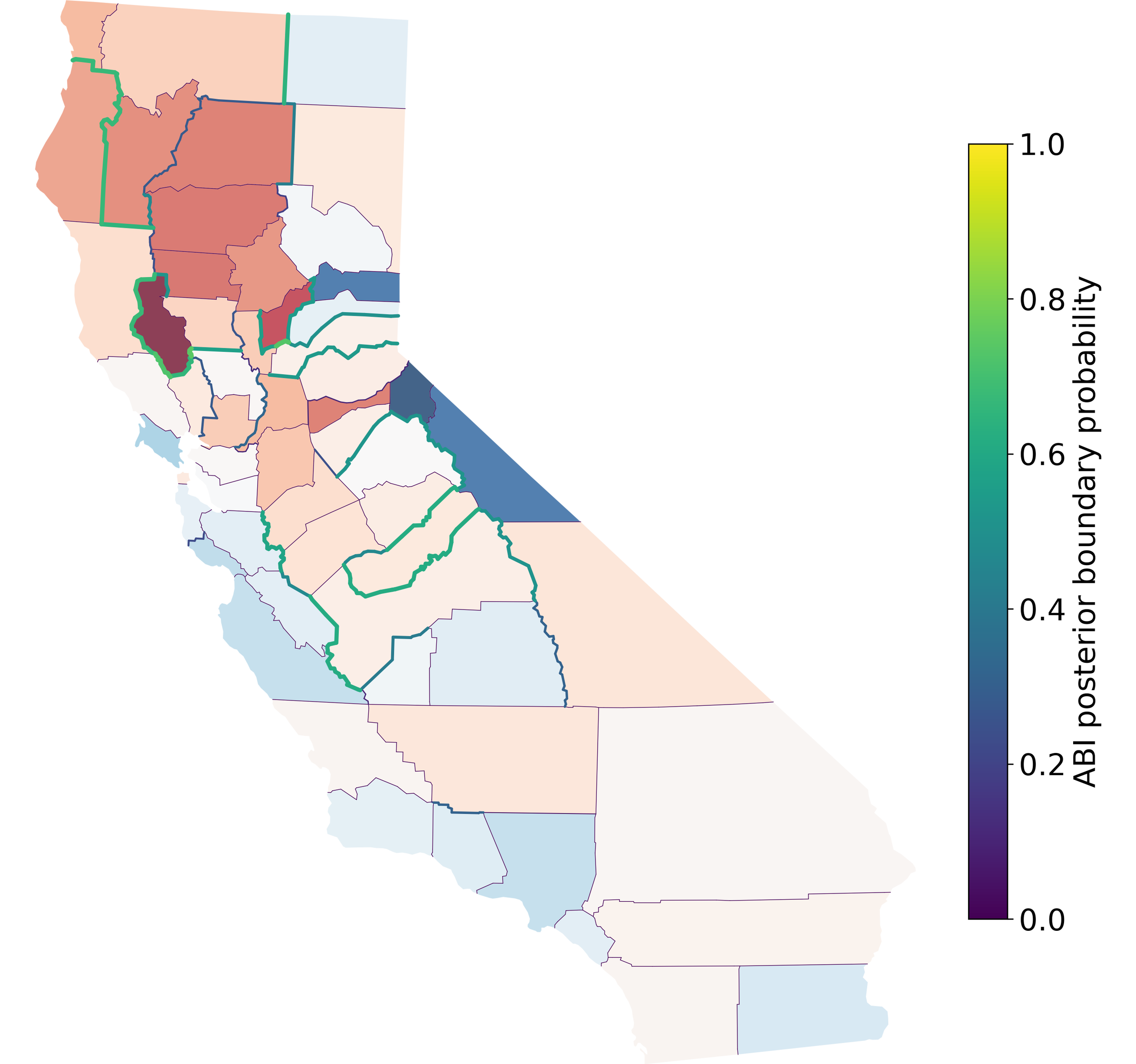}\hfill
\includegraphics[height=0.26\textwidth,keepaspectratio]{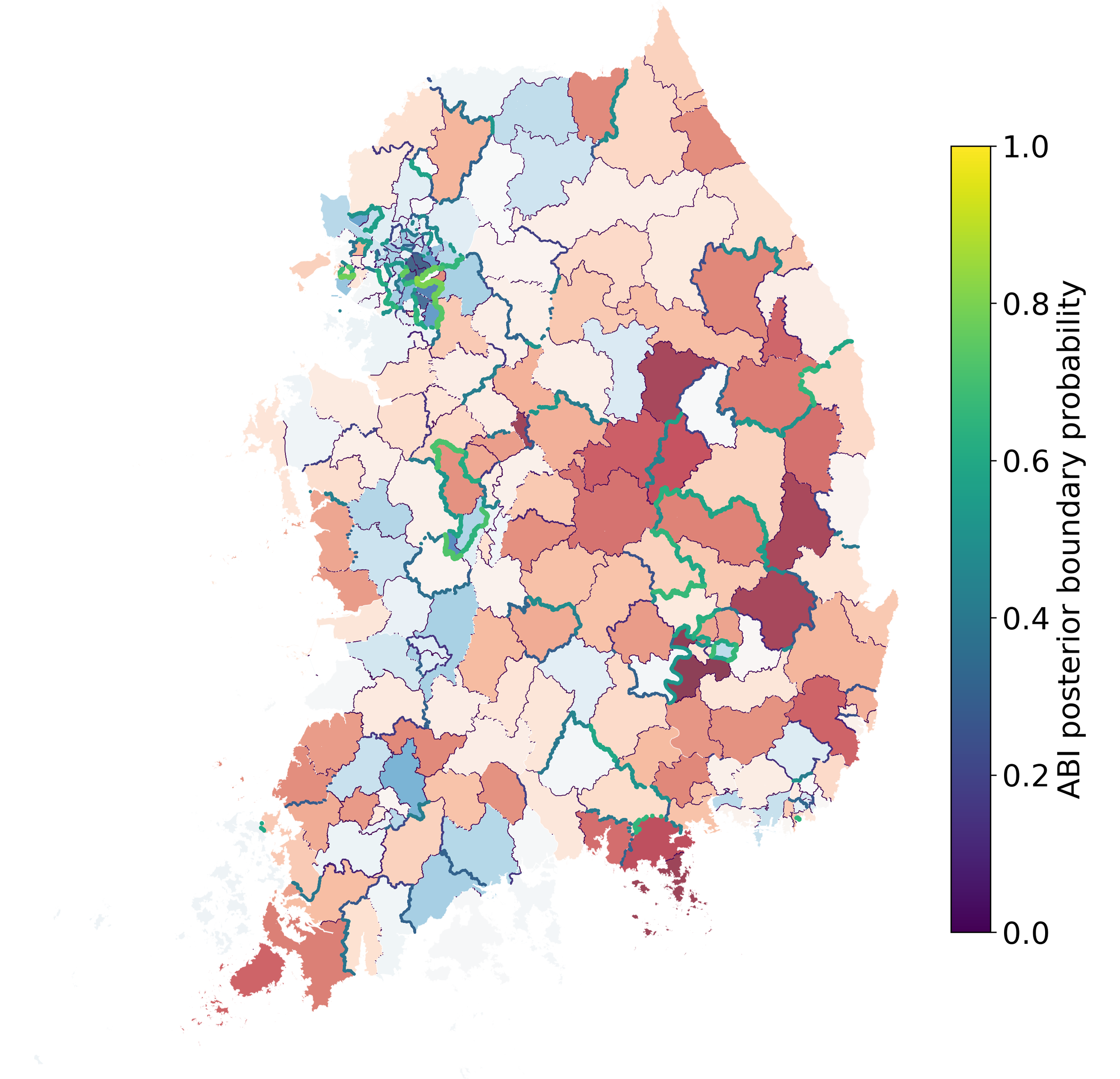}
\caption{ABI-DAGAR posterior boundary probabilities. Area shading gives
observed standardized disease risk, and edges encode boundary probability.
Left to right: Greater Glasgow, California, and South Korea.}
\label{fig:real_boundary_probabilities}
\end{figure}

\subsection{Respiratory disease in Greater Glasgow}
\label{sec:glasgow_results}

The posterior summaries in Table~\ref{tab:applications} indicate substantial
residual spatial dependence together with the most extensive boundary
configuration among the three applications. The left panel of 
Fig.~\ref{fig:real_boundary_probabilities} shows that this
evidence is distributed across Glasgow rather than confined to one cluster,
indicating broad regional persistence with substantial local interruptions in
borrowing.

Two high-probability boundaries provide concrete examples of this behavior.
The South Castlehill and Thorn--Drumchapel South border combines a
44.00 percentage-point difference in income deprivation with standardized
hospitalization ratios of 0.331 and 1.178, respectively. Similarly, the
Garrowhill East and Swinton--North Barlanark and Easterhouse South border has
a 43.50 percentage-point deprivation difference and standardized
hospitalization ratios of 0.554 and 1.387. Thus, both borders separate areas
that differ appreciably in the boundary-driving covariate and in observed
respiratory-disease risk.

\subsection{Lung cancer incidence in California}
\label{sec:california_results}

In California, appreciable residual spatial dependence coexists with a
comparatively sparse boundary pattern. The center panel of
Fig.~\ref{fig:real_boundary_probabilities} shows that the boundary evidence is
concentrated along a relatively small number of county borders, placing
California between a globally smooth spatial surface and the denser
localized-boundary regime observed in Greater Glasgow.

For example, the Lake--Yolo border has a 14.00 percentage-point difference in
smoking prevalence and standardized incidence ratios of 1.719 and 1.014,
respectively. The Placer--Yuba border has a 13.30 percentage-point
smoking-prevalence difference and standardized incidence ratios of 1.057 and
1.566. These two examples also show that the disease contrast need not follow
the smoking-prevalence contrast in a common direction.

\subsection{Tracheal, bronchial, and lung cancer mortality in South Korea}
\label{sec:korea_results}

The South Korean application combines a high posterior median for residual
spatial persistence with the lowest boundary density among the three
applications. Fig.~\ref{fig:real_boundary_probabilities} right panel 
shows that the boundary evidence is
organized in localized clusters rather than distributed uniformly
across the country, yielding a persistent national spatial structure with
selected local interruptions in borrowing.

The Sujeong-gu, Seongnam-si--Gwacheon-si boundary provides one such example.
The two administrative units differ in smoking prevalence by 11.16 percentage
points and have standardized mortality ratios of 0.991 and 0.688,
respectively. The Michuhol-gu--Yeonsu-gu boundary has an 8.66 percentage-point
smoking-prevalence difference and standardized mortality ratios of 1.071 and
0.842. These borders combine marked smoking-prevalence contrasts with visible
differences in observed mortality.

Across the three applications, residual dependence and boundary formation vary
independently: Glasgow combines strong persistence with widespread
interruptions, California exhibits a sparser and more uncertain pattern, and
South Korea combines strong persistence with localized boundaries on the
largest graph. Joint inference on $\eta$ and $\rho$ distinguishes these
contrasting spatial regimes and clarifies how broad spatial persistence can
coexist with markedly different degrees of local boundary formation.

\subsection{Boundary probabilities and local disease contrasts}
\label{sec:real_diagnostics}

Because the boundary mechanism is explicitly defined through covariate
dissimilarity, an increasing relationship between dissimilarity and posterior
boundary probability is expected under the fitted model. A more direct
empirical diagnostic asks whether edges receiving greater posterior boundary
probability also exhibit larger observed disease contrasts. We therefore 
compare, over all adjacent edges, $p_{ij}$ with
$|r_i-r_j|$, where $r_i$ is the crude log-risk of the $i$-th area.

Fig.~\ref{fig:real_residual_contrast} shows positive associations between posterior 
boundary probabilities and observed neighboring log-risk contrasts, with Spearman
correlations of $0.406$ in Greater Glasgow, $0.278$ in California, and $0.106$ in 
South Korea. The association is clearest in Glasgow and weak in South Korea. These
summaries assess correspondence with observed disease contrasts but do not
independently validate individual boundaries: high posterior boundary
probability need not imply a large observed contrast at a given border.

\begin{figure}[t]
\centering
\includegraphics[width=\textwidth]
{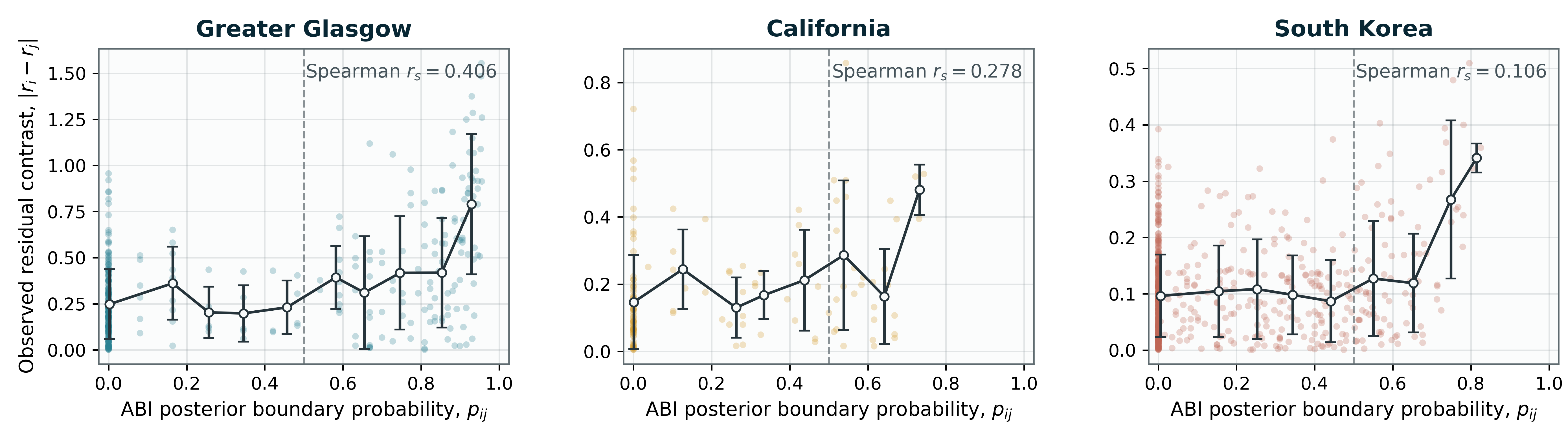}
\caption{Posterior boundary probabilities and neighboring crude log-risk
contrasts. Colored points represent individual adjacent edges. Open circles and
connecting lines show mean contrasts within posterior-probability bins, and
vertical capped bars show one standard deviation above and below each bin mean.
Dashed vertical lines mark the median-probability threshold $p_{ij}=0.5$. Left
to right: Greater Glasgow, California, and South Korea.}
\label{fig:real_residual_contrast}
\end{figure}

\subsection{Comparison with dataset-specific Bayesian inference}
\label{sec:bayesian_benchmarks}

We used two complementary dataset-specific Bayesian benchmarks to assess the
empirical boundary conclusions. The first is the localized CAR model
implemented by \texttt{S.CARdissimilarity} in the \texttt{CARBayes} package
\citep{CARBayes2013}, following the framework of
\citet{lee2012boundary}. This comparison evaluates whether ABI-DAGAR reaches
similar substantive conclusions to an established localized-smoothing model
with a different latent spatial prior. The second benchmark uses
dataset-specific MCMC for a closely aligned thresholded Poisson--DAGAR model.
This second comparison is closer to the ABI generative model than the CAR
benchmark, but it does not hold every modeling choice fixed because its scalar priors
differ from the simulator.

Both ABI-DAGAR and \texttt{CARBayes} use the observed geographic graph and
the same raw boundary-driving covariate, standardized separately in the Python
and R workflows, but their latent spatial priors and dependence
parameterizations differ. Exact agreement in scalar posterior parameters is
therefore neither expected nor required. The most direct cross-model comparison
instead concerns posterior boundary probabilities and selected-boundary
configurations, which capture their shared conclusions about local
discontinuities.

Table~\ref{tab:applications} reports the fitted spatial-structure parameters
and boundary densities and compares ABI-DAGAR with \texttt{CARBayes} through
posterior boundary probabilities and selected boundary sets. Let
$\mathcal B_A$ and $\mathcal B_C$ denote the respective selected sets.
Shared denotes $|\mathcal B_A\cap\mathcal B_C|$, and their Jaccard index is
$|\mathcal B_A\cap\mathcal B_C|/|\mathcal B_A\cup\mathcal B_C|$. Agreement
is substantial in all three applications, and both methods rank them
identically by boundary density. Although posterior-probability agreement is
weaker in South Korea, selected-set overlap remains comparable.

\begin{table}[t]
\centering
\caption{ABI-DAGAR spatial summaries and boundary agreement with
\texttt{CARBayes}. Parameters are posterior medians (95\% credible intervals);
boundary entries are counts (percentages of geographic adjacencies).
Prob. corr. is the Pearson correlation of edge-level posterior probabilities;
Shared and Jaccard summarize selected-set overlap.}
\label{tab:applications}
\resizebox{\textwidth}{!}{%
\begin{tabular}{lccccccc}
\hline
\textbf{Dataset}
& $\bs{\eta}$
& $\bs{\rho}$
& \textbf{ABI, $\#$ (\%)}
& \textbf{\texttt{CARBayes}, $\#$ (\%)}
& \textbf{Prob. corr.}
& \textbf{Shared}
& \textbf{Jaccard} \\
\hline
Glasgow
& $0.831\;(0.116,1.137)$
& $0.878\;(0.449,0.975)$
& $130\;(36.11\%)$
& $99\;(27.50\%)$
& 0.877
& 99
& 0.762 \\
California
& $0.469\;(0.017,0.964)$
& $0.815\;(0.051,0.990)$
& $31\;(22.30\%)$
& $24\;(17.27\%)$
& 0.866
& 24
& 0.774 \\
South Korea
& $0.433\;(0.017,0.961)$
& $0.893\;(0.143,0.994)$
& $89\;(14.22\%)$
& $69\;(11.02\%)$
& 0.721
& 66
& 0.717 \\
\hline
\end{tabular}%
}
\end{table}

Fig.~\ref{fig:boundary_agreement} addresses a different question from
Fig.~\ref{fig:real_boundary_probabilities}: whether the boundary decisions
obtained after applying the median-probability rule agree with those from
\texttt{CARBayes}. The additional ABI-DAGAR selections generally extend the
same local discontinuity structures identified by the benchmark rather than
forming geographically unrelated boundary systems.

\begin{figure}[t]
\centering
\includegraphics[height=0.26\textwidth,keepaspectratio]
{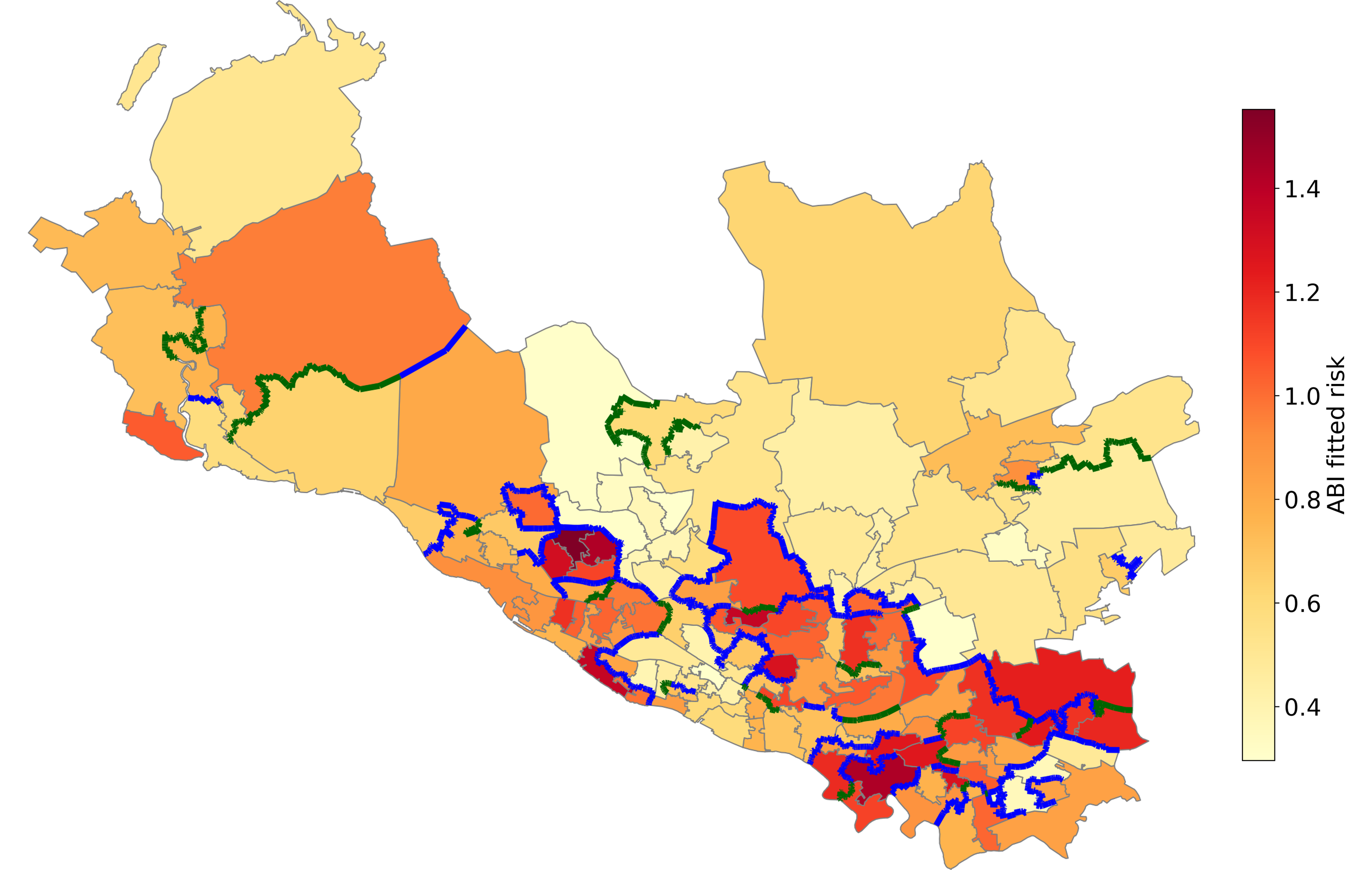}
\hfill
\includegraphics[height=0.26\textwidth,keepaspectratio]
{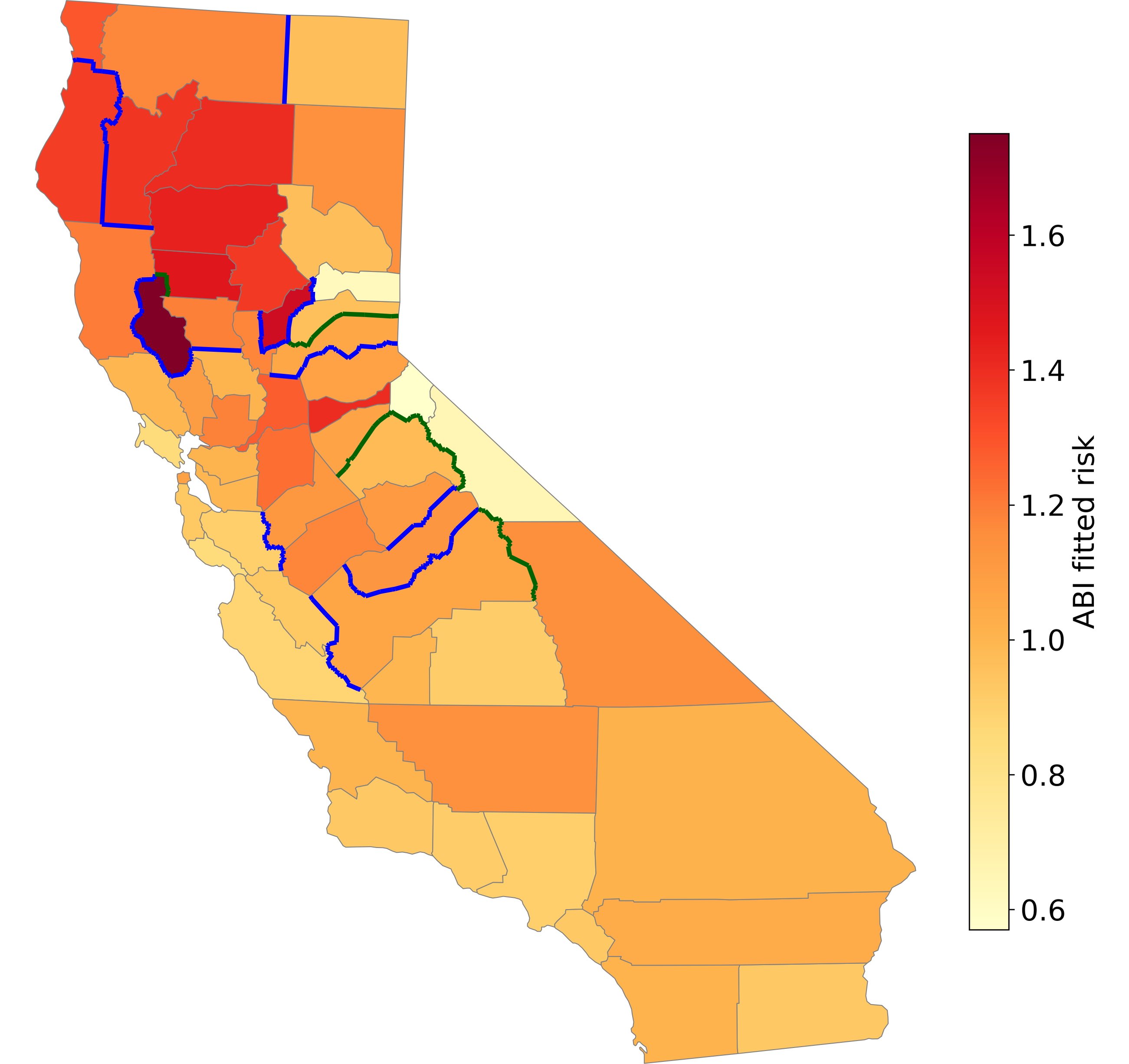}
\hfill
\includegraphics[height=0.26\textwidth,keepaspectratio]
{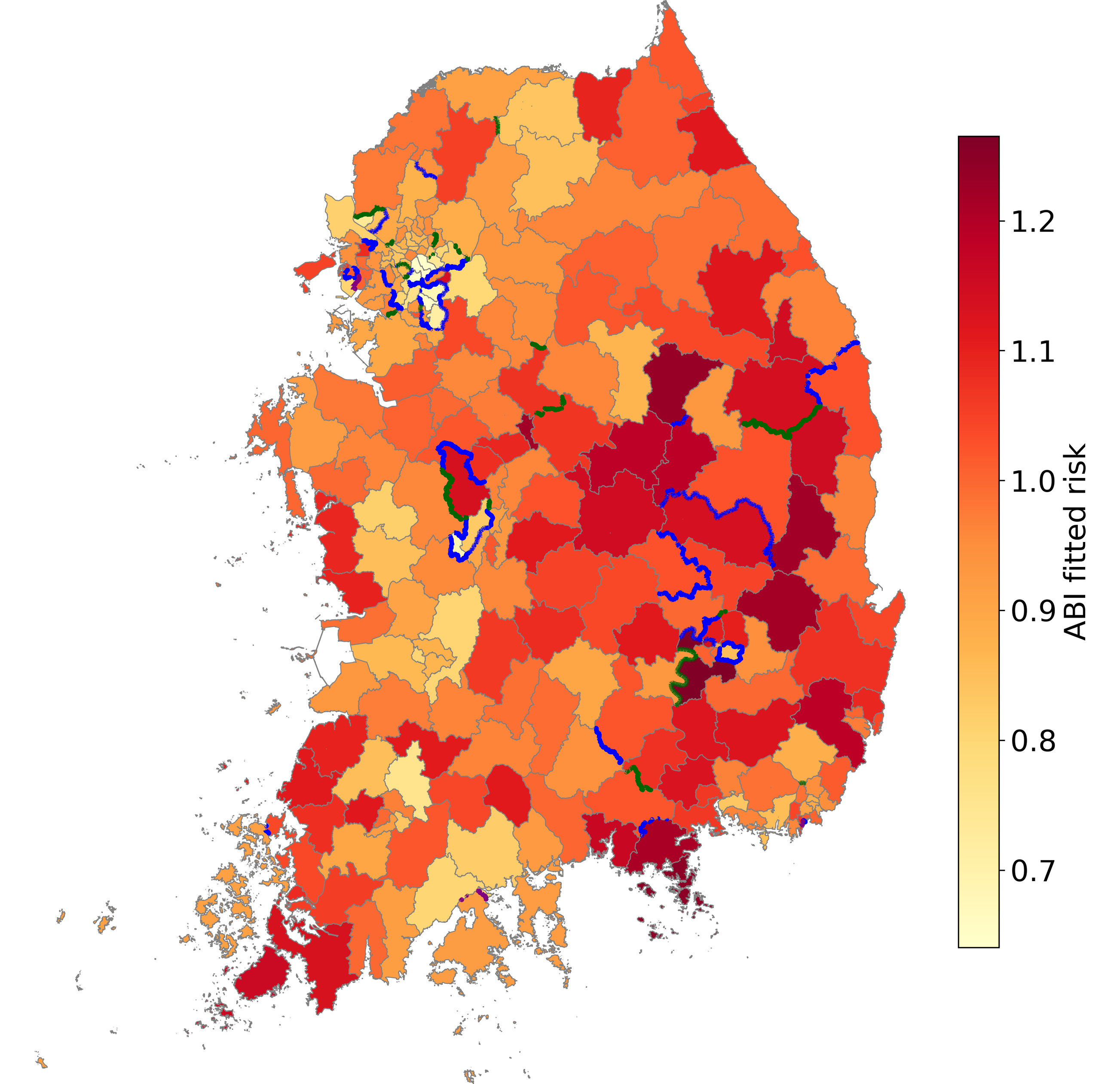}
\caption{Median-probability boundary agreement with \texttt{CARBayes}.
Blue denotes selection by both methods, green selection by ABI-DAGAR only 
and purple selection from \texttt{CARBayes} only.
Area shading gives the reconstructed ABI fitted relative risk.
Left to right: Greater Glasgow, California, and South Korea.}
\label{fig:boundary_agreement}
\end{figure}

The comparison with \texttt{CARBayes} necessarily combines two sources of
difference: the use of distinct latent spatial priors and the replacement of
dataset-specific MCMC by amortized inference. To examine these effects, we
also fitted a closely aligned thresholded Poisson--DAGAR model using
dataset-specific MCMC. Table~\ref{tab:real_mcmc} reports posterior summaries for the two
parameters governing spatial structure, $\eta$ and $\rho$.

\begin{table}[t]
\centering
\caption{ABI-DAGAR and DAGAR-MCMC posterior summaries for the
spatial-structure parameters.}
\label{tab:real_mcmc}
\begin{tabular}{llcc}
\hline
\textbf{Dataset}
& \textbf{Parameter}
& \textbf{ABI-DAGAR}
& \textbf{MCMC-DAGAR} \\
\hline
Glasgow
& $\eta$
& $0.831\;(0.116,1.137)$
& $0.871\;(0.702,1.083)$ \\
&
$\rho$
& $0.878\;(0.449,0.975)$
& $0.857\;(0.643,0.976)$ \\
\hline
California
& $\eta$
& $0.469\;(0.017,0.964)$
& $0.440\;(0.275,0.936)$ \\
&
$\rho$
& $0.815\;(0.051,0.990)$
& $0.578\;(0.264,0.898)$ \\
\hline
South Korea
& $\eta$
& $0.433\;(0.017,0.961)$
& $0.350\;(0.214,0.511)$ \\
&
$\rho$
& $0.893\;(0.143,0.994)$
& $0.634\;(0.438,0.861)$ \\
\hline
\end{tabular}
\end{table}

Posterior locations for $\eta$ are
comparatively close across methods, with especially strong agreement for both
spatial-structure parameters in Glasgow. The ABI-DAGAR posterior favors
stronger residual dependence in California and South Korea and provides wider
intervals for $\eta$ and $\rho$. Edge-level agreement is substantial, with boundary-probability Pearson
correlations of $0.959$, $0.922$, and $0.777$ in Glasgow,
California, and South Korea, respectively. The selected sets are nearly
identical in Glasgow; ABI-DAGAR identifies additional boundaries in California
and South Korea. Full posterior and boundary comparisons, together
with single-chain effective sample sizes and Monte Carlo standard errors for
both MCMC benchmarks, are reported in Supplementary Section~S4.6.

Taken together, the benchmark comparisons support the stability of the
principal edge-level conclusions across the three applications. Posterior
boundary probabilities, the primary inferential target, show substantial
cross-method agreement, while differences in scalar posterior spread are
fully documented. Boundary density does
not simply follow graph size or residual dependence: the largest graph has
the strongest estimated persistence but the most localized boundary system.
The same trained approximation therefore identifies distinct spatial regimes
while remaining substantially aligned with both dataset-specific benchmarks.
Finally, real-data runtime comparisons are reported in Supplementary Section~S4.7.

\section{Discussion}
\label{sec:discussion}

This study addresses a recurring problem in spatial disease mapping: how to
identify localized interruptions in spatial smoothing while retaining a model
for broader residual dependence, and how to carry out that inference across
areal graphs with different sizes and adjacency structures. The three health
applications illustrate why these components need to be distinguished. In
Greater Glasgow, the posterior supports extensive local boundary formation
together with high posterior median residual dependence, whereas the California
lung-cancer analysis exhibits a substantially sparser boundary pattern and
greater uncertainty in the spatial-structure parameters. South Korea adds a
larger municipal graph on which strong residual persistence coexists with
localized smoking-associated discontinuities. Thus, the same
statistical model accommodates qualitatively different spatial regimes rather
than imposing a common degree of smoothing across applications.

The interpretation rests on separating local boundary formation from residual
spatial dependence: $\eta$ controls covariate-guided interruptions in
borrowing, whereas $\rho$ controls persistence over the retained graph.
Because the boundary-driving covariate enters the adjacency mechanism rather
than the mean, the resulting probabilities describe where smoothing is
weakened rather than causal effects of deprivation or smoking. Empirically,
boundary density is greatest in Glasgow and lowest in South Korea despite
strong residual dependence in both, illustrating that local interruption and
global persistence need not vary together. Extensions allowing covariates in 
both components would address a broader target.

A second contribution is reusable posterior inference across variable-size graphs. 
Truth-based validation shows small bias and close-to-nominal interval coverage, 
while the prior-matched experiment separates reference-posterior 
recovery from the additional ABI posterior-mean discrepancy. Finite posterior-simulation
variability is accounted for in the corrected
decomposition, and its qualitative conclusions are overall stable. 
The observed ABI--MCMC posterior-mean discrepancy can
reflect both representation loss and conditional-density approximation,
which Proposition~\ref{prop:abi_decomposition} distinguishes theoretically
but the matched experiment does not separately identify. 
The value of amortization therefore rests on 
posterior quality, not sampling speed alone.

Related spatial work demonstrates the broader applicability of neural
simulation-based posterior inference. \citet{wang2026inference} use BayesFlow for
likelihood-free inference in stationary log-Gaussian Cox process models,
obtaining rapid repeated posterior analyses after training while emphasizing
comparison with conventional Bayesian computation and application-specific
validation; see also \citet{wikle2023statistical} for a broader account of
statistical deep learning for spatial and spatiotemporal data.

The empirical applications remain within the broad training class, i.e., Poisson
areal counts with offsets, sparse connected graphs, one boundary-driving
covariate, and graph sizes within the training range, but differ from typical
simulated configurations in topology and covariate smoothness. Their agreement
with dataset-specific Bayesian benchmarks provides evidence that the principal
boundary conclusions remain stable under these departures. Applications with 
different outcome models, graph structures, overdispersion,
or spatial dependence would require renewed validation and potentially
retraining.

Predictive stacking provides a complementary, non-neural direction for
scalable Bayesian inference. Stacking combines candidate predictive
distributions according to out-of-sample predictive performance
\citep{yao2017using}, with recent extensions to Gaussian geostatistics,
non-Gaussian spatiotemporal models, and transfer learning across large spatial
datasets
\citep{zhang2025jasa,panEtAl2025ba,presicce_bayesian_2024}.
For the present Poisson--DAGAR boundary model, future work could develop a
computationally tractable collection of conditional posterior and
posterior-predictive candidates indexed by spatial-dependence and boundary
parameters, and estimate stacking weights using held-out predictive scores.
More ambitiously, learning or transferring the candidate construction and
stacking rule across simulated datasets could provide a supervised alternative
to the BayesFlow posterior operator for reusable inference, as well as a useful
benchmark for neural posterior estimation. Such an extension is not immediate,
however: because edge-level boundary probabilities are a primary inferential
target, the candidate family and scoring rule would need to preserve posterior
uncertainty about the boundary and spatial-dependence parameters rather than
optimize outcome prediction alone.

Several extensions are natural. Multiple candidate drivers of local
discontinuity could be incorporated by allowing the boundary mechanism to
depend on several covariates rather than a single dissimilarity measure.
Richer graph representations, including graph-neural or hybrid encoders, could
be investigated when the current model-guided summaries are insufficient.
Extending the framework to multivariate disease mapping would also be useful
when the scientific objective is to distinguish shared from outcome-specific
spatial disparities.

A complementary direction concerns the component of the analysis that is
amortized. Our procedure learns a reusable mapping from observed areal data to
posterior distributions for the target parameters. Neural-prior methods such
as $\pi$VAE, PriorVAE, PriorCVAE, and DeepRV instead pretrain neural generators
to emulate draws from latent stochastic-process priors while retaining
dataset-specific Bayesian inference for the surrounding model
\citep{mishra2022pi,semenova2022priorvae,semenova2023priorcvae,navott2026deeprv}.
An analogous strategy could target high-dimensional latent spatial
distributions that contribute substantially to computation in more elaborate
Bayesian models. Examples include Gaussian-process random surfaces in spatial
Dirichlet process models
\citep{gelfand2005spatialdp,duan2007generalized}; CAR, DAGAR, and related
multivariate spatial constructions used in the continuing development of
Bayesian boundary-detection methods
\citep{li2015bayesian,gao2023spatial,aiello2023detecting,gianella2026bayesian};
and latent spatial structures coupled with spatially constrained random
partitions \citep{pavani2025bayesian,pavani2026modeling}. Pretraining validated
generators for these continuous latent components could reduce repeated latent-field computation, particularly in multivariate settings, while preserving the model-specific likelihood, boundary definition, and higher-level Bayesian hierarchy. Direct emulation of discrete partitions and adjacency structures, however, would require structured neural approximators beyond Gaussian-process surrogates. Future work should assess when combining posterior and prior amortization 
improves computation while preserving posterior calibration.

Overall, the results show that local health-disparity boundaries and residual
spatial dependence can be estimated jointly while posterior computation is
reused across heterogeneous disease maps. The main contribution of the
amortized component is therefore not simply faster sampling, but the ability to
carry an empirically evaluated Bayesian boundary-analysis workflow across maps with
different dimensions and adjacency structures while providing explicit posterior uncertainty quantification
for the quantities of primary scientific interest.

\section*{Acknowledgments}

The authors acknowledge the University of California, Los Angeles (UCLA),
Institute for Digital Research and Education for their infrastructure and
support in the computational workflow, experiments, and data analysis
presented in this manuscript.

\section*{Funding}

This work was supported by federal grants NSF/DMS 2113778,
NIH/NIEHS 1R01ES027027, and NIH/NIGMS R01GM148761.

\section*{Data and codes}

Data, Python and R code, the trained network, reproducibility scripts, and
reference outputs are available at
\url{https://github.com/lucaaiello/ABI_boundary_detection}.

\clearpage

\section*{\Large Supplementary Material}

\vspace{1em}

\setcounter{section}{0}
\setcounter{subsection}{0}
\setcounter{subsubsection}{0}
\setcounter{equation}{0}
\setcounter{figure}{0}
\setcounter{table}{0}

\renewcommand{\thesection}{S\arabic{section}}
\renewcommand{\thesubsection}{\thesection.\arabic{subsection}}
\renewcommand{\thesubsubsection}{\thesubsection.\arabic{subsubsection}}
\renewcommand{\theequation}{S\arabic{equation}}
\renewcommand{\thefigure}{S\arabic{figure}}
\renewcommand{\thetable}{S\arabic{table}}

Throughout this supplement, ABI denotes amortized Bayesian inference, DAGAR
denotes directed acyclic graph autoregressive, and MCMC denotes Markov chain
Monte Carlo. We use ABI-DAGAR for the resulting amortized DAGAR procedure and
MCMC-DAGAR for its dataset-specific MCMC counterpart.

\section{Additional details on the spatial boundary model}
\label{supp:model}

This section provides additional details for the spatial boundary model
introduced in Section~3 of the main manuscript. We first give the
complete DAGAR construction and then provide further details on the prior
specification and graph-specific scaling of the boundary parameter.

\subsection{DAGAR construction}
\label{supp:dagar}

Let $\bs{A}^{\ast}=(a_{ij}^{\ast})$ denote the modified adjacency matrix
induced by the covariate-driven boundary mechanism. Let
$\pi=(\pi_1,\ldots,\pi_N)$ denote a fixed ordering of the nodes and define
\begin{equation*}
\mathcal{N}_{\pi}(\pi_k)
=
\left\{
\pi_\ell:
\ell<k,\;
a^{\ast}_{\pi_k,\pi_\ell}=1
\right\}
\end{equation*}
to be the set of predecessors of node $\pi_k$ under this ordering. Let
$n_i=|\mathcal{N}_{\pi}(\pi_i)|$.

Conditional on $\bs{A}^{\ast}$, the DAGAR prior
\citep{datta2019spatial} is
\begin{equation}
\bs{w}
\mid
\sigma_w^2,\rho,\bs{A}^{\ast}
\sim
\mathcal{N}
\left(
\bs{0},
\sigma_w^2\bs{Q}(\rho;\bs{A}^{\ast})^{-1}
\right),
\label{supp:eq:dagar}
\end{equation}
where
\begin{equation}
\bs{Q}(\rho;\bs{A}^{\ast})
=
(\bs{I}-\bs{B})^\top
\bs{\Lambda}
(\bs{I}-\bs{B}).
\label{supp:eq:dagar_precision}
\end{equation}
Here $\bs{B}=(b_{ij})$ is strictly lower triangular and
$\bs{\Lambda}=\operatorname{diag}(\lambda_1,\ldots,\lambda_N)$, with
\begin{equation}
b_{ij}
=
\begin{cases}
\dfrac{\rho}
{1+(n_i-1)\rho^2},
&
\text{if } \pi_j\in\mathcal{N}_{\pi}(\pi_i),\\[8pt]
0,
&
\text{otherwise},
\end{cases}
\label{supp:eq:B}
\end{equation}
and
\begin{equation}
\lambda_i
=
\frac{1+(n_i-1)\rho^2}
{1-\rho^2}.
\label{supp:eq:Lambda}
\end{equation}
Equations~\eqref{supp:eq:dagar}--\eqref{supp:eq:Lambda} give the complete
DAGAR precision used in the analyses. Further discussion of the construction
is given by \citet{datta2019spatial} and \citet{aiello2023detecting}.

\subsection{Prior specification and graph-specific scaling}
\label{supp:priors}

The parameter vector is
\begin{equation*}
\bs{\theta}
=
(\beta_0,\sigma_w^2,\eta,\rho),
\end{equation*}
with priors
\begin{equation*}
\beta_0\sim\mathcal{N}(0,\sigma_\beta^2),
\qquad
\sigma_w^2\sim\left|\mathcal{N}(0,0.5)\right|,
\qquad
\eta\sim\mathcal{U}(0,M),
\qquad
\rho\sim\mathcal{U}(0,1).
\label{supp:eq:priors}
\end{equation*}
For neighboring areas, let
\begin{equation*}
z_{ij}=|x_i-x_j|,
\qquad
Z_{0.5}
=
\operatorname{median}
\{z_{ij}:a_{ij}=1\}.
\end{equation*}
The upper bound for the boundary parameter is
\begin{equation*}
M
=
\frac{\log 2}{Z_{0.5}}.
\label{supp:eq:M}
\end{equation*}
These are the prior specifications used throughout training and validation.
The graph-specific bound $M$ rescales $\eta$ to the neighboring-dissimilarity
scale and restricts removable edges to those with above-median dissimilarity,
as discussed in Section~3.1 of the main manuscript.

\FloatBarrier

\section{Additional details on amortized Bayesian inference}
\label{supp:abi}

This section provides the detailed observed-data representation, conditional
normalizing-flow formulation, training and deployment algorithm, and
implementation details for the amortized posterior approximation described in
Section~3.2 of the main manuscript.

\subsection{Observed-data representation}
\label{supp:representation}

The observed-data representation is constructed from the observed counts
$y_i$, offsets $e_i$, covariate values $x_i$, and the observed adjacency
matrix $\bs{A}=(a_{ij})$. We now give the complete construction of the 
observable node- and graph-level features summarized in Section~3.2 of the main manuscript.

Let $x_i$ denote the covariate used to define edge dissimilarity. Before
constructing the summaries, we standardize the covariate as
\begin{equation*}
\widetilde{x}_i
=
\frac{x_i-\bar{x}}{s_x},
\end{equation*}
where $\bar{x}$ and $s_x$ denote its sample mean and standard deviation. For
neighboring areas, define
\begin{equation*}
z_{ij}
=
|\widetilde{x}_i-\widetilde{x}_j|,
\qquad
\widetilde{z}_{ij}
=
\frac{z_{ij}}{Z_{0.5}},
\end{equation*}
where $Z_{0.5}$ is the median of $z_{ij}$ over observed neighboring pairs.

We define the offset-adjusted residual proxy as
\begin{equation*}
r_i
=
\log(y_i+0.5)-\log e_i.
\label{supp:eq:residual_proxy}
\end{equation*}
The baseline node-level summaries are
\begin{equation*}
\widetilde{x}_i,\quad
y_i,\quad
e_i,\quad
\log(1+y_i),\quad
\log e_i,\quad
r_i.
\end{equation*}

Let
\begin{equation*}
\mathcal{N}(i)
=
\{j:a_{ij}=1\},
\qquad
d_i
=
\sum_{j=1}^N a_{ij}.
\end{equation*}
For $d_i>0$, define
\begin{equation}
\bar{r}_i
=
\frac{1}{d_i}
\sum_{j\in\mathcal{N}(i)}r_j,
\qquad
\delta_i
=
\frac{1}{d_i}
\sum_{j\in\mathcal{N}(i)}
|r_i-r_j|.
\label{supp:eq:local_residual}
\end{equation}
We also compute
\begin{equation}
\bar{z}_i
=
\frac{1}{d_i}
\sum_{j\in\mathcal{N}(i)}
\widetilde{z}_{ij},
\qquad
z_i^{\max}
=
\max_{j\in\mathcal{N}(i)}
\widetilde{z}_{ij}.
\label{supp:eq:local_dissimilarity}
\end{equation}
When $d_i=0$, the neighborhood summaries in
\eqref{supp:eq:local_residual} and \eqref{supp:eq:local_dissimilarity} are set
to zero.

To capture the relationship between covariate dissimilarity and local residual
contrasts, we partition each neighborhood into
\begin{align*}
\mathcal{N}_L(i)
&=
\{j\in\mathcal{N}(i):
\widetilde{z}_{ij}\leq0.75\},\\
\mathcal{N}_M(i)
&=
\{j\in\mathcal{N}(i):
0.75<\widetilde{z}_{ij}\leq1.25\},\\
\mathcal{N}_H(i)
&=
\{j\in\mathcal{N}(i):
\widetilde{z}_{ij}>1.25\}.
\end{align*}
Let $d_i^{(B)}=|\mathcal{N}_B(i)|$ for
$B\in\{L,M,H\}$. For each bin,
\begin{equation}
\bar{r}_i^{(B)}
=
\frac{1}{d_i^{(B)}}
\sum_{j\in\mathcal{N}_B(i)}r_j,
\qquad
\delta_i^{(B)}
=
\frac{1}{d_i^{(B)}}
\sum_{j\in\mathcal{N}_B(i)}
|r_i-r_j|,
\qquad
p_i^{(B)}
=
\frac{d_i^{(B)}}{d_i}.
\label{supp:eq:binned_local}
\end{equation}
The summaries in \eqref{supp:eq:binned_local} are set to zero whenever the
relevant denominator is zero.

Let
\begin{equation*}
\mathcal{E}
=
\{(i,j):i<j,\;a_{ij}=1\}
\end{equation*}
denote the observed edge set. For each edge, define
\begin{equation*}
\Delta_{ij}=|r_i-r_j|,
\qquad
C_{ij}
=
(r_i-\bar{r}_{\cdot})
(r_j-\bar{r}_{\cdot}),
\qquad
\bar{r}_{\cdot}
=
\frac{1}{N}\sum_{i=1}^N r_i.
\end{equation*}
Partition $\mathcal{E}$ into
$\mathcal{E}_L,\mathcal{E}_M,\mathcal{E}_H$ using the same thresholds on
$\widetilde{z}_{ij}$. Let $\Delta_B$ and $C_B$ denote the corresponding
within-bin means. We define
\begin{equation}
G_{\Delta}
=
\Delta_H-\Delta_L,
\qquad
G_C
=
C_L-C_H.
\label{supp:eq:contrast_gradients}
\end{equation}
The contrast gradients in \eqref{supp:eq:contrast_gradients} compare the high-
and low-dissimilarity edge bins.
We also compute
\begin{equation}
B_{\Delta}
=
\frac{
\sum_{(i,j)\in\mathcal{E}}
(\widetilde{z}_{ij}-\bar{z}_{\mathcal{E}})
(\Delta_{ij}-\bar{\Delta}_{\mathcal{E}})
}{
\sum_{(i,j)\in\mathcal{E}}
(\widetilde{z}_{ij}-\bar{z}_{\mathcal{E}})^2
},
\label{supp:eq:contrast_slope}
\end{equation}
where
\begin{equation*}
\bar{z}_{\mathcal{E}}
=
|\mathcal{E}|^{-1}
\sum_{(i,j)\in\mathcal{E}}\widetilde{z}_{ij},
\qquad
\bar{\Delta}_{\mathcal{E}}
=
|\mathcal{E}|^{-1}
\sum_{(i,j)\in\mathcal{E}}\Delta_{ij}.
\end{equation*}
If the denominator in \eqref{supp:eq:contrast_slope} is zero, we set
$B_{\Delta}=0$.

Let
\begin{equation}
s_r^2
=
\frac{1}{N}
\sum_{i=1}^N
(r_i-\bar{r}_{\cdot})^2.
\label{supp:eq:residual_variance}
\end{equation}
When $s_r^2=0$, as defined in \eqref{supp:eq:residual_variance}, the
standardized autocorrelation summaries below are set to zero. At the node level
we include
\begin{equation}
I_i
=
\frac{
(r_i-\bar{r}_{\cdot})\bar{r}_i
}{
s_r^2
},
\qquad
V_i
=
\frac{
(r_i-\bar{r}_i)^2
}{
s_r^2
}.
\label{supp:eq:local_spatial}
\end{equation}
In \eqref{supp:eq:local_spatial}, $I_i$ is a local Moran-type measure of
agreement between node $i$ and its neighborhood, whereas $V_i$ measures local
semivariance.

At the graph level, we compute
\begin{equation}
C_{\mathrm{edge}}
=
\frac{1}
{|\mathcal{E}|s_r^2}
\sum_{(i,j)\in\mathcal{E}}C_{ij},
\qquad
C_{\mathrm{lag}}
=
\operatorname{corr}
\{r_i,\bar{r}_i:i=1,\ldots,N\},
\label{supp:eq:graph_corr}
\end{equation}
and
\begin{equation}
B_{\mathrm{lag}}
=
\frac{
N^{-1}
\sum_{i=1}^N
(r_i-\bar{r}_{\cdot})
(\bar{r}_i-\bar{r}_N)
}{
s_r^2
},
\qquad
\bar{r}_N
=
\frac{1}{N}
\sum_{i=1}^N
\bar{r}_i.
\label{supp:eq:lag_slope}
\end{equation}
Finally,
\begin{equation}
\Gamma_{\mathrm{edge}}
=
\frac{1}{2|\mathcal{E}|}
\sum_{(i,j)\in\mathcal{E}}
(r_i-r_j)^2.
\label{supp:eq:edge_semivariance}
\end{equation}
The graph-level summaries in \eqref{supp:eq:graph_corr}--\eqref{supp:eq:edge_semivariance} quantify
residual association and variation over the observed graph.

The graph-level quantities are replicated across nodes and appended to each
node-specific feature vector. The resulting representation is
\begin{align*}
\bs{s}_i
=
\Big(
&\widetilde{x}_i,\,
y_i,\,
e_i,\,
\log(1+y_i),\,
\log e_i,\,
r_i,\,
d_i,\,
\bar{r}_i,\,
\delta_i,\,
\bar{r}_i^{(L)},\,
\bar{r}_i^{(M)},\,
\bar{r}_i^{(H)},\,
\delta_i^{(L)},\,
\delta_i^{(M)},\,
\delta_i^{(H)},\,
\nonumber\\
&p_i^{(L)},\,
p_i^{(M)},\,
p_i^{(H)},\,
\bar{z}_i,\,
z_i^{\max},\,
I_i,\,
V_i,\,
M,\,
B_{\Delta},\,
G_{\Delta},\,
G_C,\,
C_{\mathrm{edge}},\,
C_{\mathrm{lag}},\,
B_{\mathrm{lag}},\,
\Gamma_{\mathrm{edge}}
\Big).
\label{supp:eq:full_summary}
\end{align*}

\subsection{Conditional normalizing flow and training objective}
\label{supp:flow}

This subsection provides the formal arguments underlying two properties used
in Section~3.2 of the main manuscript. First, we verify that the conditional
normalizing flow defines a proper probability density for every fixed learned
representation. Second, we relate the training objective to the
full-data posterior discrepancy and prove
Proposition~3.1, which separates representation loss
from conditional-density approximation error. We then state the resulting
condition for exact posterior recovery. Throughout,
$\bs{z}=h_{\bs{\psi}}(\mathcal{S}(\mathcal{D}))$.
The marginal $p(\bs{z})$ and conditional distribution
$p(\bs{\theta}\mid\bs{z})$ are those induced by the joint prior-predictive
distribution through the representation
$\bs{z}=h_{\bs{\psi}}(\mathcal S(\mathcal D))$; their dependence on
$\bs{\psi}$ is suppressed for notational simplicity.

For fixed $\bs{z}$,
\begin{equation*}
q_{\bs{\phi}}(\bs{\theta}\mid\bs{z})
=
p_{\bs{u}}
(f_{\bs{\phi}}(\bs{\theta};\bs{z}))
\left|
\det
\frac{\partial f_{\bs{\phi}}(\bs{\theta};\bs{z})}
{\partial\bs{\theta}}
\right|,
\qquad
\bs{u}\sim\mathcal N(\bs 0,\bs I).
\end{equation*}

For any fixed $\bs{z}$, invertibility of
$f_{\bs{\phi}}(\cdot;\bs{z})$ implies
\begin{align*}
\int_{\Theta}
q_{\bs{\phi}}(\bs{\theta}\mid\bs{z})
\,d\bs{\theta}
=
\int_{\Theta}
p_{\bs{u}}
(f_{\bs{\phi}}(\bs{\theta};\bs{z}))
\left|
\det
\frac{\partial f_{\bs{\phi}}(\bs{\theta};\bs{z})}
{\partial\bs{\theta}}
\right|
d\bs{\theta}
=
\int_{\mathbb R^d}
p_{\bs{u}}(\bs{u})\,d\bs{u}
=
1.
\end{align*}
Hence $q_{\bs{\phi}}(\cdot\mid\bs{z})$ is a proper conditional
density for every fixed representation $\bs{z}$. The summary network affects
the conditioning information but not the normalization of the flow density.

We next connect this conditional density to the objective used for
joint training. The negative conditional log-density objective in Section~3.2
can be rewritten as
\begin{align}
\mathcal L(\bs{\phi},\bs{\psi})
&=
H(\bs{\theta}\mid\mathcal D)
+
\mathbb E_{p(\mathcal D)}
\left[
\operatorname{KL}
\left\{
p(\bs{\theta}\mid\mathcal D)
\,\Vert\,
q_{\bs{\phi}}(\bs{\theta}\mid\bs{z})
\right\}
\right].
\label{supp:eq:full_posterior_loss}
\end{align}
where
\begin{equation*}
    H(\bs{\theta}\mid\mathcal D)
=
-
\mathbb E_{p(\bs{\theta},\mathcal D)}
\left[
\log p(\bs{\theta}\mid\mathcal D)
\right]
\end{equation*}
is the conditional entropy. Since this term does not depend on
$(\bs{\phi},\bs{\psi})$, minimizing the training loss is equivalent to
minimizing the expected posterior discrepancy in
\eqref{supp:eq:full_posterior_loss}. Proposition~3.1 decomposes this 
discrepancy into information lost through the
representation and conditional-density approximation error. We now
give its proof.


\begin{proof}[Proof of Proposition~3.1]
By the definition of KL divergence,
\begin{align*}
&\mathbb E_{p(\mathcal D)}
\left[
\operatorname{KL}
\left\{
p(\bs{\theta}\mid\mathcal D)
\,\Vert\,
q_{\bs{\phi}}(\bs{\theta}\mid\bs{z})
\right\}
\right] =
\mathbb E_{p(\bs{\theta},\mathcal D)}
\left[
\log
\frac{p(\bs{\theta}\mid\mathcal D)}
     {q_{\bs{\phi}}(\bs{\theta}\mid\bs{z})}
\right].
\end{align*}
Inserting $p(\bs{\theta}\mid\bs{z})$ into the density ratio gives
\begin{align*}
\log
\frac{p(\bs{\theta}\mid\mathcal D)}
     {q_{\bs{\phi}}(\bs{\theta}\mid\bs{z})}
&=
\log
\frac{p(\bs{\theta}\mid\mathcal D)}
     {p(\bs{\theta}\mid\bs{z})}
+
\log
\frac{p(\bs{\theta}\mid\bs{z})}
     {q_{\bs{\phi}}(\bs{\theta}\mid\bs{z})}.
\end{align*}
Consequently,
\begin{align*}
&\mathbb E_{p(\mathcal D)}
\left[
\operatorname{KL}
\left\{
p(\bs{\theta}\mid\mathcal D)
\,\Vert\,
q_{\bs{\phi}}(\bs{\theta}\mid\bs{z})
\right\}
\right] =
\mathbb E_{p(\bs{\theta},\mathcal D)}
\left[
\log
\frac{p(\bs{\theta}\mid\mathcal D)}
     {p(\bs{\theta}\mid\bs{z})}
\right]
+
\mathbb E_{p(\bs{\theta},\mathcal D)}
\left[
\log
\frac{p(\bs{\theta}\mid\bs{z})}
     {q_{\bs{\phi}}(\bs{\theta}\mid\bs{z})}
\right].
\end{align*}

To identify the first term, begin with the standard definition of conditional
mutual information:
\begin{align*}
I(\bs{\theta};\mathcal D\mid\bs{z})
&=
\mathbb E_{p(\mathcal D,\bs{z})}
\left[
\operatorname{KL}
\left\{
p(\bs{\theta}\mid\mathcal D,\bs{z})
\,\Vert\,
p(\bs{\theta}\mid\bs{z})
\right\}
\right] \\
&= 
\mathbb E_{p(\mathcal D,\bs{z})}
\left[
\mathbb E_{p(\bs{\theta}\mid\mathcal D,\bs{z})}
\left\{
\log
\frac{p(\bs{\theta}\mid\mathcal D,\bs{z})}
     {p(\bs{\theta}\mid\bs{z})}
\right\}
\right] =
\mathbb E_{p(\bs{\theta},\mathcal D,\bs{z})}
\left[
\log
\frac{p(\bs{\theta}\mid\mathcal D,\bs{z})}
     {p(\bs{\theta}\mid\bs{z})}
\right].
\end{align*}
Because $\bs{z}=h_{\bs{\psi}}(\mathcal S(\mathcal D))$ is a deterministic function
of $\mathcal D$,
\[
p(\bs{\theta}\mid\mathcal D,\bs{z})
=
p(\bs{\theta}\mid\mathcal D)
\qquad\text{almost surely}.
\]
The joint distribution of $(\bs{\theta},\mathcal D,\bs{z})$ is therefore
induced by $p(\bs{\theta},\mathcal D)$ together with
$\bs{z}=h_{\bs{\psi}}(\mathcal S(\mathcal D))$. It follows that
\begin{align*}
I(\bs{\theta};\mathcal D\mid\bs{z})
&=
\mathbb E_{p(\bs{\theta},\mathcal D)}
\left[
\log
\frac{p(\bs{\theta}\mid\mathcal D)}
     {p(\bs{\theta}\mid\bs{z})}
\right].
\end{align*}
Thus, the first term in the decomposition is
$I(\bs{\theta};\mathcal D\mid\bs{z})$.

For the second term, marginalizing $\mathcal D$ under the induced joint
distribution gives
$p(\bs{\theta},\bs{z})=p(\bs{z})p(\bs{\theta}\mid\bs{z})$. Therefore,
the law of iterated expectation gives
\begin{align*}
\mathbb E_{p(\bs{\theta},\mathcal D)}
\left[
\log
\frac{p(\bs{\theta}\mid\bs{z})}
     {q_{\bs{\phi}}(\bs{\theta}\mid\bs{z})}
\right] &=
\mathbb E_{p(\bs{z})}
\left[
\mathbb E_{p(\bs{\theta}\mid\bs{z})}
\left\{
\log
\frac{p(\bs{\theta}\mid\bs{z})}
     {q_{\bs{\phi}}(\bs{\theta}\mid\bs{z})}
\right\}
\right]
\\
& =
\mathbb E_{p(\bs{z})}
\left[
\operatorname{KL}
\left\{
p(\bs{\theta}\mid\bs{z})
\,\Vert\,
q_{\bs{\phi}}(\bs{\theta}\mid\bs{z})
\right\}
\right].
\end{align*}
Substitution of these two identities proves the result.
\end{proof}

The decomposition also makes explicit when the amortized approximation can
recover the full-data posterior exactly. The first term vanishes when the
learned representation is posterior-sufficient, while the second vanishes
when the conditional flow exactly represents the posterior given that
representation. This yields the following immediate consequence.

\begin{corollary}[Exact posterior recovery]
Under the conditions of Proposition~3.1, suppose
that
\begin{equation*}
\bs{\theta}\perp\!\!\!\perp\mathcal D\mid\bs{z}
\end{equation*}
and that the conditional flow satisfies
\begin{equation*}
q_{\bs{\phi}}(\bs{\theta}\mid\bs{z})
=
p(\bs{\theta}\mid\bs{z})
\end{equation*}
almost surely. Then
\begin{equation*}
q_{\bs{\phi}}(\bs{\theta}\mid\bs{z})
=
p(\bs{\theta}\mid\mathcal D)
\end{equation*}
almost surely.
\end{corollary}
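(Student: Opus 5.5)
The plan is to read the corollary directly off Proposition~3.1, using the nonnegativity of KL divergence. Apply the proposition to the given $\bs{z}=h_{\bs{\psi}}(\mathcal S(\mathcal D))$ and $q_{\bs{\phi}}$:
\[
\mathbb E_{p(\mathcal D)}\bigl[\operatorname{KL}\{p(\bs{\theta}\mid\mathcal D)\,\Vert\,q_{\bs{\phi}}(\bs{\theta}\mid\bs{z})\}\bigr]
=
I(\bs{\theta};\mathcal D\mid\bs{z})
+
\mathbb E_{p(\bs{z})}\bigl[\operatorname{KL}\{p(\bs{\theta}\mid\bs{z})\,\Vert\,q_{\bs{\phi}}(\bs{\theta}\mid\bs{z})\}\bigr].
\]
Both terms on the right will be shown to vanish. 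The left side is then an expectation of a nonnegative quantity that equals zero, so the integrand is zero for $p(\mathcal D)$-almost every $\mathcal D$.

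First, for the representation term, the conditional independence $\bs{\theta}\perp\!\!\!\perp\mathcal D\mid\bs{z}$ means $p(\bs{\theta}\mid\mathcal D,\bs{z})=p(\bs{\theta}\mid\bs{z})$ almost surely. Since $\bs{z}$ is a deterministic function of $\mathcal D$, the proof of Proposition~3.1 gives $p(\bs{\theta}\mid\mathcal D,\bs{z})=p(\bs{\theta}\mid\mathcal D)$. Combining the two, $p(\bs{\theta}\mid\mathcal D)=p(\bs{\theta}\mid\bs{z})$ almost surely, so each KL divergence in the definition of $I(\bs{\theta};\mathcal D\mid\bs{z})$ is zero and $I=0$.

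Second, the hypothesis $q_{\bs{\phi}}(\cdot\mid\bs{z})=p(\cdot\mid\bs{z})$ for $p(\bs{z})$-almost every $\bs{z}$ makes the second KL term zero pointwise, so its expectation is zero.

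The conclusion could be reached even more directly: chaining $p(\bs{\theta}\mid\mathcal D)=p(\bs{\theta}\mid\bs{z})=q_{\bs{\phi}}(\bs{\theta}\mid\bs{z})$ almost surely gives it at once. I would present this chain as the main argument and mention the KL route as the link to the decomposition.

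The only delicate step is bookkeeping of ``almost surely''. The first equality holds for $p(\mathcal D)$-a.e.\ $\mathcal D$, as an equality of densities $\bs{\theta}$-a.e.\ with respect to the dominating measure. The second holds for $p(\bs{z})$-a.e.\ $\bs{z}$. I would transfer the second to a statement about $\mathcal D$ by noting that $p(\bs{z})$ is the pushforward of $p(\mathcal D)$ under $h_{\bs{\psi}}\circ\mathcal S$. Hence a $p(\bs{z})$-null set of $\bs{z}$ pulls back to a $p(\mathcal D)$-null set of $\mathcal D$, and the union of the two exceptional sets is still null.

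Equality of densities almost everywhere then gives equality of the conditional distributions, which is the sense in which the corollary is stated. If one prefers the KL route, this step is instead handled by the fact that $\operatorname{KL}\{r\Vert s\}=0$ if and only if $r=s$ almost everywhere, together with the finiteness assumptions carried over from Proposition~3.1.
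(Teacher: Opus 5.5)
Your proposal is correct, and its main argument is exactly the paper's two-line proof. Posterior sufficiency gives $p(\bs{\theta}\mid\mathcal D)=p(\bs{\theta}\mid\bs{z})$, and the exact-flow hypothesis then gives equality with $q_{\bs{\phi}}(\bs{\theta}\mid\bs{z})$. Your KL route through Proposition~3.1 and your pushforward argument for the null sets are correct additions that the paper leaves implicit.
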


\begin{proof}
Posterior sufficiency gives
$p(\bs{\theta}\mid\bs{z})=p(\bs{\theta}\mid\mathcal D)$,
and the result follows from the assumed exact conditional-density
approximation.
\end{proof}

This corollary clarifies the assumptions underlying the exact-recovery result
of \citet{radev2020bayesflow}. Their argument assumes perfect convergence of
the summary and invertible networks together with the existence of a
sufficient finite-dimensional summary. In practice, they separately identify
information loss through the summary network and imperfect approximation by
the invertible network as sources of error. We likewise do not treat
$h_{\bs{\psi}}(\mathcal S(\mathcal D))$ as provably sufficient; the validation
analyses in Sections~S3--S4 instead assess whether the combined representation
and conditional-density approximation is adequate over the deployment regime.

\subsection{Training and deployment algorithm}
\label{supp:algorithm}

Algorithm~\ref{supp:alg:abi} summarizes the training and deployment workflow.

\begin{algorithm}[t]
\caption{Amortized Bayesian inference for the DAGAR boundary-detection model}
\label{supp:alg:abi}
\begin{algorithmic}[1]

\Require Simulator for $(\bs{\theta},\mathcal{D})$,
feature map $\mathcal{S}(\cdot)$,
summary network $h_{\bs{\psi}}$,
conditional flow $f_{\bs{\phi}}$,
batch size $B$,
number of posterior draws $L$

\Statex
\State \textbf{Training phase}

\Repeat
    \For{$b=1,\ldots,B$}
        \State Simulate
        $(\bs{\theta}^{(b)},\mathcal{D}^{(b)})
        \sim p(\bs{\theta},\mathcal{D})$
        \State Construct
        $\mathcal{S}^{(b)}
        =
        \mathcal{S}(\mathcal{D}^{(b)})$
        \State Compute
        $\bs{z}^{(b)}
        =
        h_{\bs{\psi}}(\mathcal{S}^{(b)})$
        \State Evaluate
        $q_{\bs{\phi}}
        (\bs{\theta}^{(b)}\mid\bs{z}^{(b)})$
    \EndFor
    \State Update $(\bs{\phi},\bs{\psi})$ by minimizing the minibatch
    negative log-density loss
\Until{100 training epochs have been completed}

\Statex
\State \textbf{Deployment phase for an observed dataset
$\mathcal{D}_{\mathrm{obs}}$}

\State Construct
$\mathcal{S}_{\mathrm{obs}}
=
\mathcal{S}(\mathcal{D}_{\mathrm{obs}})$

\State Compute
$\bs{z}_{\mathrm{obs}}
=
h_{\widehat{\bs{\psi}}}(\mathcal{S}_{\mathrm{obs}})$

\For{$\ell=1,\ldots,L$}
    \State Draw
    $\bs{u}^{(\ell)}
    \sim
    \mathcal{N}(\bs{0},\bs{I})$
    \State Set
    $\bs{\theta}^{(\ell)}
    =
    f_{\widehat{\bs{\phi}}}^{-1}
    (\bs{u}^{(\ell)};\bs{z}_{\mathrm{obs}})$
\EndFor

\State Return
$\{\bs{\theta}^{(\ell)}\}_{\ell=1}^{L}$
as approximate posterior draws from
$q_{\widehat{\bs{\phi}}}
(\bs{\theta}\mid\bs{z}_{\mathrm{obs}})$

\end{algorithmic}
\end{algorithm}

\subsection{Neural-network implementation}
\label{supp:implementation}

The amortized posterior approximation was implemented in BayesFlow using
Keras with the TensorFlow backend. NumPy randomness was seeded, whereas
TensorFlow and Keras randomness was not explicitly seeded.

The flow target is
$(\beta_0,\sigma_w^2,\eta_{\mathrm{raw}},\rho)$, with
$\eta=M\eta_{\mathrm{raw}}$. The graph-specific quantity $M$ is a
deterministic function of the observed covariates and adjacency structure and
is also included among the graph-level components of
$\mathcal S(\mathcal D)$. In the implementation, however, the conversion
between $\eta_{\mathrm{raw}}$ and the original-scale parameter $\eta$ uses the
observed value of $M$ directly. Thus, on the original parameter scale the
induced transformation is dataset-specific through $M=M(\mathcal D)$; this
known deterministic dependence is suppressed in the notation
$f_{\bs{\phi}}(\bs{\theta};\bs{z})$ used in the main manuscript.

More specifically, the complete original-scale transformation represented by
$f_{\bs{\phi}}(\bs{\theta};\bs{z})$ comprises the deterministic rescaling
$\eta_{\mathrm{raw}}=\eta/M$, the one-to-one support transformations and
standardization applied before inference, and the learned coupling-flow
transformation. For each fixed dataset with $M>0$, these transformations are
one-to-one, so their composition with the invertible coupling flow remains
bijective. Consequently, the change-of-variables density in Section~3.2 is
understood on the original parameter scale with the Jacobian of the complete
composite map.

Ignoring the subsequent affine standardization, the support transformations
used before the coupling flow are
\begin{equation*}
\sigma_w^2
\mapsto
\operatorname{softplus}^{-1}(\sigma_w^2),
\qquad
\eta
\mapsto
\operatorname{logit}\!\left(\frac{\eta}{M}\right),
\qquad
\rho
\mapsto
\operatorname{logit}(\rho),
\end{equation*}
while $\beta_0$ is unconstrained. Their inverses enforce
$\sigma_w^2>0$, $\eta\in(0,M)$, and $\rho\in(0,1)$. BayesFlow
standardization was used for both summary and inference variables.

Training used online simulation with the default optimizer and no
user-defined callbacks, early stopping, or custom learning-rate schedule.
Table~\ref{supp:tab:implementation} summarizes the software environment,
network architecture, training configuration, hardware, and timing.

\begin{table}[t]
\centering
\caption{Neural-network and training configuration for ABI-DAGAR.}
\label{supp:tab:implementation}
\begin{tabular}{ll}
\hline
\textbf{Component} & \textbf{Specification} \\
\hline
Software stack
& Python 3.10.19; NumPy 2.2.6; SciPy 1.15.3; \\
& pandas 2.3.3; GeoPandas 1.1.3; BayesFlow 2.0.8; \\
& Keras 3.12.1; TensorFlow 2.21.0 \\
Backend
& \texttt{KERAS\_BACKEND=tensorflow} \\
Random seed
& \texttt{np.random.seed(123)} \\
Summary network
& BayesFlow \texttt{SetTransformer} \\
Summary output dimension
& 32 \\
Inference network
& BayesFlow \texttt{CouplingFlow} \\
Transform type
& Spline \\
Training mode
& Online simulation \\
Epochs
& 100 \\
Batch size
& 64 \\
Batches per epoch
& 200 \\
Total simulated datasets
& $1{,}280{,}000$ \\
Hardware
& Intel(R) Core(TM) i7-10750H CPU @ 2.600GHz \\
Training time
& 5 hours and 45 minutes \\
\hline
\end{tabular}
\end{table}

\FloatBarrier

\section{Additional validation results}
\label{supp:validation}

This section provides additional results supporting Section~4 of the main manuscript. 
Unless otherwise stated, calibration and boundary-quality results use 200 held-out 
simulated datasets with graph sizes from 40 to 300. The prior- and construction-matched 
MCMC experiment uses an additional fixed bank of 100 held-out datasets, and the 
representation ablation uses the separate design described in Section~\ref{supp:ablation}.

\subsection{Simulation design, recovery, and posterior calibration}
\label{supp:recovery}

For each training or validation dataset, the number of areas was sampled uniformly from $\{40,\ldots,300\}$. Planar coordinates were sampled independently over $[0,10]^2$ and connected by Delaunay triangulation. Node covariates followed $x_i\stackrel{iid}{\sim}\mathcal N(0,1)$ and exposures satisfied $\log(e_i)\stackrel{iid}{\sim}\mathcal U\{\log(2),\log(30,000)\}$. Parameters were sampled from the priors in Section~3.1 of the main manuscript, after which the filtered graph, DAGAR effects, and Poisson counts were generated sequentially. Training used 100 epochs, batch size 64, and 200 batches per epoch, for 1,280,000 online simulations. Figure~\ref{supp:fig:loss} shows the corresponding training loss.


The 200 held-out datasets were generated independently after training. For
parameter $\theta$ and held-out dataset $g=1,\ldots,G$, let $\theta^{(g)}$
denote the generating value, let $\widehat\theta^{(g)}$ denote the mean of its
approximate posterior, and let $\bar\theta=G^{-1}\sum_g\theta^{(g)}$. 
We report bias, root mean squared error (RMSE), and $R^2$ as
\[
\operatorname{Bias}
=
\frac{1}{G}\sum_{g=1}^{G}
\{\widehat\theta^{(g)}-\theta^{(g)}\},
\qquad
\operatorname{RMSE}
=
\left[
\frac{1}{G}\sum_{g=1}^{G}
\{\widehat\theta^{(g)}-\theta^{(g)}\}^2
\right]^{1/2},
\]
and
\[
R^2
=
1-
\frac{
\sum_{g=1}^{G}\{\theta^{(g)}-\widehat\theta^{(g)}\}^2
}{
\sum_{g=1}^{G}\{\theta^{(g)}-\bar\theta\}^2
}.
\]
Here $r_s$ is the Spearman rank correlation between
$\{\theta^{(g)}\}_{g=1}^{G}$ and
$\{\widehat\theta^{(g)}\}_{g=1}^{G}$. If
$[L^{(g)}_{0.025},U^{(g)}_{0.975}]$ is the central 95\% approximate-posterior
interval, empirical coverage is
\[
\frac{1}{G}\sum_{g=1}^{G}
\mathbf 1\!\left\{
L^{(g)}_{0.025}\leq\theta^{(g)}
\leq U^{(g)}_{0.975}
\right\}.
\]
The graph-size diagnostic uses the posterior $z$-score
\[
z^{(g)}
=
\frac{\theta^{(g)}-\widehat\theta^{(g)}}
{\widehat{\operatorname{sd}}(\theta\mid\mathcal D_g)}.
\]
Table~\ref{supp:tab:recovery} reports these posterior-mean recovery and
interval-calibration summaries for the four model parameters.

\begin{table}[htbp!]
\centering
\caption{Parameter recovery and interval calibration on 200 held-out simulated datasets.}
\label{supp:tab:recovery}
\begin{tabular}{lccccc}
\hline
Parameter & Bias & RMSE & $r_s$ & $R^2$ & 95\% coverage \\
\hline
$\beta_0$     &  0.003 & 0.019 & 0.999 & 0.999 & 0.980 \\
$\sigma_w^2$  & -0.008 & 0.164 & 0.889 & 0.754 & 0.945 \\
$\eta$        & -0.020 & 0.150 & 0.714 & 0.491 & 0.955 \\
$\rho$        & -0.011 & 0.158 & 0.823 & 0.679 & 0.955 \\
\hline
\end{tabular}
\end{table}

Table~\ref{supp:tab:recovery} shows particularly strong recovery for
$\beta_0$, with informative recovery and close-to-nominal coverage for the
three spatial-structure parameters. Figs.~\ref{supp:fig:recovery}--%
\ref{supp:fig:regime_errors} provide the corresponding recovery,
simulation-based calibration, coverage, graph-size, and parameter-regime
diagnostics. Together, the table and calibration figures do not reveal pronounced
systematic miscalibration. Recovery error varies across true parameter
regimes, but no isolated deterioration is apparent with graph size or edge
count.

For the calibration diagnostics, approximately uniform simulation-based
calibration (SBC) ranks indicate calibrated marginal posteriors, 
and empirical cumulative distribution function (ECDF)
difference curves close to zero and within their simultaneous reference bands
indicate agreement with uniform ranks. Empirical-versus-nominal coverage close
to the diagonal indicates calibrated credible intervals. Posterior
$z$-score distributions centered near zero, without systematic shifts across
graph-size bins, indicate no systematic graph-size-dependent location error.

\begin{figure}[htbp!]
\centering
\includegraphics[width=\textwidth]{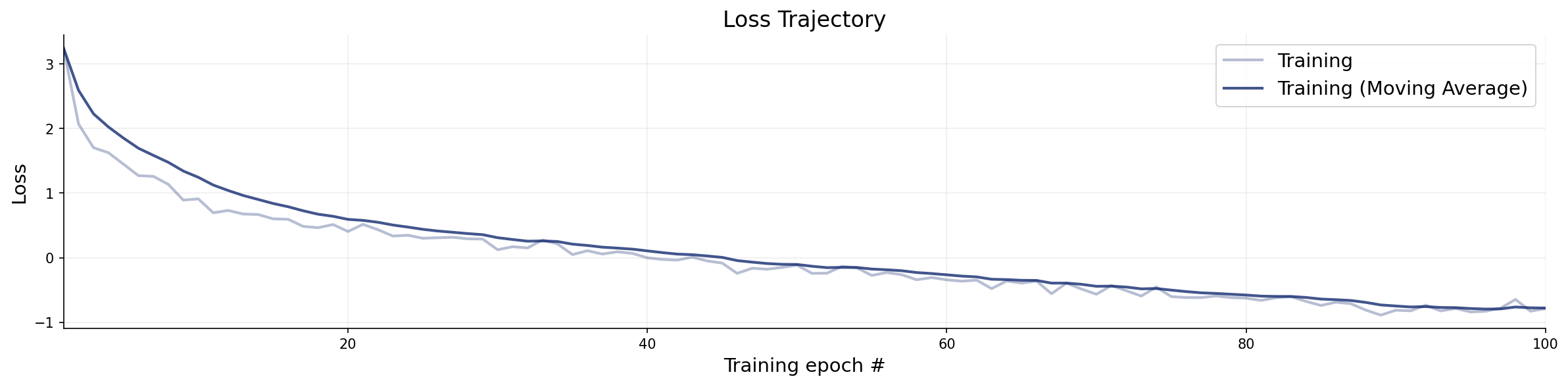}
\caption{Training loss over 100 epochs. The light line is the epoch-specific
training loss and the dark line is its moving average. A sustained decrease
followed by stabilization, rather than late divergence, is the expected
training pattern.}
\label{supp:fig:loss}
\end{figure}

\begin{figure}[htbp!]
\centering
\includegraphics[width=\textwidth]{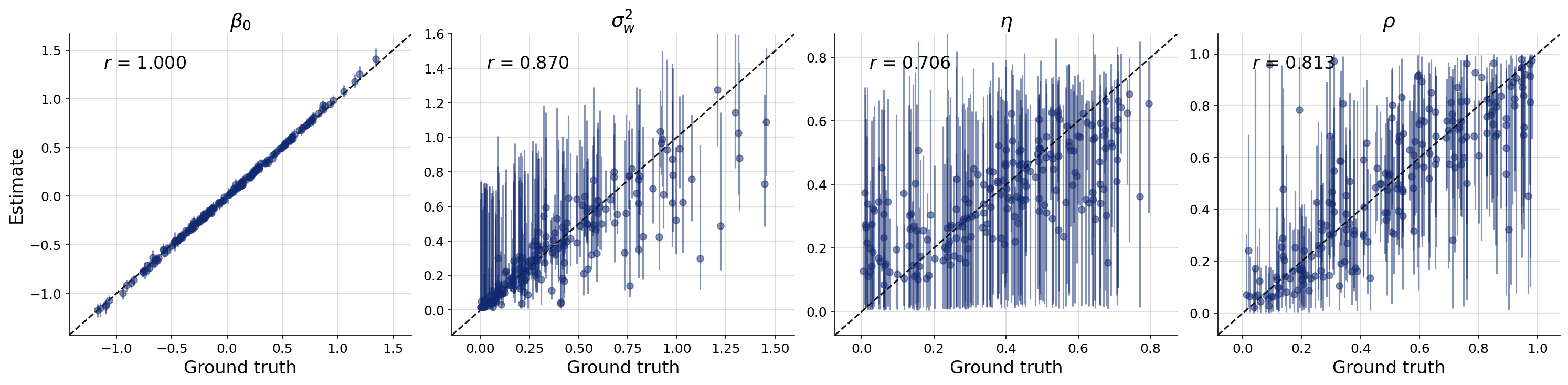}
\caption{Parameter recovery on 200 held-out simulated datasets. Points are
approximate-posterior medians and vertical lines are central 95\% posterior
intervals; the horizontal coordinate is the generating value. The dashed line
denotes equality, and $r$ is the Pearson correlation between generating values
and posterior medians. Satisfactory recovery is indicated by points near the
equality line and intervals containing the corresponding generating values at
approximately their nominal frequency.}
\label{supp:fig:recovery}
\end{figure}

\begin{figure}[htbp!]
\centering
\includegraphics[width=\textwidth]{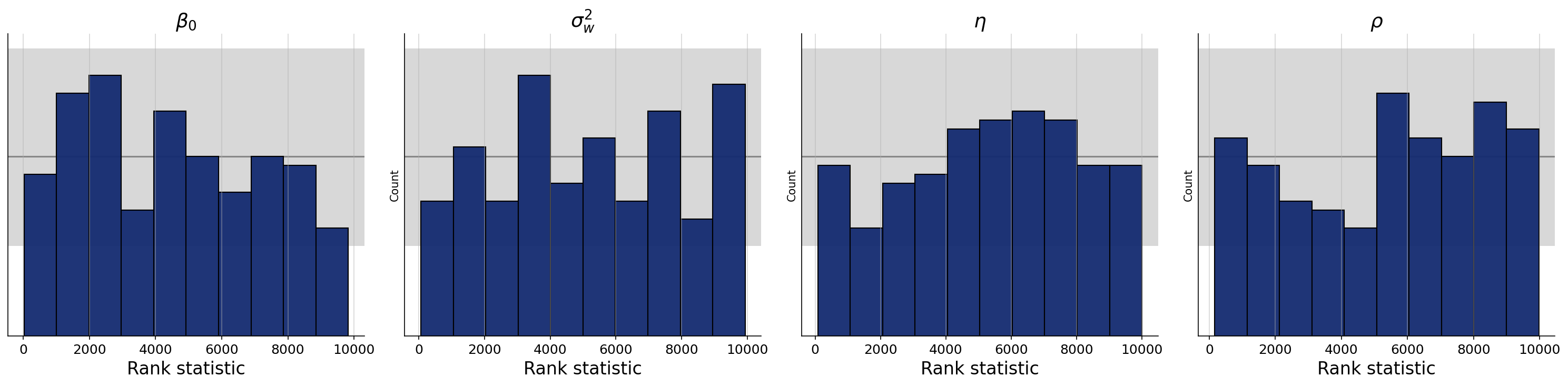}
\caption{Simulation-based calibration rank histograms for the four model
parameters on 200 held-out simulated datasets, using 10,000 posterior draws per
dataset and ten rank bins. The horizontal gray line is the expected bin count
under uniform ranks, and the gray band is the 99\% binomial reference interval
for each bin. Approximately uniform bin heights, with no systematic departures
from the reference band, indicate satisfactory marginal calibration.}
\label{supp:fig:sbc_hist}
\end{figure}

\begin{figure}[htbp!]
\centering
\includegraphics[width=0.75\textwidth]{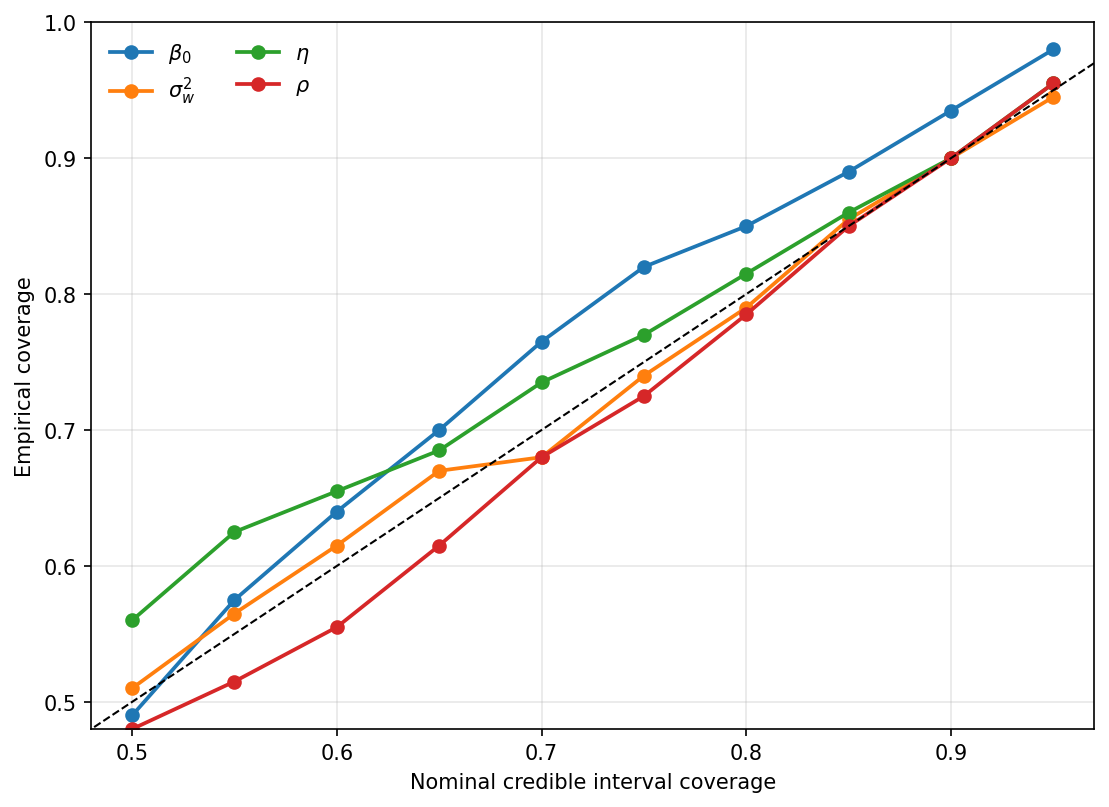}
\caption{Empirical coverage of central approximate-posterior intervals versus
their nominal coverage for the four model parameters. The dashed diagonal
denotes equality between empirical and nominal coverage. Curves close to this
line indicate calibrated interval probabilities.}
\label{supp:fig:coverage}
\end{figure}

\begin{figure}[htbp!]
\centering
\includegraphics[width=\textwidth]{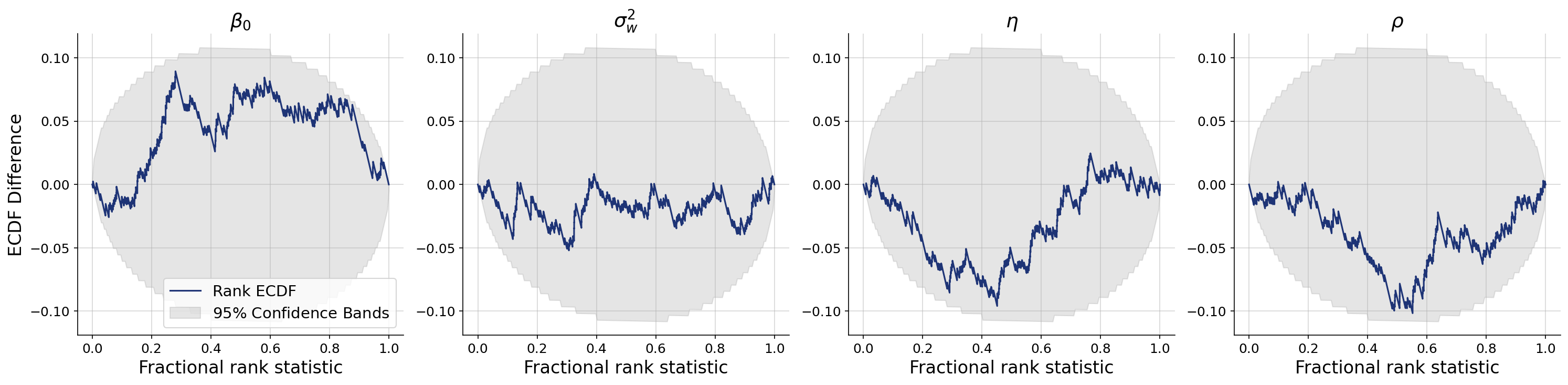}
\caption{Simulation-based calibration ECDF-difference plots for the four model
parameters. The dark curve is the empirical CDF of the fractional posterior
ranks minus the CDF of a uniform distribution, and the gray region is the 95\%
simultaneous reference band under uniform ranks. Curves close to zero and
remaining within the band indicate satisfactory marginal calibration.}
\label{supp:fig:sbc_ecdf}
\end{figure}

\begin{figure}[htbp!]
\centering
\includegraphics[width=\textwidth]{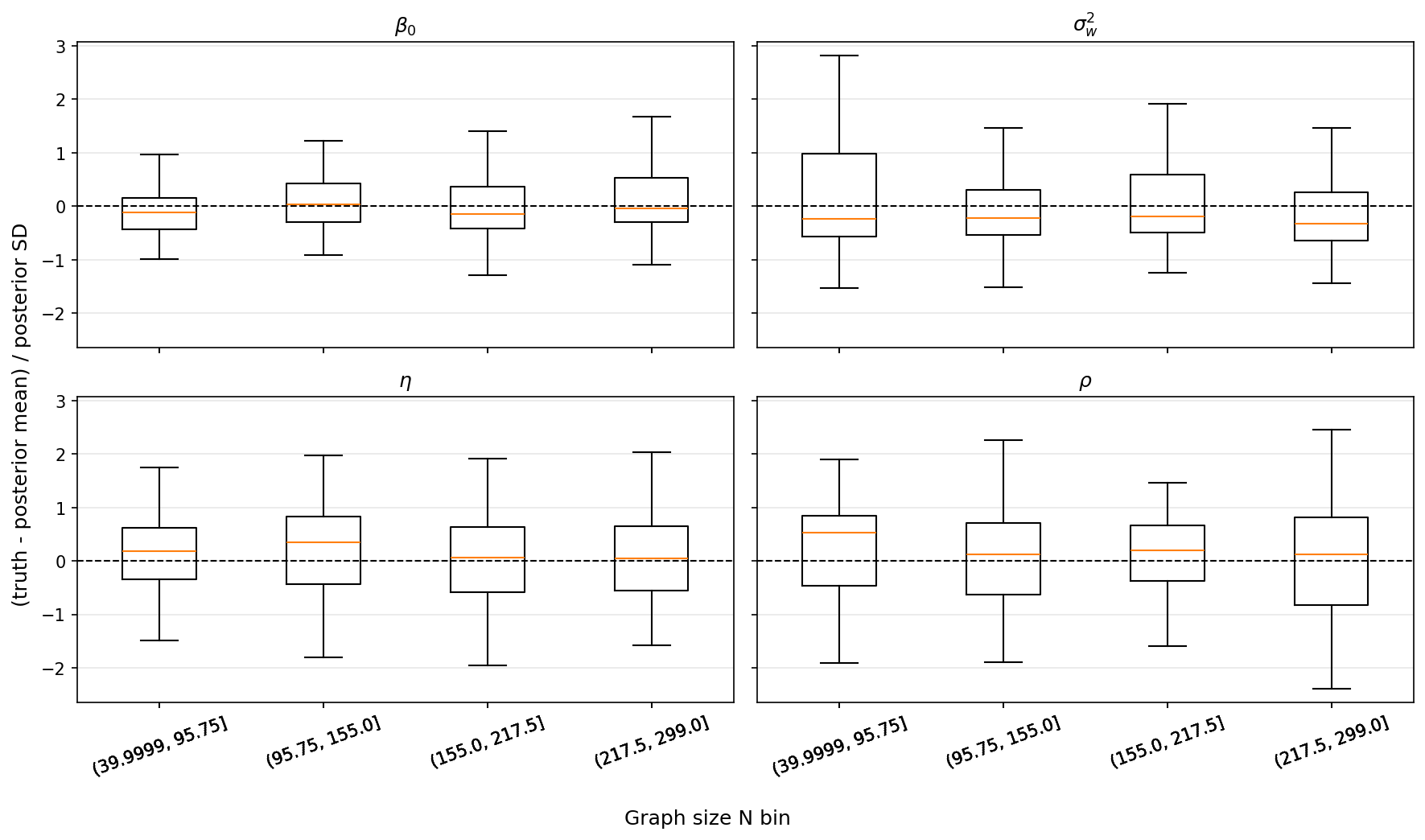}
\caption{Posterior $z$-scores
$\{\theta^{(g)}-\widehat\theta^{(g)}\}/
\widehat{\operatorname{sd}}(\theta\mid\mathcal D_g)$
stratified by graph-size quartile. Boxes show the interquartile range, orange
lines the median, and whiskers extend to the usual 1.5-interquartile-range
limits; outlying points are suppressed. The dashed line marks zero.
Distributions centered near zero without systematic shifts across graph-size
bins indicate no graph-size-dependent location error.}
\label{supp:fig:graph_zscores}
\end{figure}

\begin{figure}[htbp!]
\centering
\includegraphics[width=\textwidth]{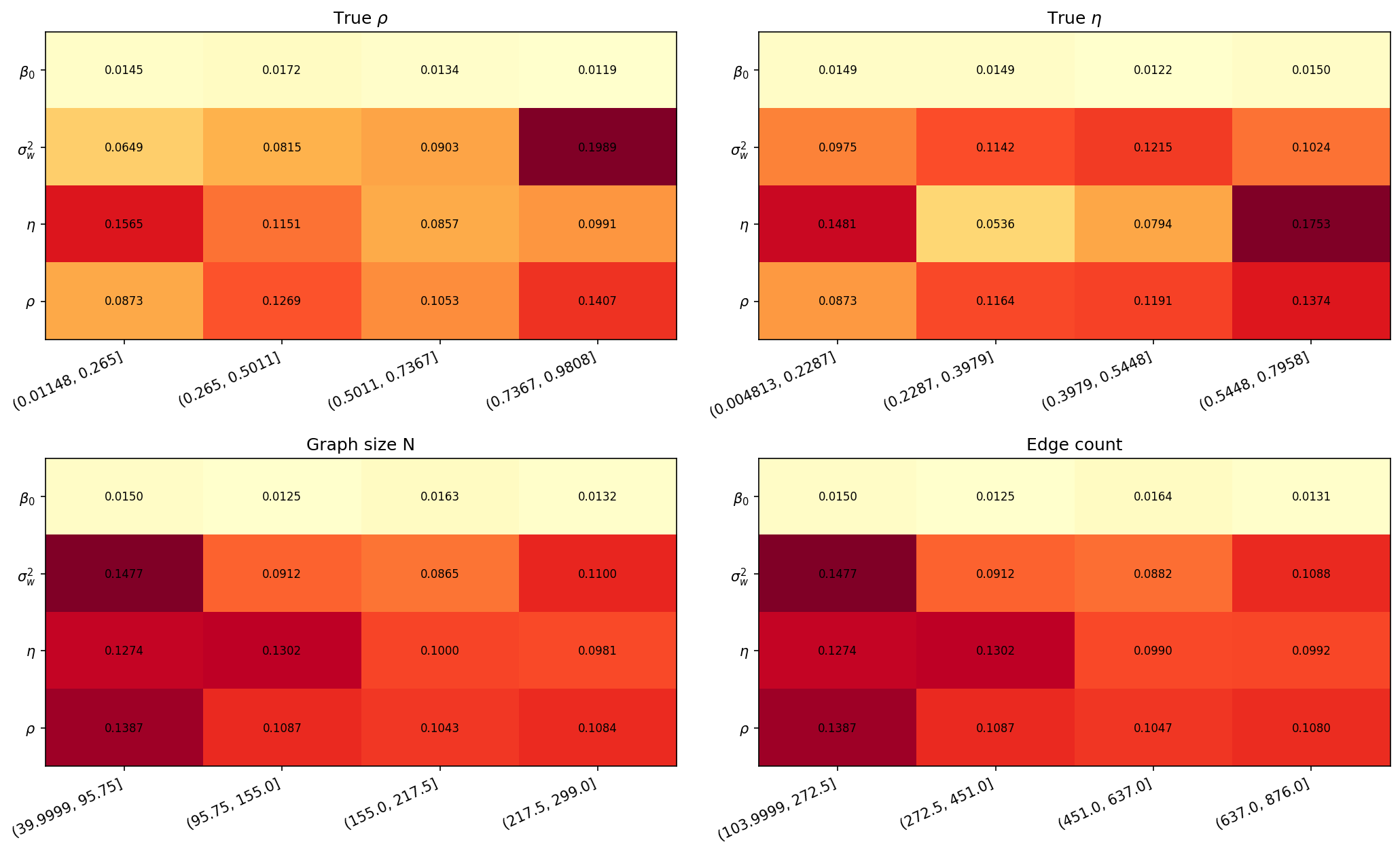}
\caption{Mean absolute posterior-mean recovery error by quartiles of the true
$\rho$, true $\eta$, graph size, and edge count. Rows identify parameters,
columns identify the corresponding quartile bins, and each cell reports its
mean absolute error. Color scales are determined separately within each panel,
so comparisons across panels should use the printed values. Smaller values and
the absence of systematic deterioration across graph-size or edge-count bins
indicate more stable recovery.}
\label{supp:fig:regime_errors}
\end{figure}

\FloatBarrier
\subsection{Boundary-probability and decision diagnostics}
\label{supp:boundary}

Using the edge-level posterior boundary probabilities defined in Section~3.1
of the main manuscript, 161 of the 200 held-out datasets contained at least one
true boundary, whereas 39 contained none. Sensitivity, area under the receiver operating characteristic curve (AUROC), and average
precision are therefore summarized over the former group; specificity, Brier
scores, and boundary-count summaries remain defined for all datasets.

For a given dataset, let $\mathcal E$ denote its set of observed geographic
edges. For each edge $e\in\mathcal E$, let $b_e\in\{0,1\}$ be the true
boundary indicator and let $\widehat p_e$ be its posterior boundary
probability. At a probability threshold $c$, edges satisfying
$\widehat p_e\geq c$ are classified as boundaries. Let
$\operatorname{TP}(c)$, $\operatorname{FP}(c)$,
$\operatorname{TN}(c)$, and $\operatorname{FN}(c)$ denote the resulting
numbers of true positives, false positives, true negatives, and false
negatives. Sensitivity, specificity, and precision at threshold $c$ are
\[
\operatorname{Sens}(c)
=
\frac{\operatorname{TP}(c)}
     {\operatorname{TP}(c)+\operatorname{FN}(c)},
\qquad
\operatorname{Spec}(c)
=
\frac{\operatorname{TN}(c)}
     {\operatorname{TN}(c)+\operatorname{FP}(c)},
\]
and
\[
\operatorname{Prec}(c)
=
\frac{\operatorname{TP}(c)}
     {\operatorname{TP}(c)+\operatorname{FP}(c)}.
\]

The receiver operating characteristic curve plots sensitivity against
$1-\operatorname{Spec}(c)$ as $c$ varies. Its area, the AUROC, can
equivalently be interpreted as the probability that a randomly selected true
boundary receives a higher posterior boundary probability than a randomly
selected non-boundary, with half weight assigned to ties.

Average precision summarizes the corresponding precision--sensitivity curve.
Specifically, let $c_1,\ldots,c_T$ denote the distinct probability thresholds,
ordered so that the resulting sensitivity values satisfy
$0=R_0\leq R_1\leq\cdots\leq R_T$, and let
$P_t=\operatorname{Prec}(c_t)$ and
$R_t=\operatorname{Sens}(c_t)$. Average precision is
\[
\operatorname{AP}
=
\sum_{t=1}^{T}(R_t-R_{t-1})P_t.
\]
Thus, AP gives greater weight to precision attained over larger increments
in sensitivity and is particularly informative when boundaries are less frequent
than non-boundaries.

Probability accuracy is measured by the Brier score
\[
\operatorname{BS}
=
\frac{1}{|\mathcal E|}
\sum_{e\in\mathcal E}
(\widehat p_e-b_e)^2.
\]
It equals the mean squared error of the posterior boundary probabilities
relative to the binary boundary indicators, so smaller values are better.

For thresholded decisions, the median-probability rule classifies edge $e$
as a boundary when $\widehat p_e > 0.5$. The Bayesian false-discovery-rate (FDR) rule of
\citet{li2015bayesian} instead considers each distinct posterior probability $c$ as a candidate
threshold. For the selected set
\[
\mathcal B(c)
=
\{e\in\mathcal E:\widehat p_e\geq c\},
\]
its estimated posterior false discovery rate is
\[
\widehat{\operatorname{FDR}}(c)
=
\frac{
\sum_{e\in\mathcal B(c)}(1-\widehat p_e)
}{
|\mathcal B(c)|
}.
\]
Among the candidate thresholds satisfying
$\widehat{\operatorname{FDR}}(c)\leq0.05$, we use the smallest threshold,
thereby selecting the largest admissible set of edges. If no candidate
threshold satisfies this condition, no boundary is selected. Equal posterior
probabilities are thresholded together. This criterion limits 
the posterior expected false-discovery proportion
computed from the approximate posterior boundary probabilities. It should not
be interpreted as establishing frequentist FDR control under repeated
sampling.

The pooled metrics in the main manuscript apply these definitions after
combining edges across all 200 datasets. The dataset-level results below
apply them separately within each dataset and then summarize the resulting
values across datasets.

Table~\ref{supp:tab:boundary} summarizes the resulting probability-quality
and decision-performance measures. Under the scalar threshold mechanism,
posterior boundary probabilities are nondecreasing in edge dissimilarity
within each dataset, so within-dataset AUROC and average precision are
expected to be essentially perfect when true boundaries are present. The
Brier score and thresholded decision measures therefore provide the more
informative dataset-level assessment of probability quality and operating
behavior. These dataset-level summaries differ 
from the pooled edge metrics reported
in the main manuscript because the metrics are computed separately within
each dataset and then averaged here, whereas the main-manuscript values are
computed after pooling edges across datasets.

\begin{table}[htbp!]
\centering
\caption{Dataset-level boundary-probability and decision summaries on held-out
simulations. Sensitivity, area under the receiver operating characteristic
curve (AUROC), and average precision use the 161 datasets containing at least
one true boundary; all other summaries use all 200 datasets. SD denotes
standard deviation, IQR interquartile range, and FDR false discovery rate.}
\label{supp:tab:boundary}
\resizebox{\textwidth}{!}{%
\begin{tabular}{lccccc}
\hline
\textbf{Metric}
& $\bs{n}$
& \textbf{Mean}
& \textbf{SD}
& \textbf{Median}
& \textbf{IQR} \\
\hline
Dataset AUROC & 161 & 1.000 & 0.000 & 1.000 & 0.000 \\
Dataset average precision & 161 & 1.000 & 0.000 & 1.000 & 0.000 \\
Dataset Brier score & 200 & 0.057 & 0.049 & 0.048 & 0.046 \\
FDR sensitivity & 161 & 0.170 & 0.332 & 0.000 & 0.000 \\
FDR specificity & 200 & 0.998 & 0.011 & 1.000 & 0.000 \\
Median-probability selected boundaries
& 200 & 78.430 & 77.718 & 56.500 & 91.750 \\
True boundaries
& 200 & 91.240 & 93.327 & 65.500 & 142.000 \\
Median-probability sensitivity
& 161 & 0.715 & 0.323 & 0.848 & 0.527 \\
Median-probability specificity
& 200 & 0.970 & 0.057 & 1.000 & 0.036 \\
\hline
\end{tabular}%
}
\end{table}

At $\alpha=0.05$, the Bayesian FDR rule yields a conservative operating
point: 121 of the 161 datasets containing true boundaries have zero
sensitivity, while median specificity is one. The median-probability rule
produces a less conservative operating point: only seven of these datasets
have zero sensitivity, while specificity remains high. This comparison does
not imply that the FDR rule is generally undesirable. It is appropriate when
limiting the posterior expected proportion of false selections is the primary
objective, whereas continuous posterior probabilities and the
median-probability rule are more informative for exploratory boundary mapping
in the present setting. The 95\% posterior interval for the total number of
boundaries contained the true count in 195 of the 200 held-out datasets
($0.975$).

Figs.~\ref{supp:fig:ranking} and \ref{supp:fig:mpm} provide the
corresponding ranking and thresholded-decision diagnostics.
They illustrate the sensitivity--specificity tradeoff between the two
operating rules, while also showing that the posterior probabilities retain
useful edge rankings independently of either threshold.

\begin{figure}[htpb!]
\centering
\includegraphics[width=\textwidth]{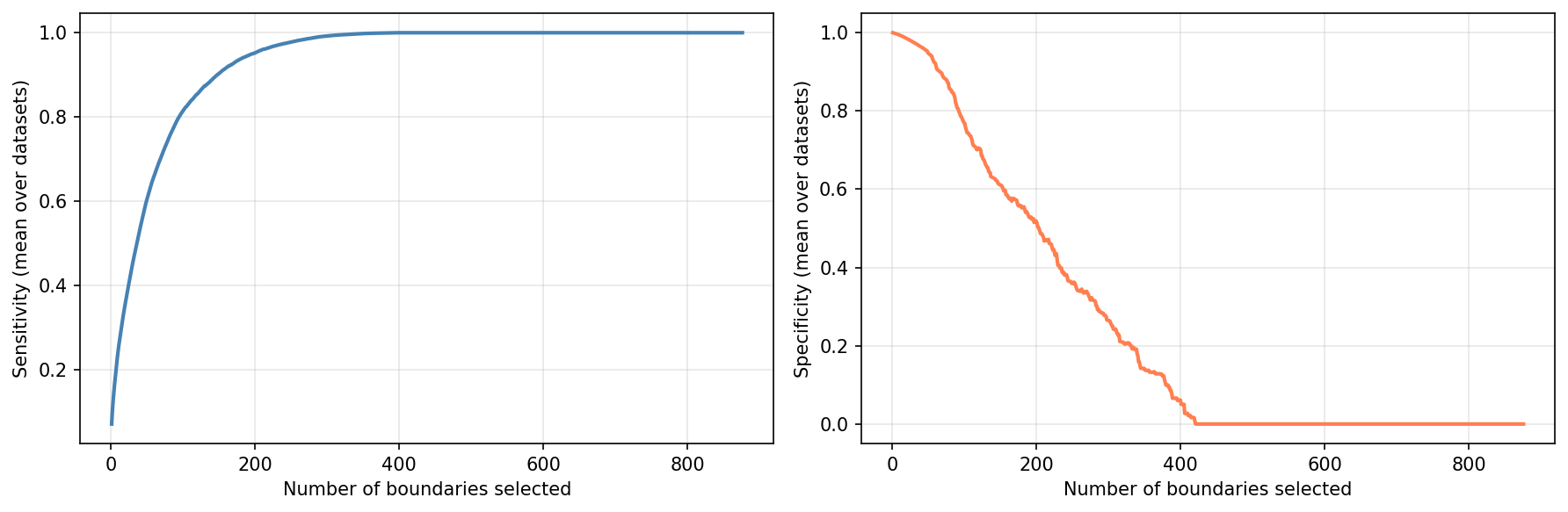}
\caption{Ranking diagnostic for posterior boundary probabilities. Mean
sensitivity (left) and specificity (right) are shown as functions of the
number of selected boundaries. Sensitivity excludes datasets containing no
true boundaries.}
\label{supp:fig:ranking}
\end{figure}

\begin{figure}[htbp!]
\centering
\includegraphics[width=\textwidth]{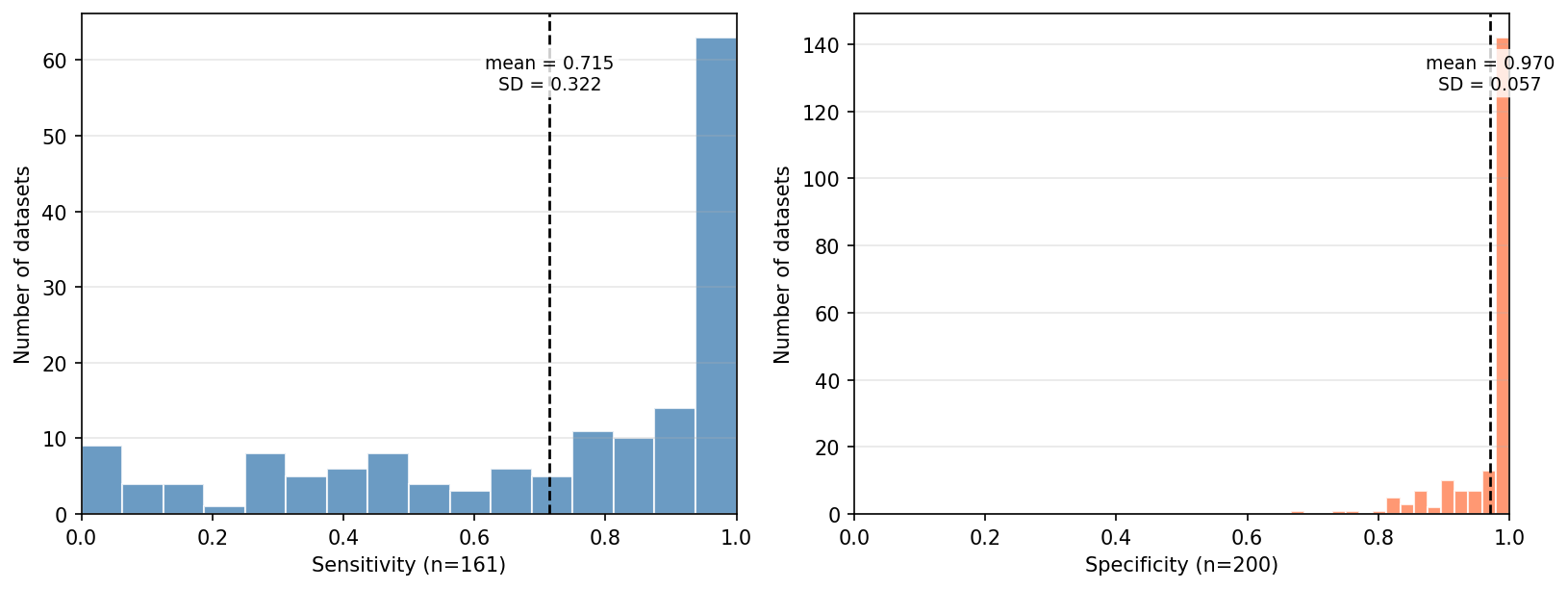}
\caption{Sensitivity (left) and specificity (right) under the
median-probability rule. Sensitivity is shown for the 161 datasets containing
at least one true boundary; specificity includes all 200 held-out datasets.}
\label{supp:fig:mpm}
\end{figure}

\FloatBarrier

\subsection{Prior- and construction-matched MCMC comparison}
\label{supp:mcmc_sim}

The matched benchmark in Section~4.3 of the
main manuscript uses 100 fixed held-out datasets and 10,000 retained draws
from each method. The MCMC implementation is constructed to target the same posterior implied
by the model and priors used for ABI-DAGAR training:
$\beta_0\sim\mathcal N(0,0.5^2)$, $\sigma_w^2\sim|\mathcal N(0,0.5)|$, $\eta/M\sim\mathcal U(0,1)$, and $\rho\sim\mathcal U(0,1)$. 
Thus, this experiment
holds the Bayesian model fixed and permits posterior differences to be
interpreted as approximation or posterior-simulation discrepancies.

Each MCMC fit used one chain of 20,000 iterations with 10,000 burn-in iterations. Adaptive Metropolis-within-Gibbs updates were applied to the intercept, latent field, log spatial variance, threshold parameter, and logit-transformed dependence parameter. Mean acceptance rates were $0.350$, $0.465$, $0.364$, $0.545$, and $0.370$ for $\beta_0$, the latent field, $\sigma_w^2$, $\eta$, and $\rho$, respectively.

For $S$ retained MCMC draws, sample posterior variance $s^2$, and spectral
variance estimate $\widehat v_0$, the effective sample size (ESS) and Monte
Carlo standard error (MCSE) of the posterior mean were computed as
\[
\widehat{\operatorname{ESS}}
=
\frac{Ss^2}{\widehat v_0},
\qquad
\widehat{\operatorname{MCSE}}(\bar\theta)
=
\sqrt{\frac{\widehat v_0}{S}}
=
\frac{s}{\sqrt{\widehat{\operatorname{ESS}}}},
\]
These quantities summarize the numerical
precision of the estimated posterior means.
Table~\ref{supp:tab:sim_mcmc_precision} reports ESS, MCSE, and MCSE relative
to the posterior standard deviation across the 100 matched datasets. Precision
is strongest for $\beta_0$ and $\eta$ and more heterogeneous for
$\sigma_w^2$ and $\rho$.

\begin{table}[htbp!]
\centering
\small
\caption{Single-chain precision in the matched MCMC experiment. Entries are median [IQR] across 100 datasets.}
\label{supp:tab:sim_mcmc_precision}
\begin{tabular}{lcccc}
\hline
Parameter & ESS & MCSE $(\times10^{-3})$ & MCSE / SD (\%) & ESS $\geq100$ \\
\hline
$\beta_0$    & 448.800 [346.800, 573.300]  & 0.600 [0.355, 0.842]  & 4.720 [4.180, 5.370]  & 100 \\
$\sigma_w^2$ & 160.400 [59.900, 467.000]   & 11.277 [3.798, 20.871] & 7.900 [4.630, 12.920] & 62 \\
$\eta$       & 626.100 [191.000, 1290.600] & 3.277 [1.939, 6.550]  & 4.000 [2.780, 7.240]  & 87 \\
$\rho$       & 173.100 [73.700, 380.700]   & 6.620 [4.215, 10.387] & 7.600 [5.130, 11.650] & 69 \\
\hline
\end{tabular}
\end{table}

As a sensitivity analysis, we recomputed the corrected decomposition after
retaining, separately for each parameter, only datasets with MCMC ESS of at
least 20. This retained 100, 96, 98, and 96 datasets for $\beta_0$,
$\sigma_w^2$, $\eta$, and $\rho$, respectively. On the $10^3$ scale,
$\widehat D_j$ changed from $(0.278,10.089,8.681,9.926)$ to
$(0.278,10.558,8.624,9.420)$, while the corresponding posterior-mean
correlations changed from $(1.000,0.959,0.835,0.933)$ to
$(1.000,0.957,0.839,0.932)$. The inequality
$\widehat D_j<\widehat R_j$ also held under ESS cutoffs of 50 and 100.
Thus, the decomposition conclusions are not driven by the comparatively
small number of lower-ESS fits.

These diagnostics quantify the numerical precision of the MCMC reference.
We next describe how finite posterior-simulation error enters the
posterior-mean decomposition reported in the main manuscript.

The decomposition in the main manuscript concerns posterior means, not the
distance between complete posterior distributions. For dataset $g$ and
parameter $j$, let $a_{gj}$ and $m_{gj}$ be the exact ABI and reference
posterior means, respectively, and let $\theta_{gj}$ be the generating value.
With $G=100$, the four finite-bank targets are
\[
T_j=G^{-1}\sum_g(a_{gj}-\theta_{gj})^2,\quad
R_j=G^{-1}\sum_g(m_{gj}-\theta_{gj})^2,\quad
D_j=G^{-1}\sum_g(a_{gj}-m_{gj})^2,
\]
\[
C_j=2G^{-1}\sum_g(m_{gj}-\theta_{gj})(a_{gj}-m_{gj}),
\qquad T_j=R_j+D_j+C_j.
\]
Thus $T_j$ and $R_j$ are the ABI and reference posterior-mean mean squared error (MSE),
$D_j$ is the ABI--MCMC posterior-mean discrepancy, defined as the
across-dataset mean squared difference between the two posterior means, and
$C_j$ is the interaction term in the recovery-error identity.

The posterior means are not observed exactly but estimated using 10,000 draws.
Write $\widehat a_{gj}=a_{gj}+\epsilon_{A,gj}$ and
$\widehat m_{gj}=m_{gj}+\epsilon_{M,gj}$, where the simulation errors
are conditionally mean zero and independent between methods, with variances
$v_{A,gj}$ and $v_{M,gj}$. We estimate $v_{A,gj}$ by the ABI posterior variance divided by 10,000,
since posterior draws from the fitted flow are generated independently, and
$v_{M,gj}$ by squared MCMC MCSE. The
finite-draw-corrected estimators are
\[
\widehat T_j=G^{-1}\sum_g\{(\widehat a_{gj}-\theta_{gj})^2-\widehat v_{A,gj}\},
\quad
\widehat R_j=G^{-1}\sum_g\{(\widehat m_{gj}-\theta_{gj})^2-\widehat v_{M,gj}\},
\]
\[
\widehat D_j=G^{-1}\sum_g\{(\widehat a_{gj}-\widehat m_{gj})^2
-\widehat v_{A,gj}-\widehat v_{M,gj}\},
\]
\[
\widehat C_j=G^{-1}\sum_g\{2(\widehat m_{gj}-\theta_{gj})
(\widehat a_{gj}-\widehat m_{gj})+2\widehat v_{M,gj}\}.
\]
These satisfy
$\widehat T_j=\widehat R_j+\widehat D_j+\widehat C_j$
up to numerical precision. The $2\widehat v_{M,gj}$ term in
$\widehat C_j$ corrects the finite-draw covariance induced by the appearance
of the same estimated MCMC mean in both factors of the interaction. 
Fig.~\ref{supp:fig:matched_decomposition} displays the resulting corrected
components across the four parameters.

The recovery-error decomposition summarizes recovery and ABI--MCMC agreement
through posterior means. We therefore complement it with diagnostics of
posterior uncertainty, marginal distributional shape, and the edge-level
boundary probabilities that are the primary inferential target.

Table~\ref{supp:tab:matched_parameter_agreement} addresses two questions
distinct from the squared-error decomposition. For parameter $j$, the reported
ABI and matched-MCMC mean absolute errors (MAEs) are
\[
\operatorname{MAE}_{A,j}
=
\frac{1}{G}\sum_{g=1}^{G}
\left|\widehat a_{gj}-\theta_{gj}\right|,
\qquad
\operatorname{MAE}_{M,j}
=
\frac{1}{G}\sum_{g=1}^{G}
\left|\widehat m_{gj}-\theta_{gj}\right|,
\]
where $\widehat a_{gj}$ and $\widehat m_{gj}$ are the posterior means estimated
from the retained ABI and MCMC draws. Coverage is the fraction of the 100
datasets in which the corresponding central 95\% posterior interval contains
the generating value. Mean correlation is the Pearson correlation, across
datasets, between the 100 paired ABI and MCMC posterior means; it is not a
within-posterior parameter correlation. Width ratio is the median, across
datasets, of the ABI 95\% interval width divided by the MCMC width.

\begin{table}[htbp!]
\centering
\small
\caption{Parameter recovery and ABI--MCMC posterior agreement over the 100 matched datasets. Width ratio is the median ABI-to-MCMC 95\% interval-width ratio.}
\label{supp:tab:matched_parameter_agreement}
\resizebox{\textwidth}{!}{%
\begin{tabular}{lcccccc}
\hline
Parameter & ABI MAE & MCMC MAE & ABI coverage & MCMC coverage & Posterior-mean corr. & Width ratio \\
\hline
$\beta_0$    & 0.019 & 0.013 & 0.900 & 0.890 & 1.000 & 1.529 \\
$\sigma_w^2$ & 0.110 & 0.116 & 0.960 & 0.890 & 0.959 & 1.150 \\
$\eta$       & 0.115 & 0.081 & 0.950 & 0.940 & 0.835 & 1.566 \\
$\rho$       & 0.099 & 0.078 & 0.960 & 0.900 & 0.933 & 1.405 \\
\hline
\end{tabular}%
}
\end{table}

ABI posterior-mean MAE is larger for $\beta_0$, $\eta$, and $\rho$, whereas it
is slightly smaller for $\sigma_w^2$. Coverage remains between $0.890$ and
$0.960$ for both methods. Posterior means track one another most closely for
$\beta_0$ and least closely for $\eta$, and all median width ratios exceed
one, indicating wider ABI intervals. For each dataset, interval overlap was
also measured as the length of the intersection divided by the width of the
shorter interval; its mean across datasets ranges from $0.948$ to $0.986$ over
the four parameters. Fig.~\ref{supp:fig:matched_truth_recovery} shows the
dataset-level posterior means and intervals against the generating values,
whereas Fig.~\ref{supp:fig:matched_parameter_scatter} directly plots each ABI
posterior mean against its matched-MCMC counterpart. The first view assesses
recovery relative to truth; the second isolates agreement between methods. 
Figure~\ref{supp:fig:matched_recovery_bars} provides the corresponding
aggregate comparison of mean absolute error and empirical interval coverage.
Figure~\ref{supp:fig:matched_bias_width} complements these summaries by
showing the across-dataset distributions of posterior-mean errors and
95\% interval widths.

To compare marginal posterior shape rather than posterior means alone, let
$F_{A,gj}$ and $F_{M,gj}$ denote the empirical marginal posterior
distributions of the ABI and matched-MCMC draws for dataset $g$ and parameter
$j$. Their one-dimensional Wasserstein distance is
\[
W_1(F_{A,gj},F_{M,gj})
=
\int_0^1
\left|
F_{A,gj}^{-1}(u)-F_{M,gj}^{-1}(u)
\right|\,du.
\]
The reported scaled distance is
$W_1(F_{A,gj},F_{M,gj})/s_{M,gj}$, where $s_{M,gj}$ is the matched-MCMC
posterior standard deviation. Fig.~\ref{supp:fig:matched_posterior} displays
the 100 dataset-specific scaled distances and interval-width ratios for each
parameter. Median scaled Wasserstein distances range from $0.404$ for
$\sigma_w^2$ to $0.991$ for $\eta$; the long upper tails show that a smaller
number of datasets have substantially larger distributional differences.

Because posterior boundary probability is the primary edge-level target, we
compared the ABI and MCMC probabilities across all geographic edges within
each dataset. Let
$\mathcal B_{A,g}=\{e:\widehat p_{e_{A,g}}>0.5\}$ and
$\mathcal B_{M,g}=\{e:\widehat p_{e_{M,g}}>0.5\}$ be the corresponding
median-probability boundary sets. Their Jaccard index is
\[
J_g
=
\frac{
|\mathcal B_{A,g}\cap\mathcal B_{M,g}|
}{
|\mathcal B_{A,g}\cup\mathcal B_{M,g}|
}.
\]
Following the implemented convention, $J_g=1$ when both sets are empty.
The resulting within-dataset Pearson correlations have mean $0.908$ and median
$0.947$ across the 100 datasets, and the mean of the within-dataset edgewise
absolute differences is $0.071$. The Jaccard indices have mean $0.575$ and
median $0.675$; ABI and MCMC select 91.650 and 86.850 boundaries per dataset
on average.

Agreement between methods does not by itself establish recovery of the true
boundary indicators. Table~\ref{supp:tab:matched_boundary_recovery} therefore
reports truth-based edge metrics for each method, and
Fig.~\ref{supp:fig:matched_boundary_metrics} displays the same comparison.
AUROC, average precision, and sensitivity are averaged over the 83 datasets
containing at least one true boundary; Brier score, specificity, and boundary-count
coverage use all 100 datasets. As expected under the shared scalar threshold mechanism, both methods show
perfect within-dataset boundary ranking by AUROC and average precision. Matched MCMC has the lower mean Brier score and slightly higher
median-probability sensitivity and specificity, while the true boundary count
lies in the corresponding 95\% posterior interval in $97\%$ of ABI datasets
and $96\%$ of MCMC datasets.

\begin{table}[htbp!]
\centering
\small
\caption{Truth-based boundary recovery in the 100 matched datasets. Entries
are means across eligible datasets. AUROC, average precision (AP), and
median-probability-model (MPM) sensitivity use the 83 datasets containing at
least one true boundary; all other columns use all 100 datasets. Count coverage
is the proportion for which the 95\% posterior interval for the number of
boundaries contains the true count.}
\label{supp:tab:matched_boundary_recovery}
\resizebox{\textwidth}{!}{%
\begin{tabular}{lcccccc}
\hline
Method & AUROC & AP & Brier score & MPM sensitivity & MPM specificity & Count coverage \\
\hline
ABI          & 1.000 & 1.000 & 0.062 & 0.689 & 0.964 & 0.970 \\
Matched MCMC & 1.000 & 1.000 & 0.040 & 0.734 & 0.984 & 0.960 \\
\hline
\end{tabular}%
}
\end{table}

Finally, we record posterior-generation time for this matched experiment as a
computational descriptor rather than an accuracy criterion.
Mean posterior-generation times are $0.772$ seconds per dataset for ABI and
$2.707$ seconds for MCMC, with a mean dataset-specific MCMC-to-ABI ratio of
$3.578$. These timings compare posterior generation after training and do not
include the one-time ABI training cost. Figure~\ref{supp:fig:matched_runtime} 
shows the paired dataset-level runtimes
and the distribution of the matched-MCMC-to-ABI elapsed-time ratio.

\begin{figure}[htbp!]\centering
\includegraphics[width=\textwidth]{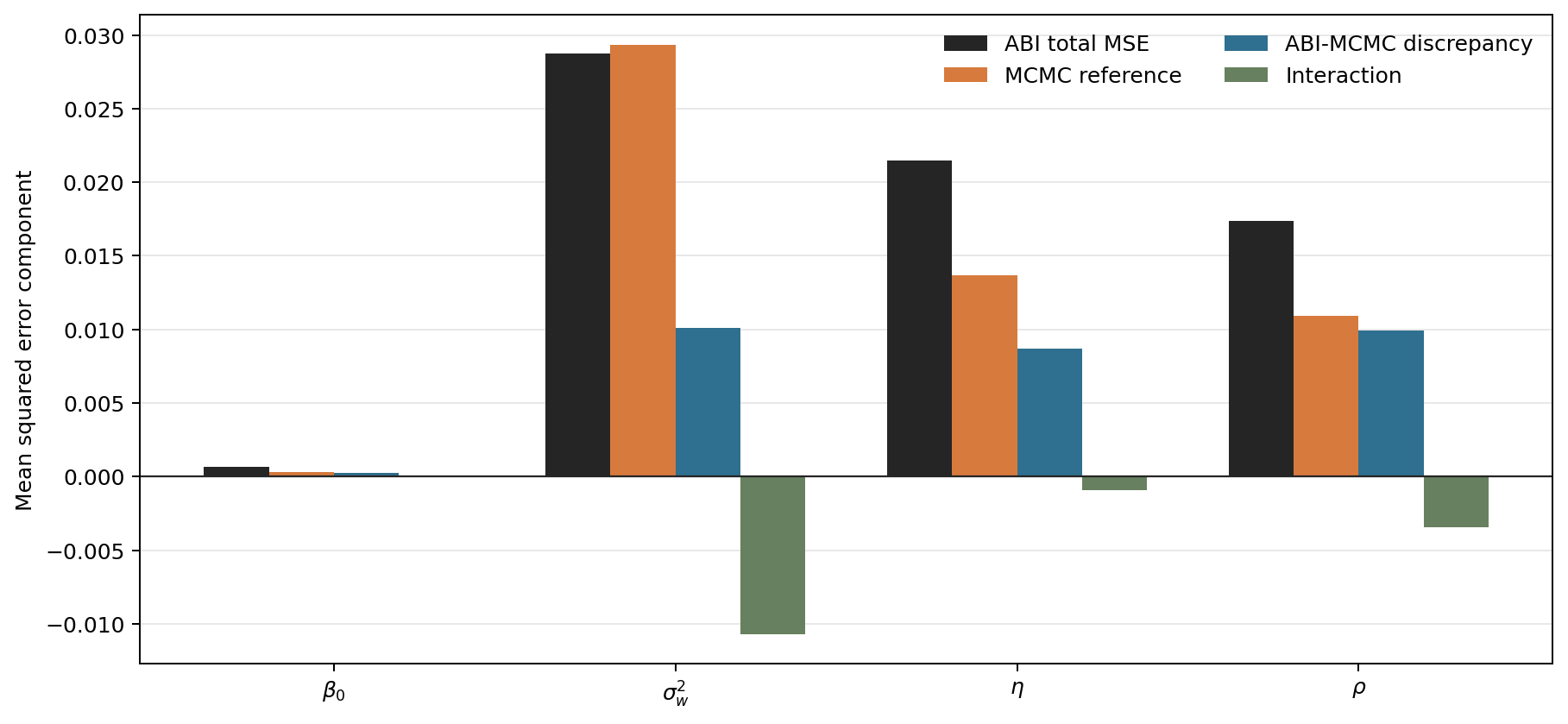}
\caption{Finite-draw-corrected posterior-mean recovery-error decomposition for
the four model parameters, averaged across 100 held-out datasets. Bars show
ABI posterior-mean MSE $\widehat T_j$, reference-posterior MSE $\widehat R_j$, ABI–MCMC posterior-mean discrepancy
$\widehat D_j$, and interaction $\widehat C_j$; the latter is
not a variance component and can be negative.}
\label{supp:fig:matched_decomposition}
\end{figure}

\begin{figure}[htbp!]
\centering
\includegraphics[width=\textwidth]{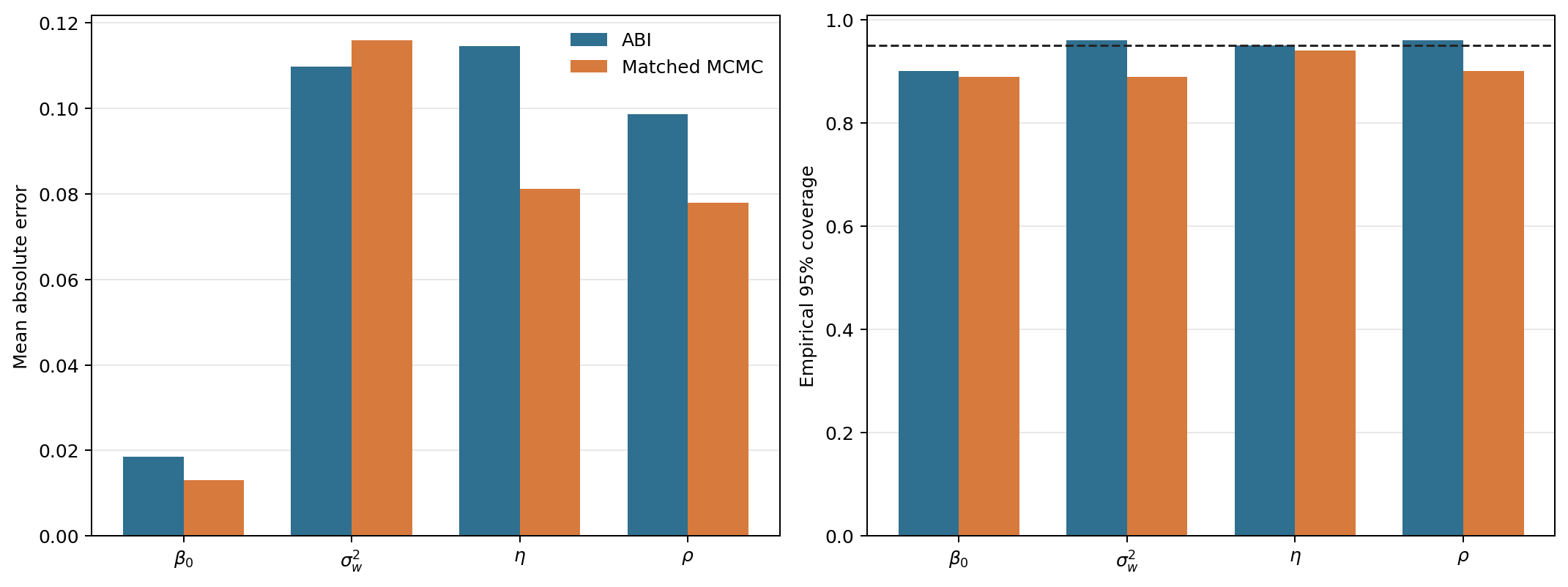}
\caption{Aggregate parameter recovery over the 100 matched datasets.
Left: mean absolute error of the posterior means relative to the generating
values. Right: empirical coverage of the central 95\% posterior intervals;
the dashed line denotes the nominal coverage level.}
\label{supp:fig:matched_recovery_bars}
\end{figure}

\begin{figure}[p]\centering
\includegraphics[width=\textwidth]{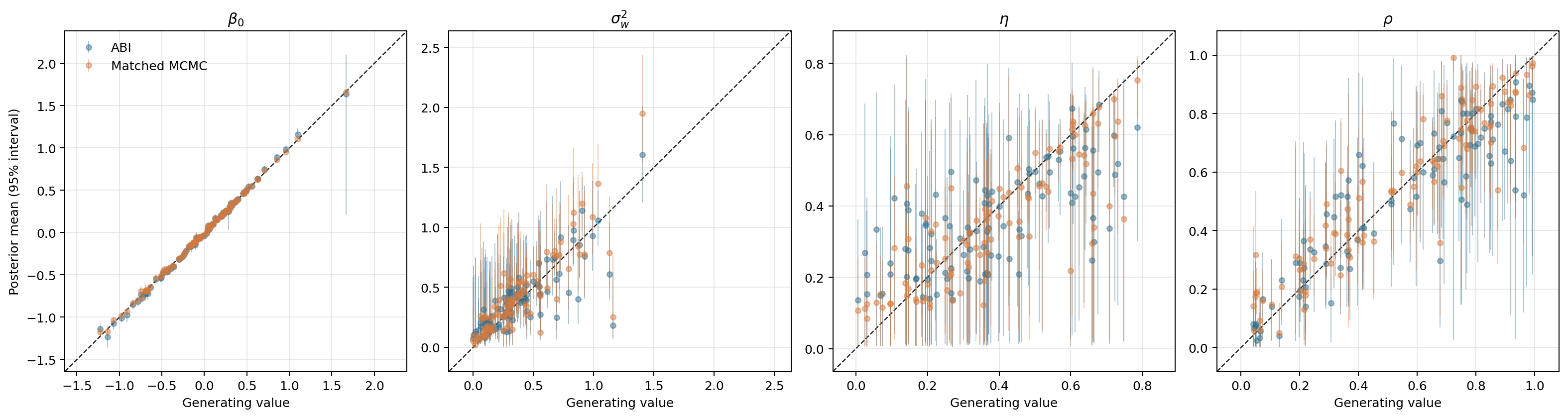}
\caption{Parameter recovery over the 100 matched datasets. Points are posterior
means and vertical lines are central 95\% posterior intervals for ABI-DAGAR and
matched MCMC; the horizontal coordinate is the generating value. The dashed
line denotes equality.}
\label{supp:fig:matched_truth_recovery}
\end{figure}

\begin{figure}[p]\centering
\includegraphics[width=\textwidth]{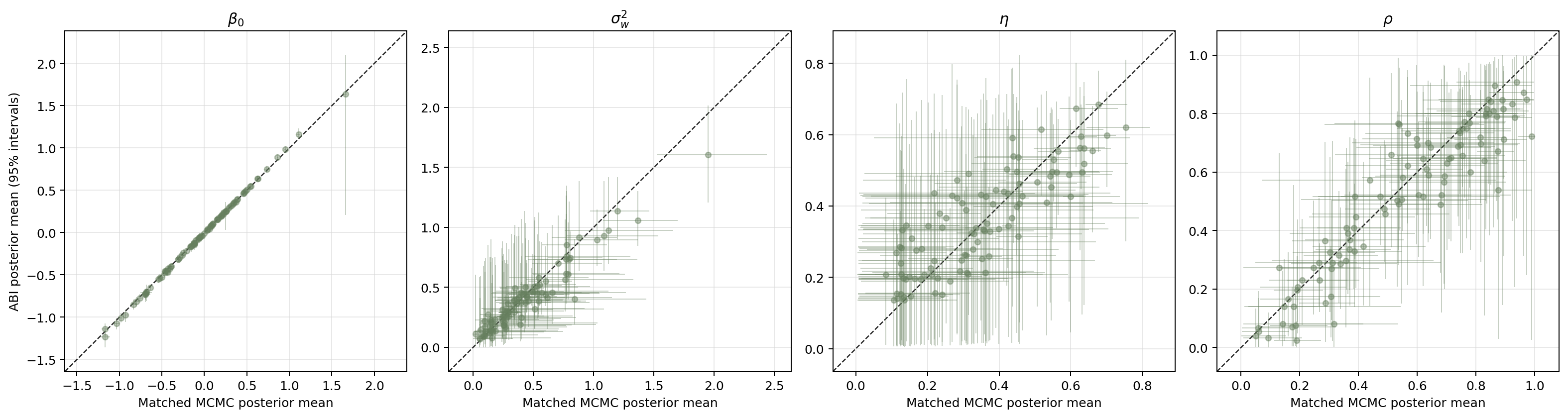}
\caption{Dataset-level agreement between ABI-DAGAR and matched MCMC. Each point
pairs the two posterior means for one dataset; horizontal and vertical lines
are the corresponding matched-MCMC and ABI-DAGAR central 95\% posterior
intervals. The dashed line denotes equal posterior means.}
\label{supp:fig:matched_parameter_scatter}
\end{figure}

\begin{figure}[htbp!]
\centering
\includegraphics[width=\textwidth]
{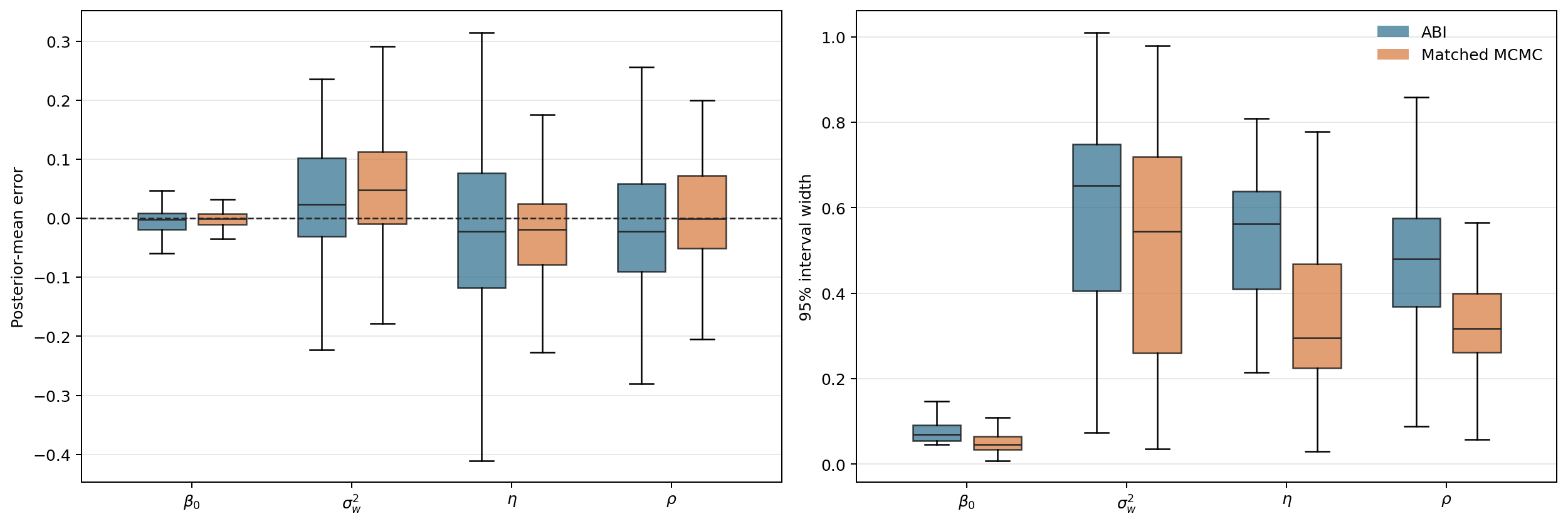}
\caption{Across-dataset parameter recovery and posterior uncertainty in the
100 matched datasets. Left: distributions of posterior-mean errors relative
to the generating values. Right: distributions of central 95\% posterior
interval widths. Boxplots compare ABI-DAGAR with matched MCMC for the four
reported model parameters.}
\label{supp:fig:matched_bias_width}
\end{figure}

\begin{figure}[htbp!]\centering
\includegraphics[width=\textwidth]{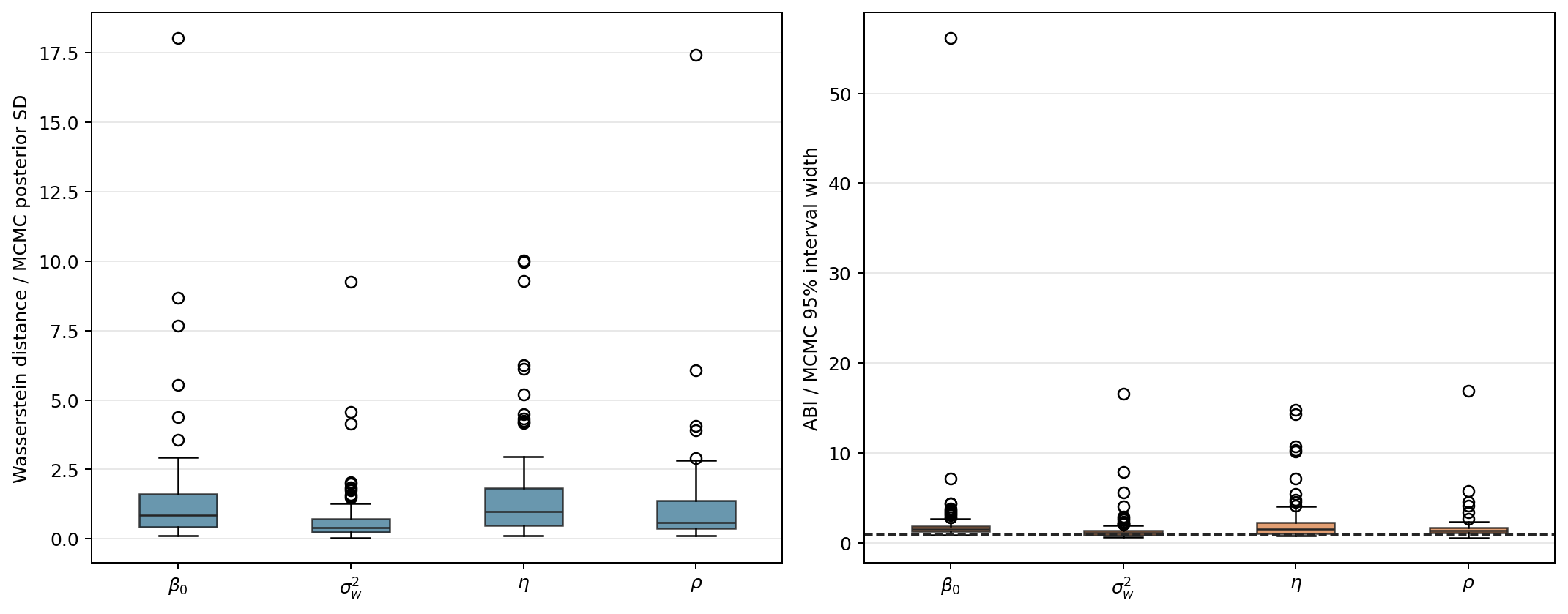}
\caption{Marginal posterior agreement across the 100 matched datasets. Each
boxplot contains one value per dataset. Left: Wasserstein distance between ABI
and MCMC marginal draws, scaled by the MCMC posterior standard deviation.
Right: ABI-to-MCMC 95\% interval-width ratio; the dashed line at one denotes
equal widths.}
\label{supp:fig:matched_posterior}
\end{figure}

\begin{figure}[htbp!]\centering
\includegraphics[width=\textwidth]{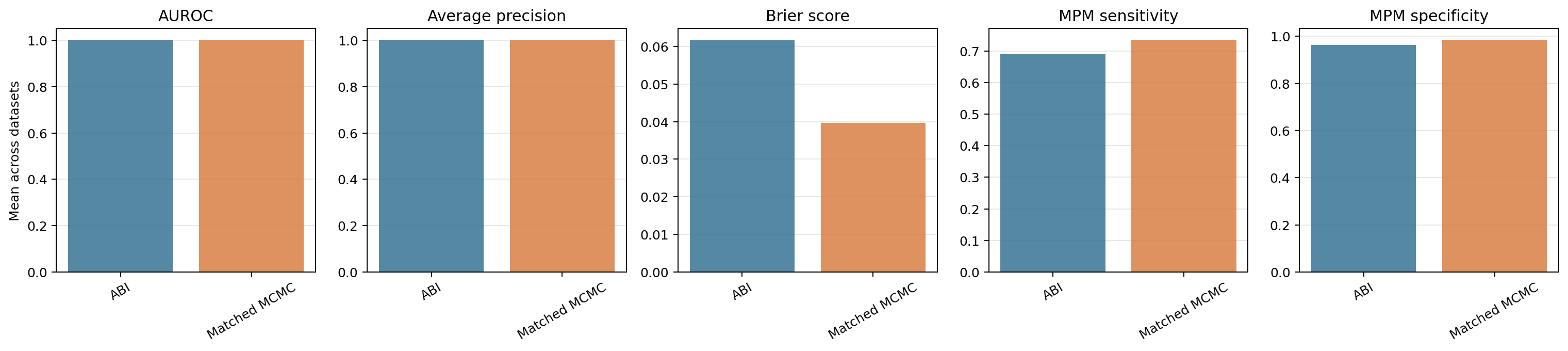}
\caption{Truth-based boundary recovery for ABI-DAGAR and matched MCMC. Bars are
means across the eligible matched datasets. AUROC, average precision, and MPM
sensitivity use the 83 datasets containing at least one true boundary; Brier
score and MPM specificity use all 100 datasets. Lower Brier score is better;
higher values are better for the other metrics.}
\label{supp:fig:matched_boundary_metrics}
\end{figure}

\begin{figure}[htbp!]
\centering
\includegraphics[width=\textwidth]{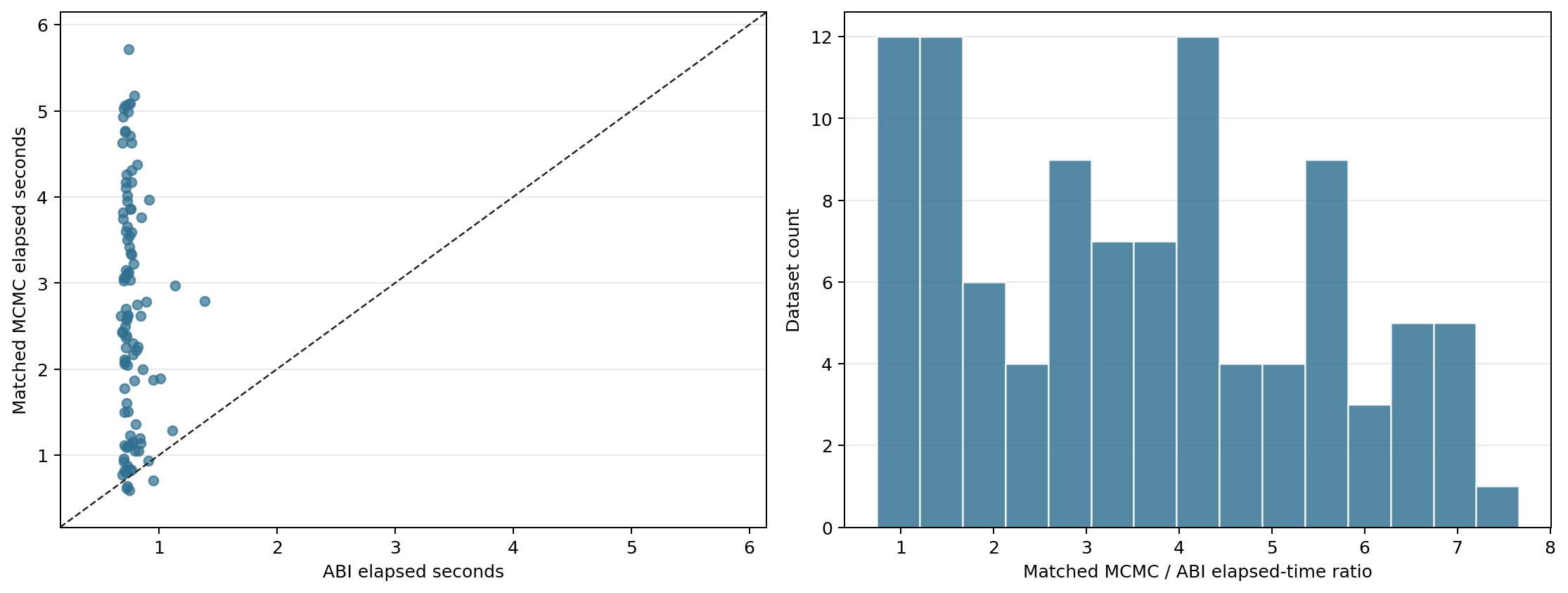}
\caption{Per-dataset posterior-generation runtime in the 100 matched
datasets. Left: matched-MCMC elapsed time against ABI-DAGAR elapsed time; the
dashed line denotes equal runtimes. Right: distribution of the
matched-MCMC-to-ABI elapsed-time ratio. Ratios greater than one indicate
faster ABI-DAGAR inference. The one-time ABI training cost is excluded.}
\label{supp:fig:matched_runtime}
\end{figure}

\FloatBarrier
\subsection{Parameter-posterior replicated-data checks}
\label{supp:ppc}

Parameter-posterior replication was conducted on 40 held-out datasets selected
to span the graph-size range, using $B=100$ replicates per dataset. For each
replicate, the implementation selected a draw of
$(\beta_0,\sigma_w^2,\eta,\rho)$ from the approximate posterior, constructed
the filtered graph, generated a
new centered DAGAR field from its conditional prior using the identity
ordering, and then generated new Poisson counts. It did not sample $\bs w$
from its posterior conditional on the observed counts. These are therefore
parameter-posterior replicated-data diagnostics rather than full latent-field
posterior predictive checks.

For node $i$, let
$[L_i^{\mathrm{rep}},U_i^{\mathrm{rep}}]$ be the central 95\% interval of
its $B$ replicated counts. The node-level count coverage for a dataset is
\[
\frac{1}{N}\sum_{i=1}^{N}
\mathbf 1\!\left\{
L_i^{\mathrm{rep}}\leq y_i\leq U_i^{\mathrm{rep}}
\right\},
\]
and Table~\ref{supp:tab:ppc} summarizes this dataset-level proportion across
the 40 datasets.

For the residual-spatial diagnostic, define
$r_i=\log(y_i+0.5)-\log(e_i)$,
$\bar r=N^{-1}\sum_i r_i$, and
$W_{ij}=a_{ij}/d_i$, where $d_i=\sum_ja_{ij}$, so that $\bs{W}$ is the
row-standardized geographic weight matrix. The implemented Moran-type
statistic is
\[
T_{\mathrm M}(\bs y)
=
\frac{
\sum_{i=1}^{N}(r_i-\bar r)(\bs{W}\bs r)_i
}{
\sum_{i=1}^{N}(r_i-\bar r)^2
}.
\]

For adjacent pair $(i,j)$, let
$z_{ij}=|x_i-x_j|/z_{\mathrm{med}}$, where $z_{\mathrm{med}}$ is the
median covariate dissimilarity over geographic edges. The low-, medium-, and
high-dissimilarity sets are
\[
\mathcal E_{\mathrm L}=\{(i,j):z_{ij}\leq0.75\},\qquad
\mathcal E_{\mathrm M}=\{(i,j):0.75<z_{ij}\leq1.25\},\qquad
\mathcal E_{\mathrm H}=\{(i,j):z_{ij}>1.25\}.
\]
For $k\in\{\mathrm L,\mathrm M,\mathrm H\}$, the corresponding edge-contrast
summary is
\[
T_k(\bs y)
=
\frac{1}{|\mathcal E_k|}
\sum_{(i,j)\in\mathcal E_k}|r_i-r_j|.
\]

For any such summary $T$, the replicated-data probability is
\[
\widehat p_T
=
\frac{1}{B}\sum_{b=1}^{B}
\mathbf 1\!\left\{
T(\bs y^{\mathrm{rep},b})\leq T(\bs y)
\right\}.
\]
Values near zero or one place the observed summary in a tail of its replicated
distribution, whereas values near one-half indicate that it is central in
that distribution. The mean count coverage was close to its nominal level,
and the Moran-type and edge-contrast probabilities did not show a consistent
one-sided displacement across datasets or dissimilarity regimes.
Table~\ref{supp:tab:ppc} reports their means, standard deviations, medians,
and interquartile ranges across datasets.

\begin{table}[htbp!]
\centering
\small
\caption{Additional parameter-posterior replicated-data summaries on held-out simulated datasets.}
\label{supp:tab:ppc}
\begin{tabular}{lcccc}
\hline
\textbf{Metric}
& \textbf{Mean}
& \textbf{SD}
& \textbf{Median}
& \textbf{IQR} \\
\hline
Node-level count coverage (95\% replicated-count interval)
& 0.942 & 0.017 & 0.941 & 0.021 \\
Moran-type replicated-data probability
& 0.493 & 0.197 & 0.495 & 0.225 \\
Low-dissimilarity edge-contrast probability
& 0.532 & 0.262 & 0.545 & 0.410 \\
Mid-dissimilarity edge-contrast probability
& 0.489 & 0.253 & 0.460 & 0.365 \\
High-dissimilarity edge-contrast probability
& 0.555 & 0.239 & 0.620 & 0.288 \\
\hline
\end{tabular}
\end{table}

Fig.~\ref{supp:fig:ppc} displays the dataset-level count coverage and
replicated-data probabilities for the Moran-type and edge-contrast summaries.
Its fourth panel is a separate boundary-count calibration diagnostic based on
all 200 validation datasets and is not one of the parameter-posterior
replicated-data summaries.

\begin{figure}[htbp!]
\centering
\includegraphics[width=\textwidth]{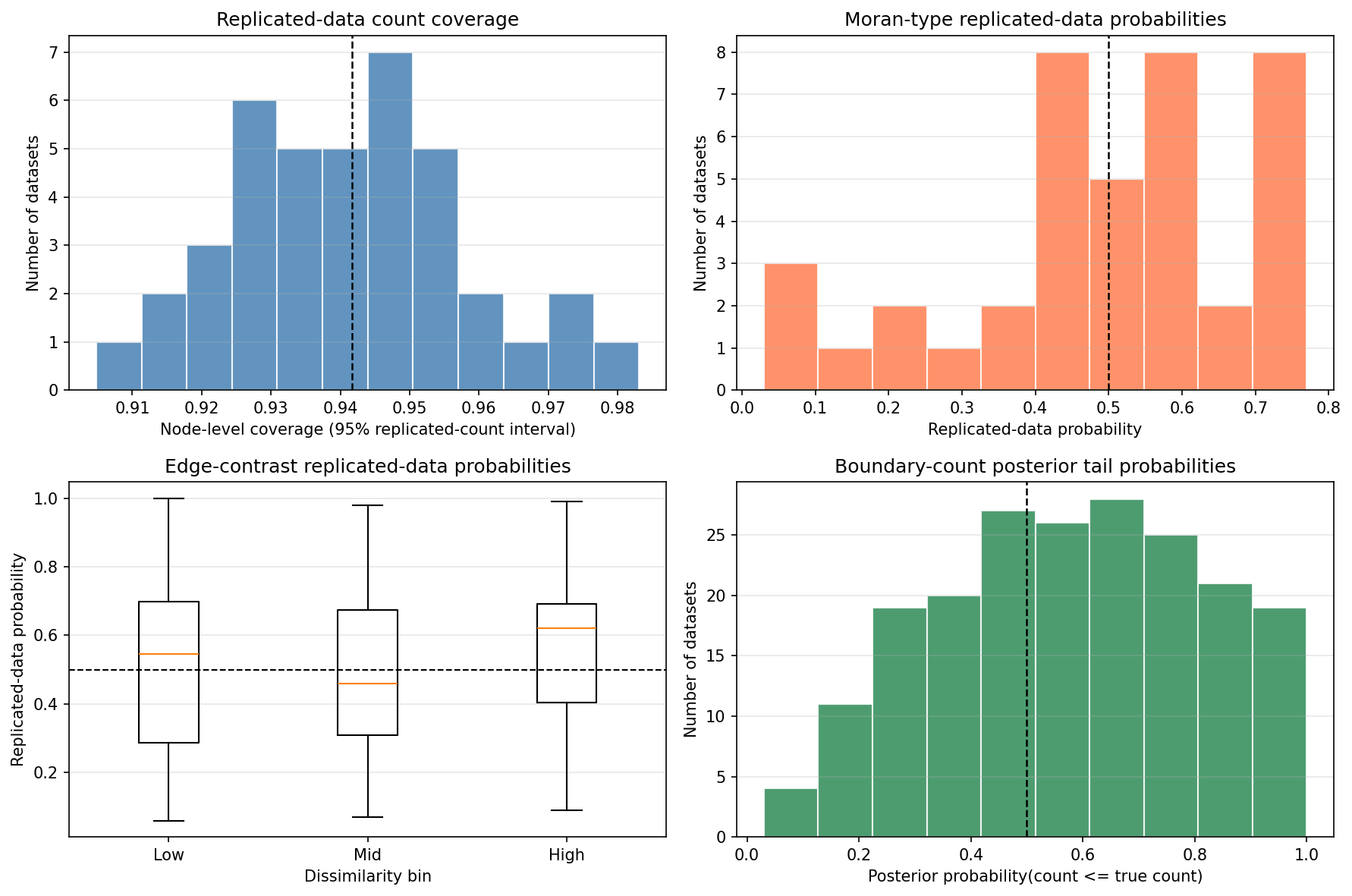}
\caption{Replicated-data and boundary-count diagnostics. Top left: distribution
across 40 datasets of the proportions of observed node counts contained in
their node-specific central 95\% replicated-count intervals; the dashed line
marks the mean across datasets. Top right: Moran-type replicated-data
probabilities across the same 40 datasets, with the dashed line at $0.5$.
Bottom left: replicated-data probabilities for mean absolute adjacent-edge
contrasts in the low-, medium-, and high-dissimilarity bins; boxes summarize
the 40 dataset-specific values and the dashed line is at $0.5$. Each of these
three panels uses 100 replicates per dataset. Bottom right: posterior
probabilities that the boundary count is no greater than the true count,
computed from posterior boundary-count distributions for all 200 validation
datasets; the dashed line is at $0.5$. The bottom-right panel is a
boundary-count calibration diagnostic, not a parameter-posterior
replicated-data summary.}
\label{supp:fig:ppc}
\end{figure}

\FloatBarrier

\subsection{Representation ablation}
\label{supp:ablation}

We compared the full representation with six alternatives: removing the core
observation, graph-topology, dissimilarity, local spatial, or global graph
feature blocks one at a time, and retaining only the core observation
features. Each representation was used to train a separate posterior
approximator from scratch under the same architecture and training protocol.
Evaluation used 4,050 held-out datasets, with 50 datasets at each graph size
$N=40,\ldots,120$.

The full representation yielded an average MAE of
$0.108$, an average RMSE of $0.143$, empirical
95\% interval coverage of $0.941$, and an arithmetic mean of the four parameter-specific Spearman rank correlations
of $0.788$. Recovery was strongest for $\beta_0$, whereas $\eta$ was the most
difficult parameter. Fig.~\ref{supp:fig:ablation_delta} reports changes in
scalar recovery relative to the full-representation baseline.

The core-observation-only representation continued to recover $\beta_0$
accurately, with an MAE of $0.017$ and a correlation of $0.999$, but
substantially degraded inference for parameters requiring spatial
information: correlations fell to $0.172$ for $\eta$ and $0.464$ for
$\rho$. Removing the core observation block produced the largest degradation
among the one-block ablations, whereas removing the individual non-core
blocks generally produced smaller changes. These results indicate that the
core observations are essential and that the graph-aware components provide
complementary, partly redundant information about spatial structure.

At the edge level, the full representation achieved a pooled AUROC of
$0.949$, average precision of $0.792$, and a Brier score of $0.070$,
whereas the core-observation-only representation achieved an AUROC of
$0.907$, average precision of $0.600$, and a Brier score of $0.094$.
Fig.~\ref{supp:fig:ablation_boundary_delta} reports the corresponding changes
in boundary-probability quality. Together, the scalar and edge-level results
show that the graph-aware representation contributes most clearly to
inference on the spatial-structure parameters and posterior boundary
probabilities.

\begin{figure}[htbp!]
\centering
\includegraphics[width=\textwidth]
{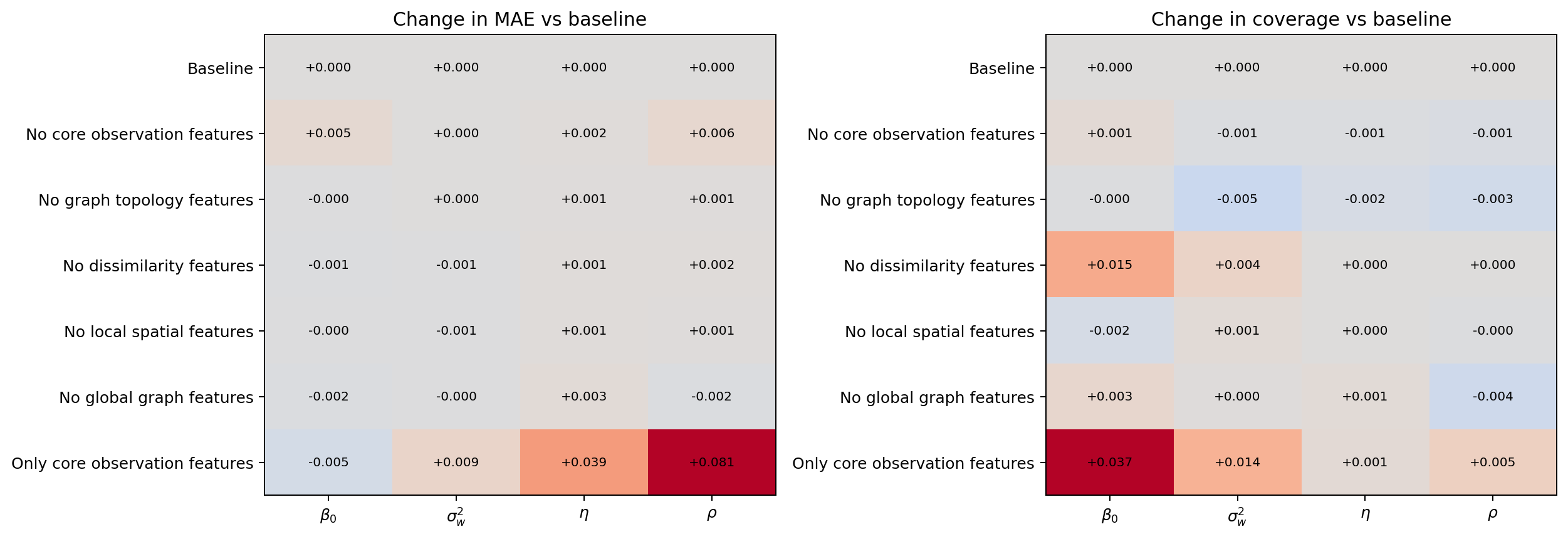}
\caption{Changes in scalar recovery relative to the full-representation
baseline. Every cell is the metric for the listed ablation minus the
corresponding full-representation metric, so the baseline row is zero.
Positive MAE changes indicate deterioration. Coverage changes indicate only
the direction of change relative to the baseline; calibration is improved only
when the resulting coverage is closer to the nominal level of $0.95$, so a
positive change is not necessarily an improvement.}
\label{supp:fig:ablation_delta}
\end{figure}

\begin{figure}[htbp!]
\centering
\includegraphics[width=\textwidth]
{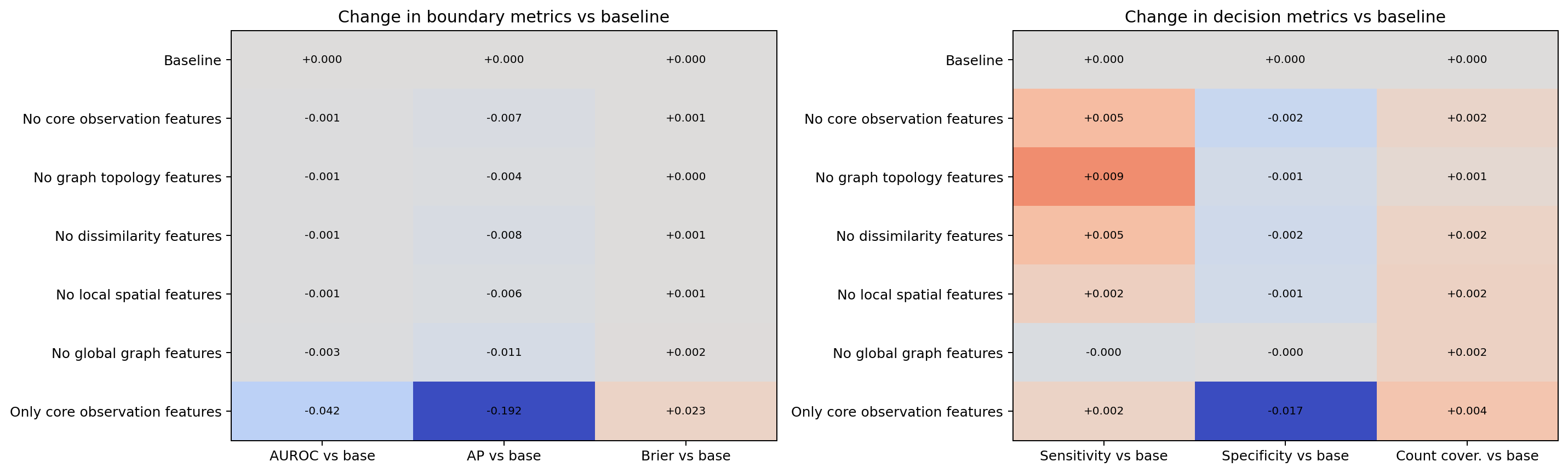}
\caption{Changes in boundary-probability and boundary-decision metrics relative
to the full-representation baseline. Every cell is the metric for the listed
ablation minus the corresponding full-representation metric, so the baseline
row is zero. Negative changes in AUROC, average precision, sensitivity, or
specificity indicate deterioration, whereas positive changes in Brier score
indicate deterioration because lower Brier scores are better. Count-coverage
changes indicate their direction relative to the baseline; improvement depends
on whether the resulting coverage is closer to the nominal level of $0.95$.}
\label{supp:fig:ablation_boundary_delta}
\end{figure}

\FloatBarrier
\subsection{Computational details}
\label{supp:computation}

Table~\ref{supp:tab:computation} summarizes the computational configuration and
cost of the simulation study, including the training setup and hardware, the
size of the held-out validation experiment, posterior-sampling time and memory
use, and the cost of the replicated-data diagnostics.

\begin{table}[htbp!]
\centering
\small
\caption{Computational details for the simulation study.}
\label{supp:tab:computation}
\resizebox{\textwidth}{!}{%
\begin{tabular}{ll}
\hline
\textbf{Quantity} & \textbf{Value} \\
\hline
Hardware
& Intel(R) Core(TM) i7-10750H CPU @ 2.600GHz \\
Training epochs
& 100 \\
Batch size
& 64 \\
Batches per epoch
& 200 \\
Total training simulations
& 1,280,000 \\
Training time
& 5 hours and 45 minutes \\
Held-out validation datasets
& 200 \\
Posterior draws per dataset
& 10,000 \\
Validation sampling time (total)
& 141.10 seconds \\
Validation sampling time (per dataset)
& 0.71 seconds \\
Validation memory
& 0.813 GB \\
Replicated-data datasets / replicates per dataset
& 40 / 100 \\
Replicated-data runtime
& 24.81 seconds \\
\hline
Matched-comparison datasets & 100 \\
Matched ABI / MCMC retained draws per dataset & 10,000 / 10,000 \\
Matched MCMC iterations / burn-in & 20,000 / 10,000 \\
Matched ABI / MCMC total posterior-generation time & 77.15 / 270.75 seconds \\
Matched ABI / MCMC mean time per dataset & 0.77 / 2.71 seconds \\
\end{tabular}
}
\end{table}

\FloatBarrier
\section{Additional spatial health application results}
\label{supp:applications}

This section provides data-processing details, additional diagnostics, and
dataset-specific benchmark results for the three spatial health applications
analyzed in Section~5 of the main manuscript.

\subsection{Empirical data definitions and harmonization}
\label{supp:data_provenance}

The datasets and scientific context are described in Section~2 of the main
manuscript. Here we record the additional preprocessing choices required for
reproduction.

For the Greater Glasgow data, the analysis uses the respiratory-disease data
distributed with \texttt{CARBayesdata}; neighboring areas are defined by queen
contiguity.

The processed Surveillance, Epidemiology, and End Results (SEER) analysis for the California data follows
\citet{gao2023spatial}, with expected counts constructed across the 38
age--sex strata described in the main manuscript; the spatial graph uses queen
contiguity. For six groups of sparsely populated counties, the tobacco report provides
pooled smoking-prevalence estimates; the reported group estimate was assigned
to each constituent county.

Reported South Korean parent-city smoking values were used where a
separate municipality estimate was unavailable. Municipal geometries use the
May 2019 boundary system, and the analysis retains the largest connected
component, excluding nine geographically disconnected units; queen contiguity
defines neighboring municipalities.

Across applications, counts, expected counts, boundary-driving covariates, and
adjacency matrices were aligned to a common analysis structure. The analysis
variables contain no missing values, and the boundary-driving covariates were
standardized identically in the Python and R workflows using their empirical
means and population standard deviations.

\subsection{Exploratory compatibility with the shared model class}
\label{supp:application_exploration}

We additionally assessed whether the empirical inputs occupy the deployment
regime represented during training. Fig.~\ref{supp:fig:deployment_support}
compares the three applications with input configurations generated from the
graph, covariate, and exposure components of the training simulator.

Relative to fixed-$N$ Delaunay reference graphs, all three empirical
contiguity graphs have fewer edges and lower mean degree, although their upper
degree quantiles remain supported. The empirical boundary-driving covariates
are also more spatially smooth than the iid covariates used during training:
their median neighboring dissimilarities are smaller, their graph-specific
scaling factors $M$ are larger, and their Moran coefficients lie above the
central simulated range. Median log-exposure is supported for California and
South Korea, whereas Greater Glasgow lies below the corresponding fixed-$N$
interval.

For Greater Glasgow, California, and South Korea, respectively, the median
standardized neighboring dissimilarities are $0.599$, $0.704$, and $0.698$,
yielding graph-specific scaling factors $M=1.158$, $0.985$, and $0.993$.

These departures provide a realistic test of transfer beyond the simulator's
most typical configurations; the dataset-specific benchmarks in
Section~\ref{supp:real_mcmc} assess whether the principal empirical conclusions
remain stable.

\begin{figure}[t]
\centering
\includegraphics[width=\textwidth]
{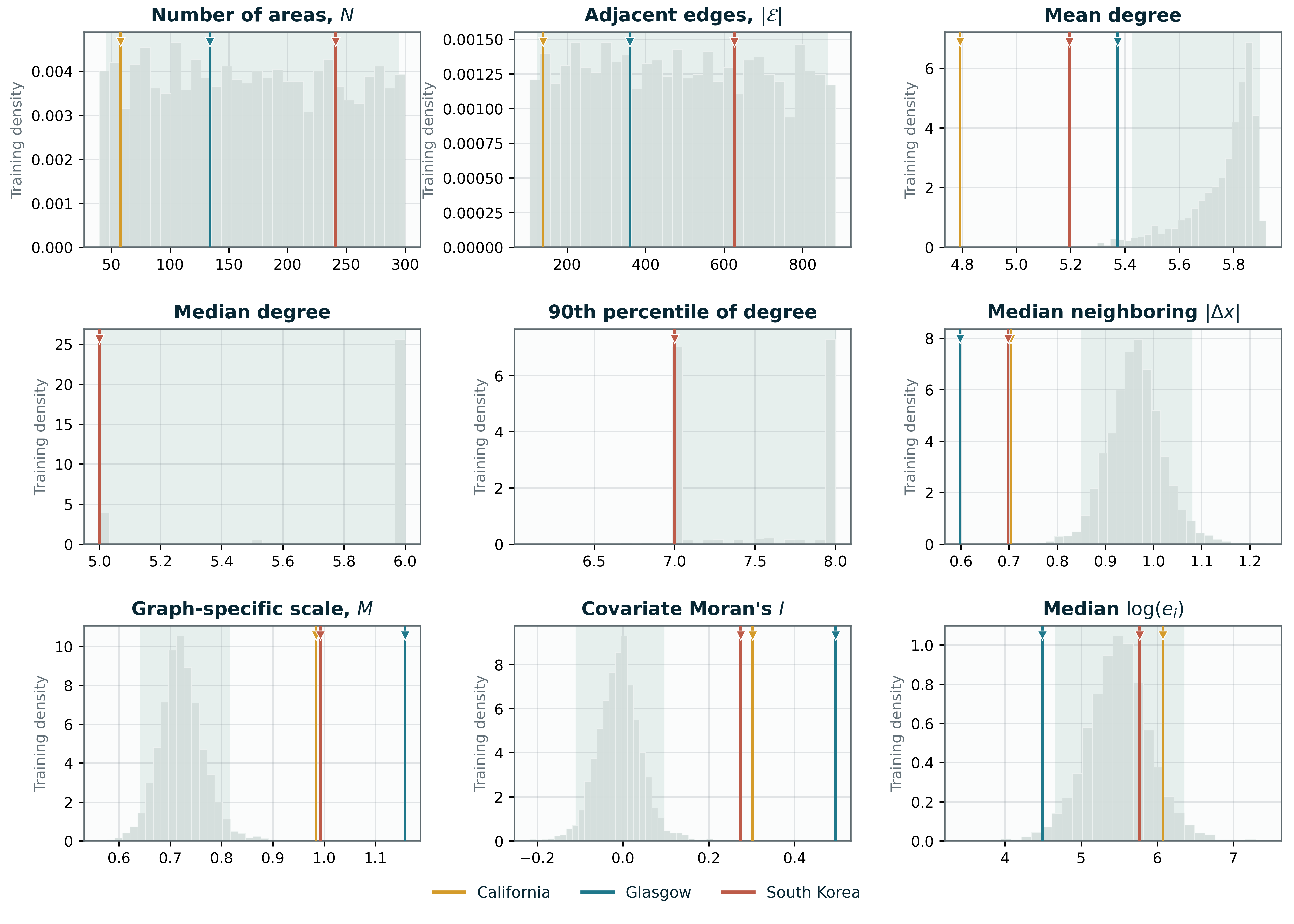}
\caption{Empirical inputs relative to the original ABI-DAGAR training
distribution. Histograms represent 3,000 input configurations generated from
the graph, covariate, and exposure components of the original simulator, and
shaded regions denote central 95\% intervals. Colored lines and triangles mark
the three empirical applications. Although all applications fall within the
explicit graph-size range, their contiguity graphs are sparser and their
covariates are smoother than the iid-covariate Delaunay training
configurations.}
\label{supp:fig:deployment_support}
\end{figure}

Fig.~\ref{supp:fig:observed_expected} compares observed and expected counts.
The scales differ substantially across applications and areas, but the
offset-based representation remains well defined throughout each graph. The
figure also shows why modeling counts with an explicit expected-count offset is
preferable to comparing unadjusted counts across spatial units with very
different underlying population sizes.

For a vector $\bs v=(v_1,\ldots,v_N)^\top$, Moran's $I$ was computed as
\[
I(\bs v)
=
\frac{N}{S_0}
\frac{
\sum_{i=1}^{N}\sum_{j=1}^{N}
a_{ij}(v_i-\bar v)(v_j-\bar v)
}{
\sum_{i=1}^{N}(v_i-\bar v)^2
},
\qquad
S_0=\sum_{i=1}^{N}\sum_{j=1}^{N}a_{ij},
\]
using the symmetric binary geographic adjacency matrix $A$. This calculation
was applied separately to the crude log-risk vector
$r_i=\log(y_i+0.5)-\log(e_i)$ and to the boundary-driving covariate.

Table~\ref{supp:tab:application_exploration} reports the numerical summaries
underlying the exploratory diagnostics in Section~2 of the main manuscript.

\begin{table}[htbp!]
\centering
\small
\caption{Exploratory characteristics of the three spatial health applications.
$I_r$ and $I_x$ are Moran's $I$ for crude log risk and the boundary-driving
covariate. ``Adj.'' is the mean absolute crude log-risk contrast over adjacent
pairs, and ``Non-neigh.'' is the corresponding mean for an equally sized random
sample of non-neighboring pairs. Q5/Q1 is the ratio of mean adjacent risk
contrasts in the highest and lowest covariate-dissimilarity quintiles.}
\label{supp:tab:application_exploration}
\resizebox{\textwidth}{!}{%
\begin{tabular}{lccccccccc}
\toprule
\textbf{Application} & $\bs{N}$ & \textbf{Edges} & \textbf{Mean degree}
& \textbf{Standardized risk, median [IQR]} & $\bs{I_r}$ & $\bs{I_x}$
& \textbf{Adj.} & \textbf{Non-neigh.} & \textbf{Q5/Q1} \\
\midrule
Greater Glasgow & 134 & 360 & 5.370 & 0.846 [0.593, 1.080] & 0.415 & 0.495 & 0.346 & 0.477 & 2.980 \\
California & 58 & 139 & 4.790 & 1.060 [0.942, 1.200] & 0.283 & 0.302 & 0.180 & 0.235 & 1.300 \\
South Korea & 241 & 626 & 5.200 & 1.030 [0.945, 1.120] & 0.466 & 0.274 & 0.105 & 0.140 & 1.380 \\
\bottomrule
\end{tabular}%
}
\end{table}

Fig.~\ref{supp:fig:covariate_risk} provides a separate descriptive view of the
relationship between the boundary-driving covariate and crude log risk.
Spearman correlations are $0.849$, $0.602$, and $0.428$ in Greater Glasgow,
California, and South Korea, respectively. These associations help establish
the scientific relevance of deprivation and smoking contrasts, but they are not
interpreted as causal effects and the covariates are not used as regressors in
the mean structure. Their role in the fitted model is to identify borders at
which spatial borrowing may be inappropriate.

\begin{figure}[htbp!]
\centering
\includegraphics[width=\textwidth]{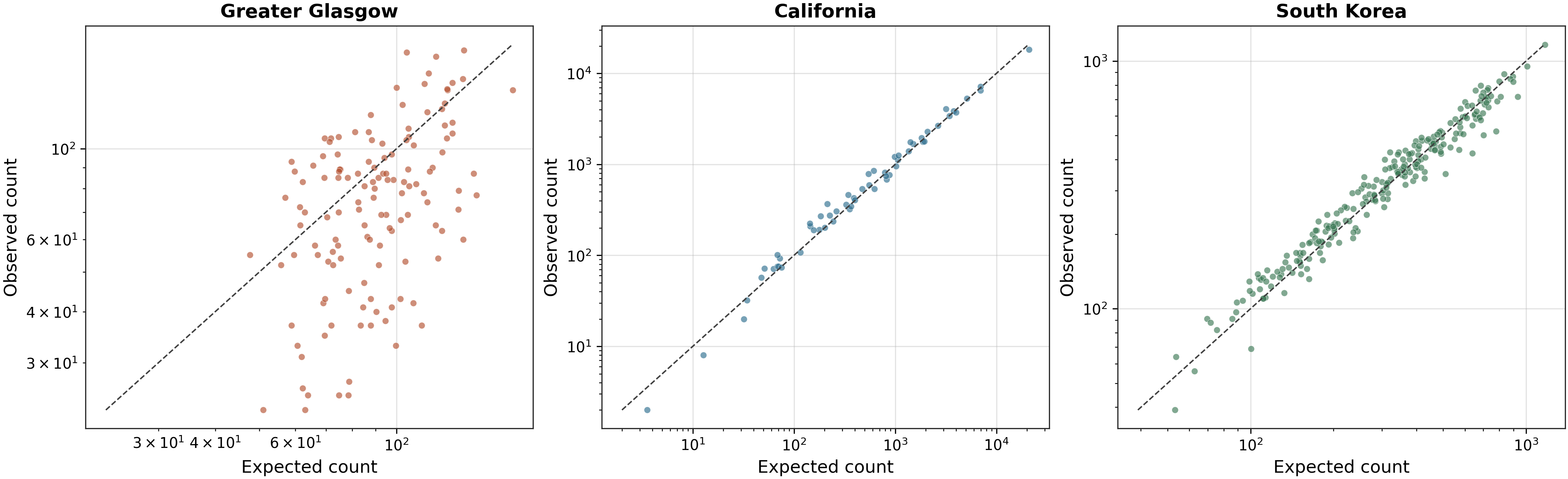}
\caption{Observed and expected counts in the three spatial health
applications. Both axes are logarithmic, and the dashed line is the equality
line. Left to right: Greater Glasgow, California, and South Korea.}
\label{supp:fig:observed_expected}
\end{figure}

\begin{figure}[htbp!]
\centering
\includegraphics[width=\textwidth]{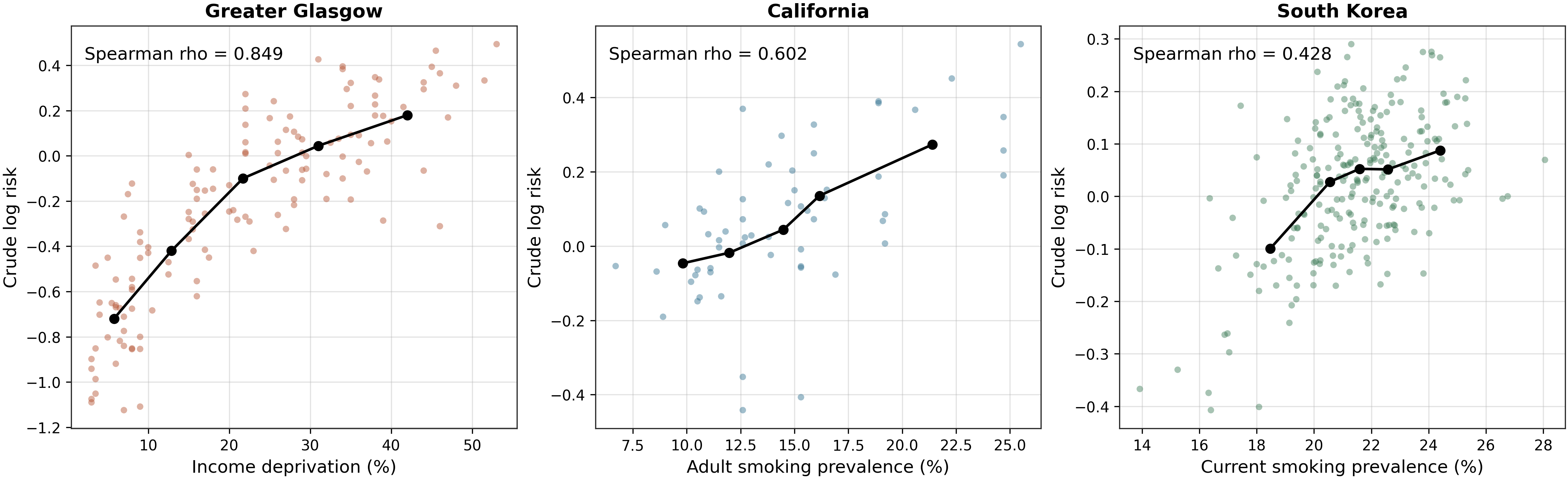}
\caption{Descriptive relationship between the boundary-driving covariate and
crude log risk. Points represent areas and the black line joins within-quintile
means. Left to right: Greater Glasgow, California, and South Korea.}
\label{supp:fig:covariate_risk}
\end{figure}

\FloatBarrier
\subsection{Geographic reference for selected boundary neighborhoods}
\label{supp:boundary_geography}

The application discussion in the main manuscript names several boundaries
with high ABI-DAGAR posterior probability. Fig.~\ref{supp:fig:boundary_geography}
places six high-probability boundary neighborhoods from each application in
their full geographic context, and Table~\ref{supp:tab:named_boundaries}
identifies the corresponding adjacent areas. To avoid presenting multiple
nearly identical edges around one area, the displayed examples were selected in
descending posterior-probability order subject to no area appearing in more
than two selected pairs. They are descriptive geographic references rather than
a separate inferential selection rule.

\begin{table}[htbp!]
\centering
\small
\setlength{\tabcolsep}{3pt}
\caption{Named high-probability ABI-DAGAR boundary neighborhoods displayed in
Fig.~\ref{supp:fig:boundary_geography}. $|\Delta x|$ is the raw neighboring
covariate difference in percentage points. The final column gives the crude
standardized risks on the two sides of the border: standardized hospitalization ratio (SHR) for Greater Glasgow, standardized incidence ratio (SIR)
for California, and standardized mortality ratio (SMR) for South Korea.}
\label{supp:tab:named_boundaries}
\begin{tabular}{llp{6.4cm}ccc}
\toprule
\textbf{Application} & \textbf{No.} & \textbf{Neighboring areas}
& $\bs{p_{ij}}$ & $\bs{|\Delta x|}$ & \textbf{Risk pair} \\
\midrule
Greater Glasgow & 1 & South Castlehill and Thorn--Drumchapel South & 0.956 & 44.000 & 0.331 / 1.178 \\
& 2 & Garrowhill East and Swinton--North Barlanark and Easterhouse South & 0.956 & 43.500 & 0.554 / 1.387 \\
& 3 & South Castlehill and Thorn--Drumry East & 0.955 & 42.500 & 0.331 / 1.582 \\
& 4 & Westerton West--Drumchapel South & 0.953 & 41.000 & 0.391 / 1.178 \\
& 5 & Kelvindale--Wyndford & 0.951 & 39.500 & 0.437 / 0.727 \\
& 6 & Garrowhill West--Barlanark & 0.945 & 36.000 & 0.420 / 0.931 \\
\midrule
California & 1 & Lake--Yolo & 0.743 & 14.000 & 1.719 / 1.014 \\
& 2 & Placer--Yuba & 0.731 & 13.300 & 1.057 / 1.566 \\
& 3 & Lake--Sonoma & 0.722 & 12.800 & 1.719 / 1.023 \\
& 4 & Del Norte--Humboldt & 0.669 & 10.300 & 1.286 / 1.344 \\
& 5 & Humboldt--Siskiyou & 0.669 & 10.300 & 1.344 / 1.206 \\
& 6 & Mendocino--Trinity & 0.652 & 9.700 & 1.160 / 1.405 \\
\midrule
South Korea & 1 & Sujeong-gu, Seongnam-si--Gwacheon-si & 0.826 & 11.160 & 0.991 / 0.688 \\
& 2 & Sujeong-gu, Seongnam-si--Bundang-gu, Seongnam-si & 0.804 & 9.840 & 0.991 / 0.718 \\
& 3 & Jungwon-gu, Seongnam-si--Bundang-gu, Seongnam-si & 0.796 & 9.400 & 1.195 / 0.718 \\
& 4 & Michuhol-gu--Yeonsu-gu & 0.781 & 8.660 & 1.071 / 0.842 \\
& 5 & Gwanak-gu--Gwacheon-si & 0.763 & 7.920 & 0.889 / 0.688 \\
& 6 & Cheoin-gu, Yongin-si--Suji-gu, Yongin-si & 0.755 & 7.640 & 1.109 / 0.687 \\
\bottomrule
\end{tabular}
\end{table}

Application-specific local zoom atlases for the six displayed boundary
neighborhoods are shown in
Figs.~\ref{supp:fig:glasgow_boundary_atlas}--\ref{supp:fig:south_korea_boundary_atlas}.
The atlases and crosswalks are generated by the exploratory-analysis notebook.

\begin{figure}[htbp!]
\centering
\includegraphics[width=\textwidth]{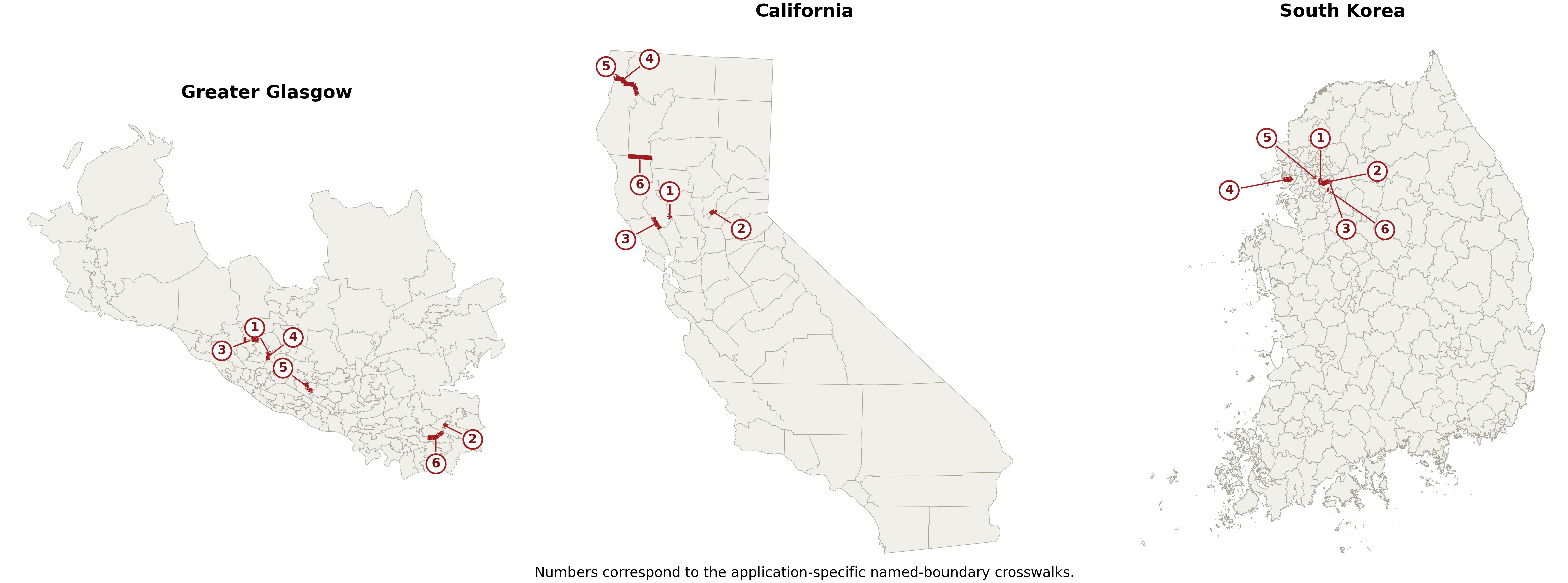}
\caption{Geographic locations of selected high-probability ABI-DAGAR boundary
neighborhoods. Red segments identify the selected borders and numbered labels
refer to Table~\ref{supp:tab:named_boundaries}. Left to right: Greater Glasgow, 
California, and South Korea.}
\label{supp:fig:boundary_geography}
\end{figure}

\begin{landscape}
\begin{figure}[p]
\centering
\includegraphics[width=0.96\linewidth]{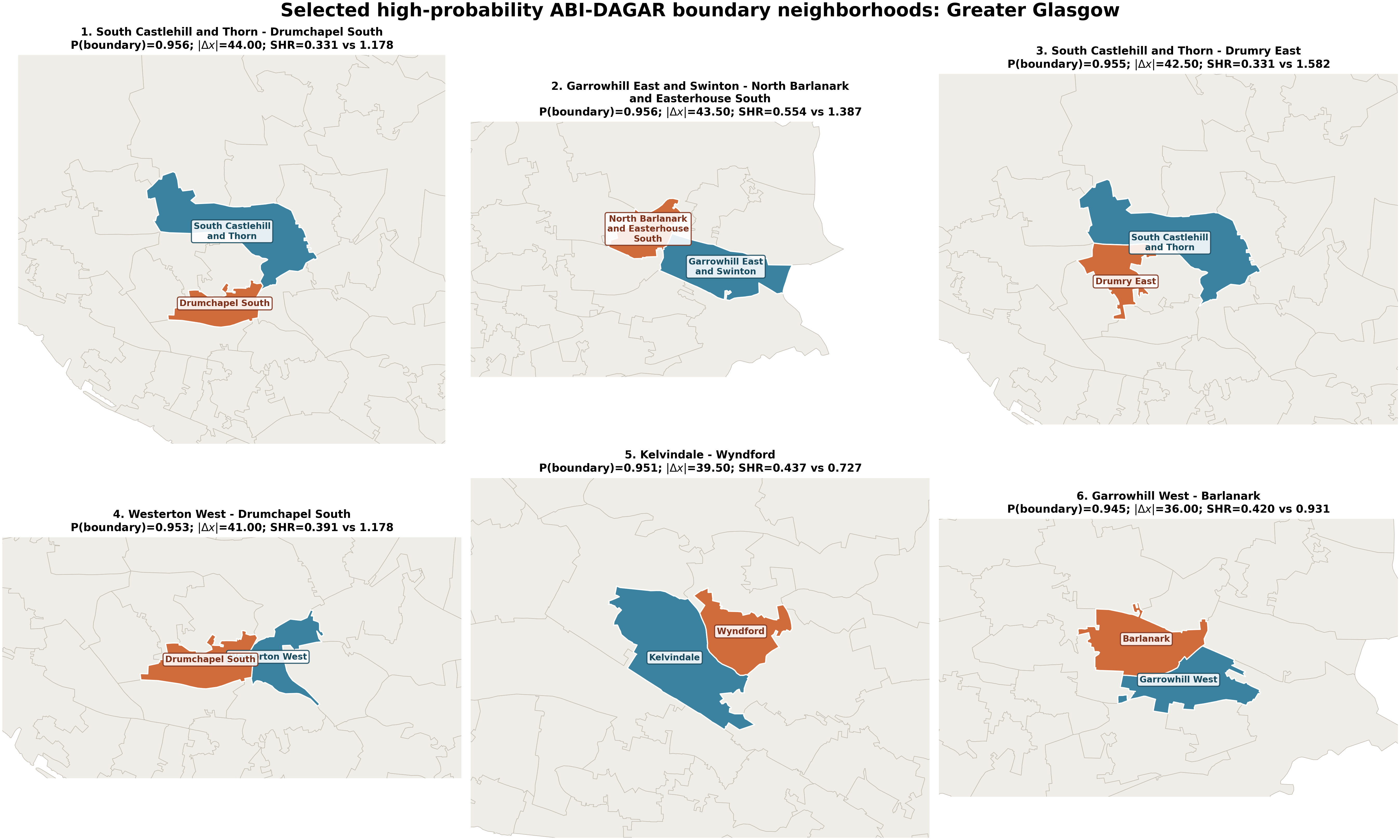}
\caption{Local zoom atlas for the six Greater Glasgow boundary neighborhoods
listed in Table~\ref{supp:tab:named_boundaries}. Blue and orange identify the
two neighboring areas.}
\label{supp:fig:glasgow_boundary_atlas}
\end{figure}
\end{landscape}

\begin{landscape}
\begin{figure}[p]
\centering
\includegraphics[width=0.87\linewidth]{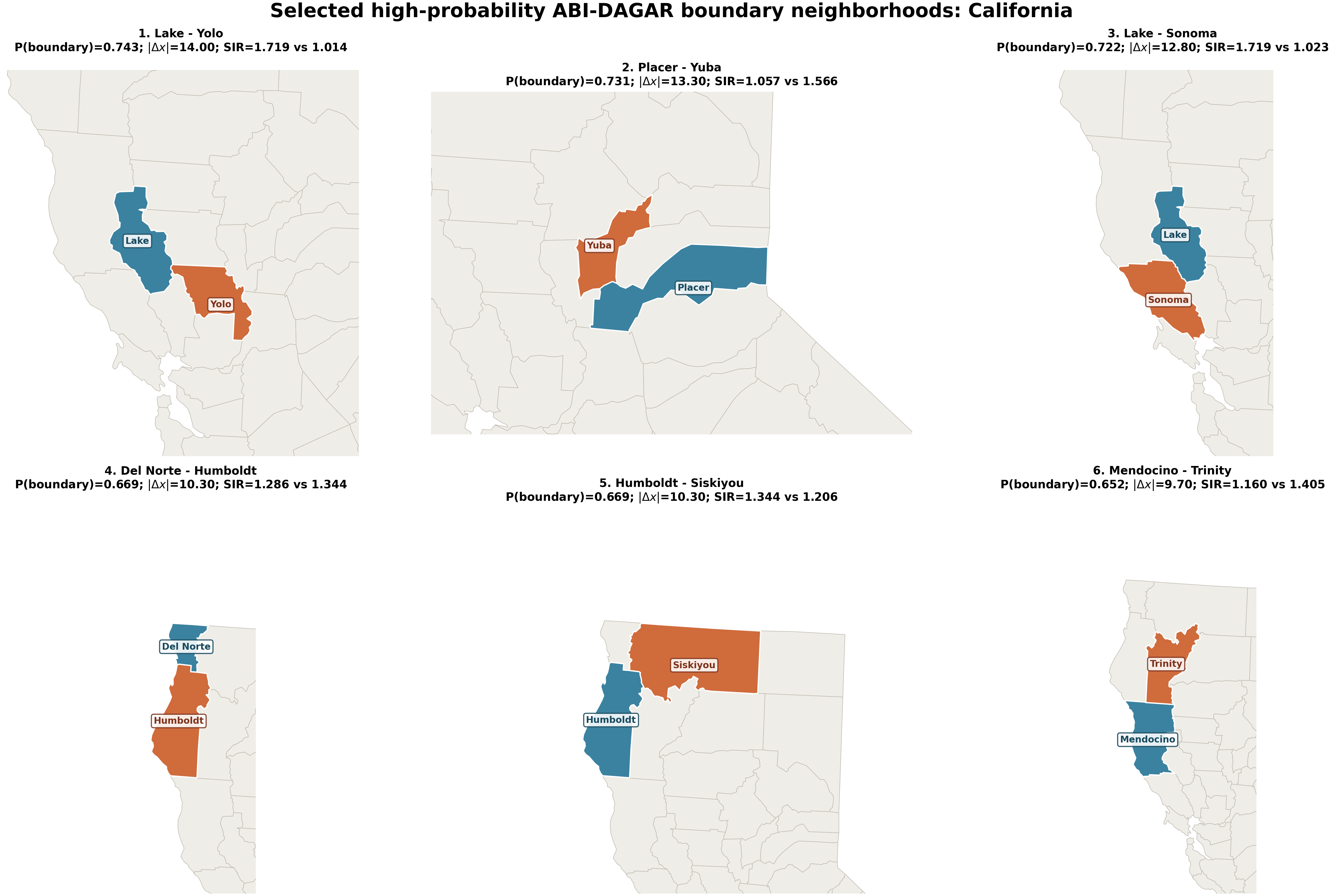}
\caption{Local zoom atlas for the six California boundary neighborhoods listed
in Table~\ref{supp:tab:named_boundaries}. Blue and orange identify the two
neighboring counties.}
\label{supp:fig:california_boundary_atlas}
\end{figure}
\end{landscape}

\begin{landscape}
\begin{figure}[p]
\centering
\includegraphics[width=0.95\linewidth]{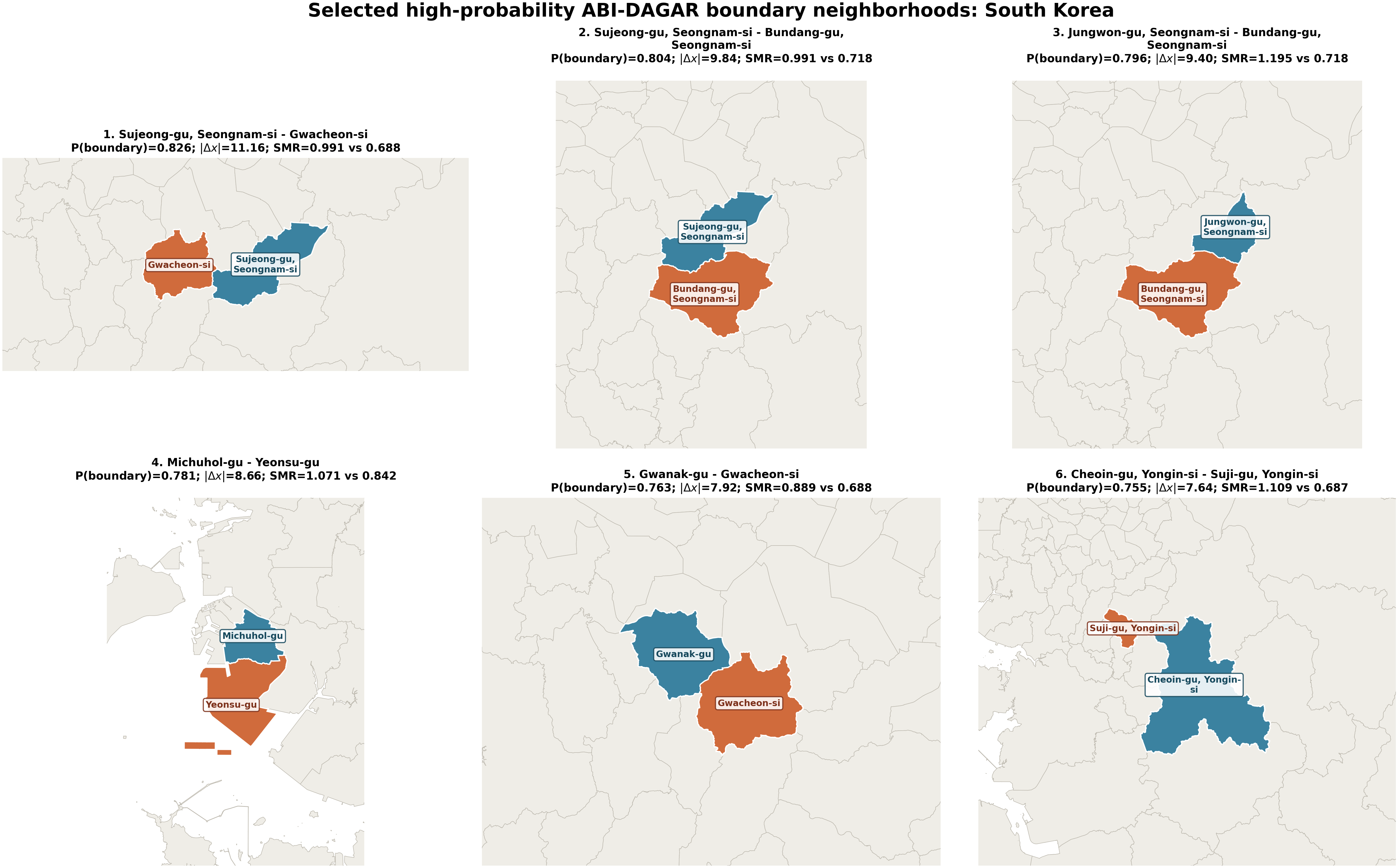}
\caption{Local zoom atlas for the six South Korean boundary neighborhoods
listed in Table~\ref{supp:tab:named_boundaries}. Blue and orange identify 
the two neighboring administrative units.}
\label{supp:fig:south_korea_boundary_atlas}
\end{figure}
\end{landscape}

\FloatBarrier
\subsection{Additional fitted-risk diagnostics}
\label{supp:risk}

Because ABI-DAGAR targets posterior inference for the scalar parameters rather
than direct posterior sampling of $\bs w$, the fitted-risk and count
comparisons below are secondary descriptive diagnostics. The fitted-risk table
and the displayed count reconstruction are produced by two related but
distinct reconstruction routines. Neither routine samples $\bs w$ from its
posterior conditional on the observed counts, and neither uses
$\sigma_w^2$ in reconstructing the latent field. They should therefore not be
interpreted as full posterior predictive summaries.
Fig.~\ref{supp:fig:real_ppc} displays the descriptive count reconstructions.

\begin{figure}[htbp!]
\centering
\includegraphics[width=0.32\linewidth]
{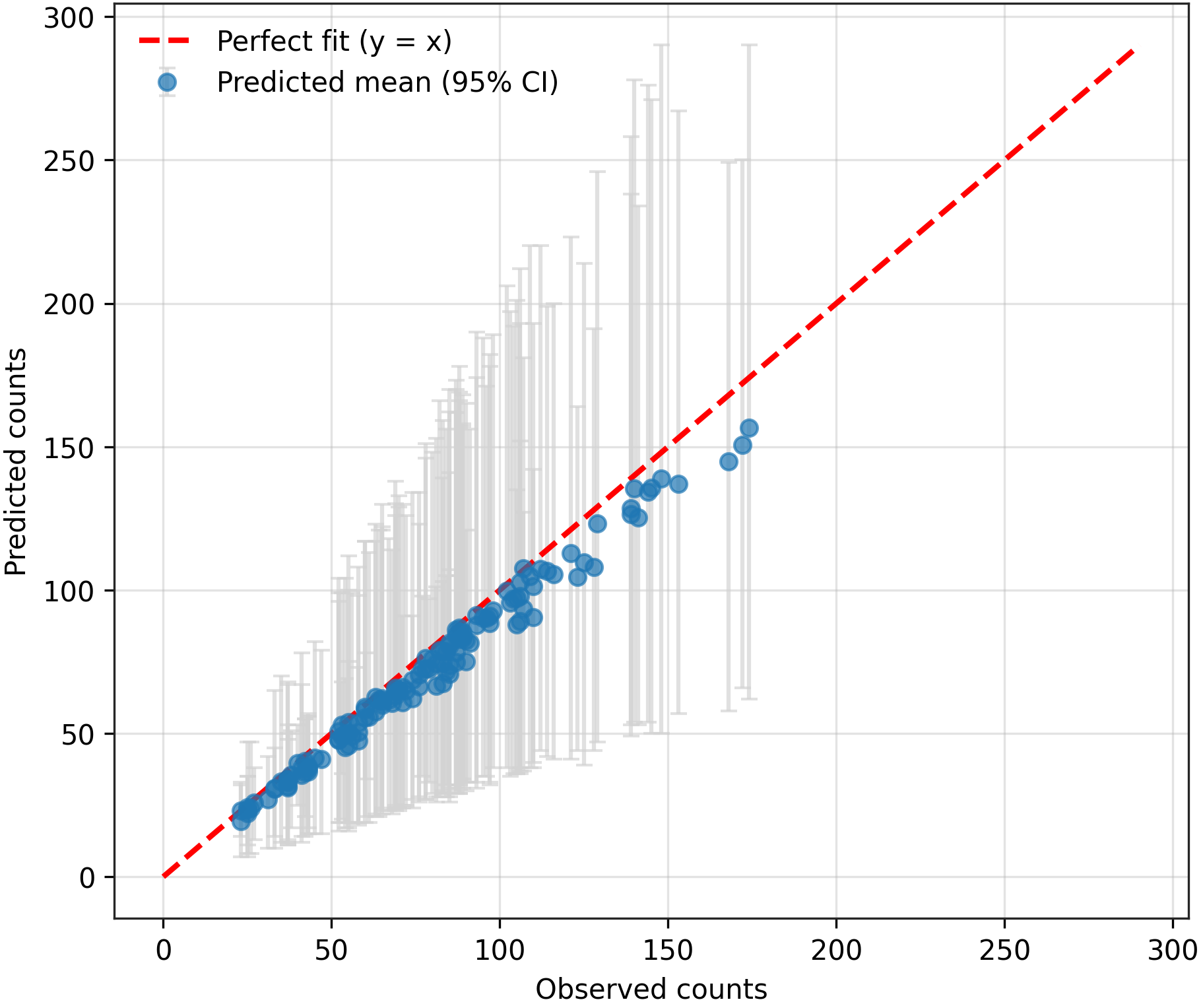}
\includegraphics[width=0.32\linewidth]
{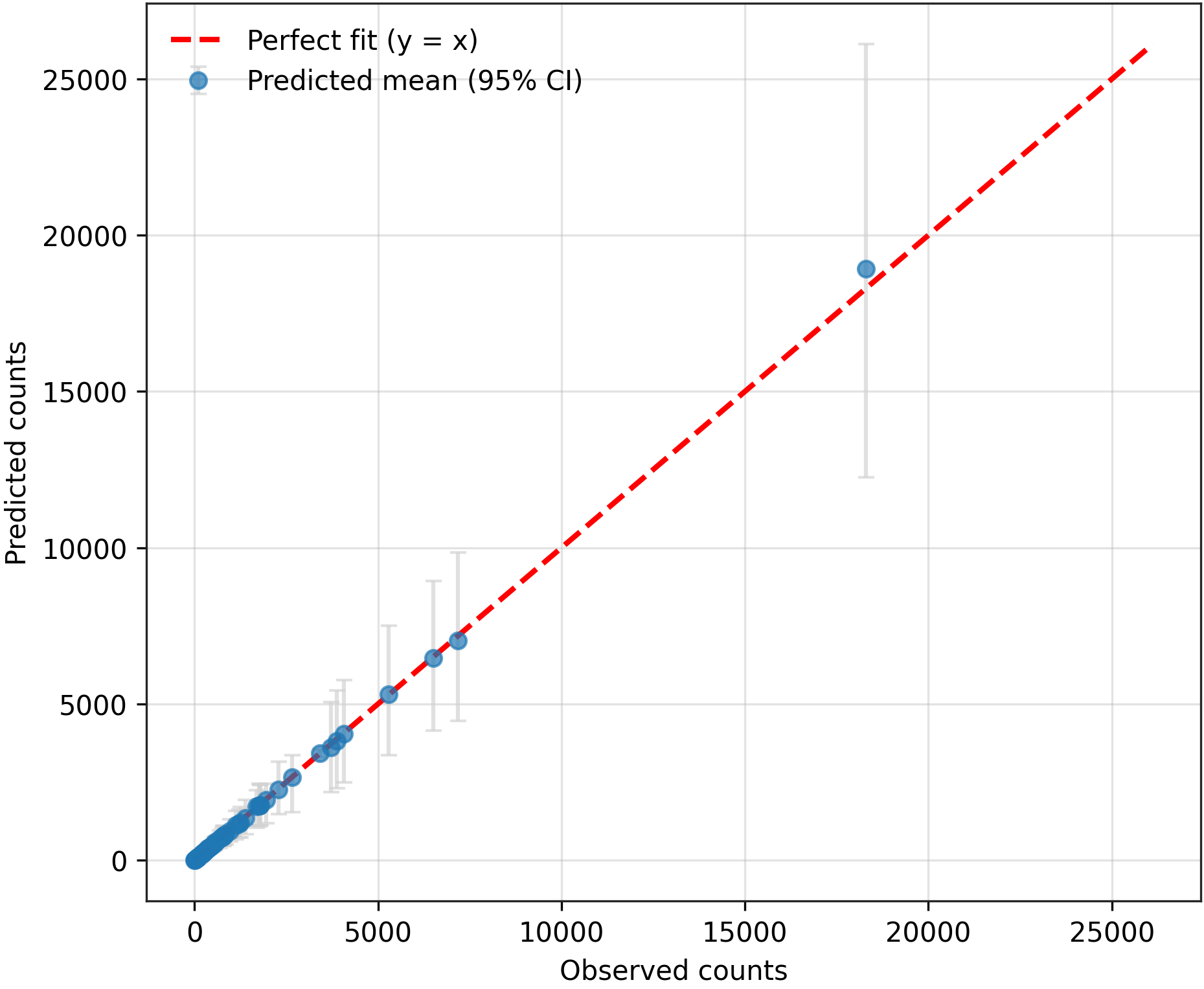}
\includegraphics[width=0.32\linewidth]
{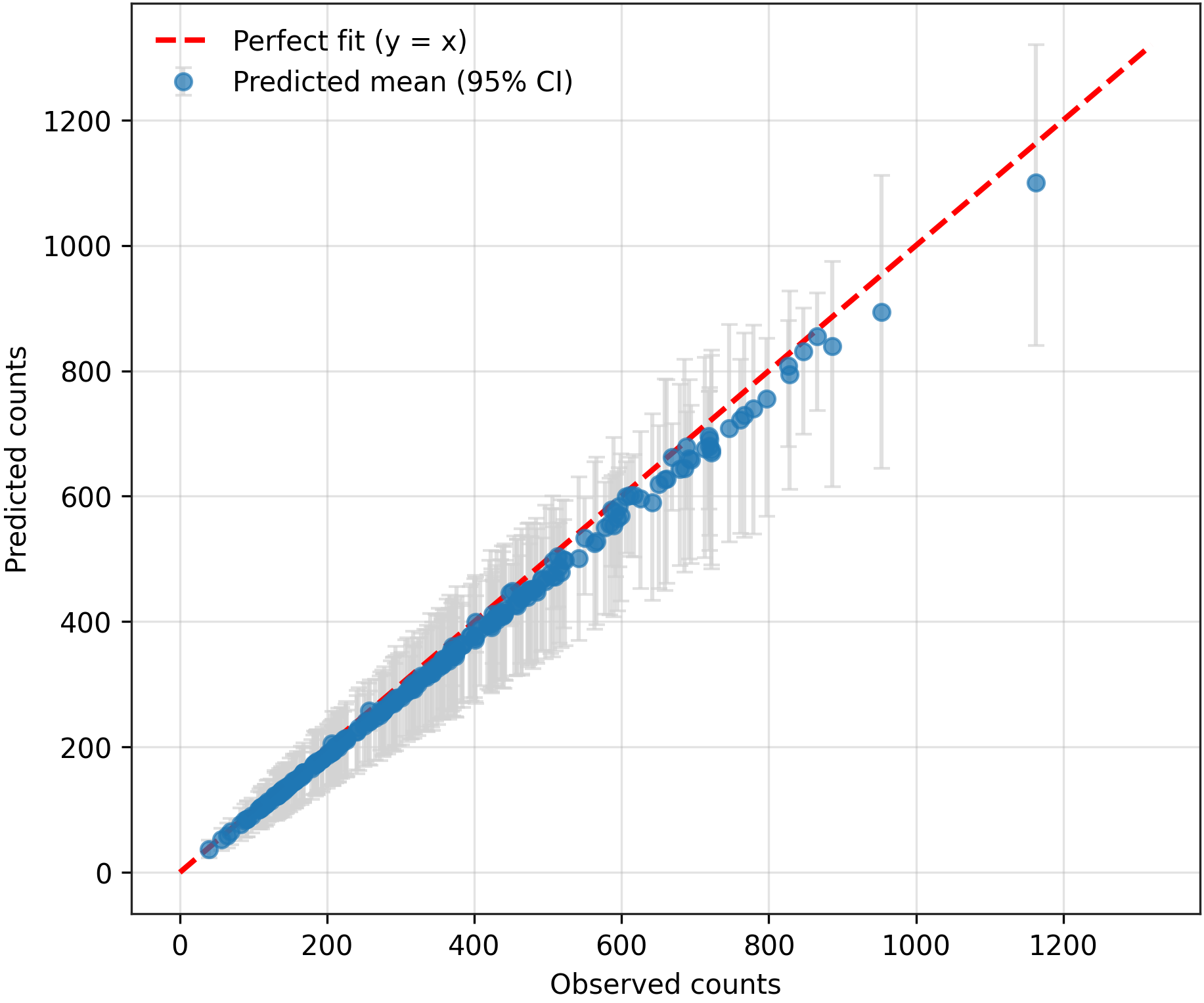}
\caption{Observed counts and descriptive count reconstructions.
Points are the predicted mean and vertical bars the corresponding 95\% credible intervals.
Left to right: Greater Glasgow, California, and South Korea.}
\label{supp:fig:real_ppc}
\end{figure}

For the fitted-risk comparison, let $S=1000$ and index the ABI posterior draws
by $s$. For each draw, the implementation constructs the filtered adjacency
matrix from $\eta^{(s)}$, and uses the
identity DAGAR ordering. It then forms
\[
u_i^{(s)}
=
\log(y_i+0.5)-\log(e_i)-\beta_0^{(s)}
\]
and centers $\bs u^{(s)}$ over areas. If
$L^{(s)}=I-B\{\widetilde A^{(s)},\rho^{(s)}\}$ is the resulting DAGAR
triangular factor, the reconstructed field solves
\[
L^{(s)}\widetilde{\bs w}^{(s)}
=
\bs u^{(s)}-\bar u^{(s)}\bs 1,
\]
after which $\widetilde{\bs w}^{(s)}$ is centered. The ABI reconstructed risk
for area $i$ is
\[
\widehat R_{A,i}
=
\frac{1}{S}\sum_{s=1}^{S}
\exp\{\beta_0^{(s)}+\widetilde w_i^{(s)}\}.
\]
The \texttt{CARBayes} fitted risk is
$\widehat R_{C,i}=\texttt{fitted.values}_i/e_i$, using the fitted counts
returned by \texttt{CARBayes}.

The reported correlation is the Pearson correlation between
$\{\widehat R_{A,i}\}_{i=1}^{N}$ and
$\{\widehat R_{C,i}\}_{i=1}^{N}$. The area-level discrepancies are
\[
\operatorname{MAE}
=
\frac{1}{N}\sum_{i=1}^{N}
|\widehat R_{A,i}-\widehat R_{C,i}|,
\qquad
\operatorname{RMSE}
=
\left[
\frac{1}{N}\sum_{i=1}^{N}
(\widehat R_{A,i}-\widehat R_{C,i})^2
\right]^{1/2}.
\]
The reported regression has ABI reconstructed risk as the response and
\texttt{CARBayes} fitted risk as the predictor,
$\widehat R_{A,i}=\alpha+\beta\widehat R_{C,i}+\varepsilon_i$; the table
reports $\beta$ as the slope and $\alpha$ as the intercept.
Table~\ref{supp:tab:risk} shows positive association in all three datasets,
with the strongest association in South Korea. Because these risks are
reconstructed rather than sampled directly by the ABI posterior, edge-level
boundary probabilities provide the more direct benchmark for the primary
inferential target. The area-level comparisons are displayed in
Fig.~\ref{supp:fig:risk}.

\begin{table}[htbp!]
\centering
\caption{Additional fitted-risk comparison metrics between ABI-DAGAR and
\texttt{CARBayes}.}
\label{supp:tab:risk}
\resizebox{\textwidth}{!}{%
\begin{tabular}{lccccc}
\hline
\textbf{Dataset}
& \textbf{Correlation}
& \textbf{MAE}
& \textbf{RMSE}
& \textbf{ABI on \texttt{CARBayes} slope}
& \textbf{ABI on \texttt{CARBayes} intercept} \\
\hline
Glasgow
& 0.769 & 0.729 & 1.094 & 3.046 & $-1.245$ \\
California
& 0.840 & 0.146 & 0.230 & 1.699 & $-0.751$ \\
South Korea
& 0.865 & 0.146 & 0.190 & 2.315 & $-1.310$ \\
\hline
\end{tabular}
}
\end{table}

\begin{figure}[htbp!]
\centering
\includegraphics[width=0.32\linewidth]{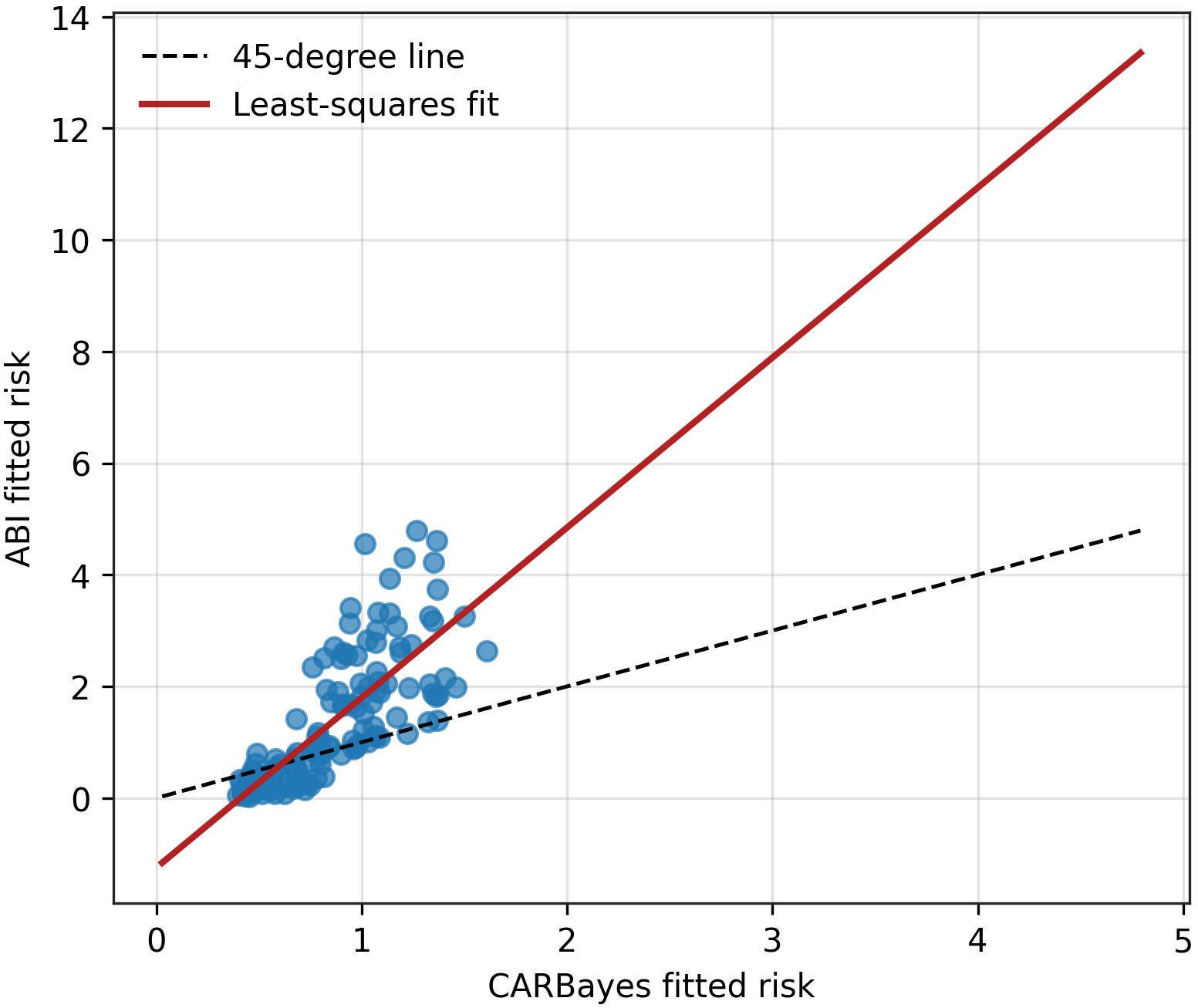}
\includegraphics[width=0.32\linewidth]{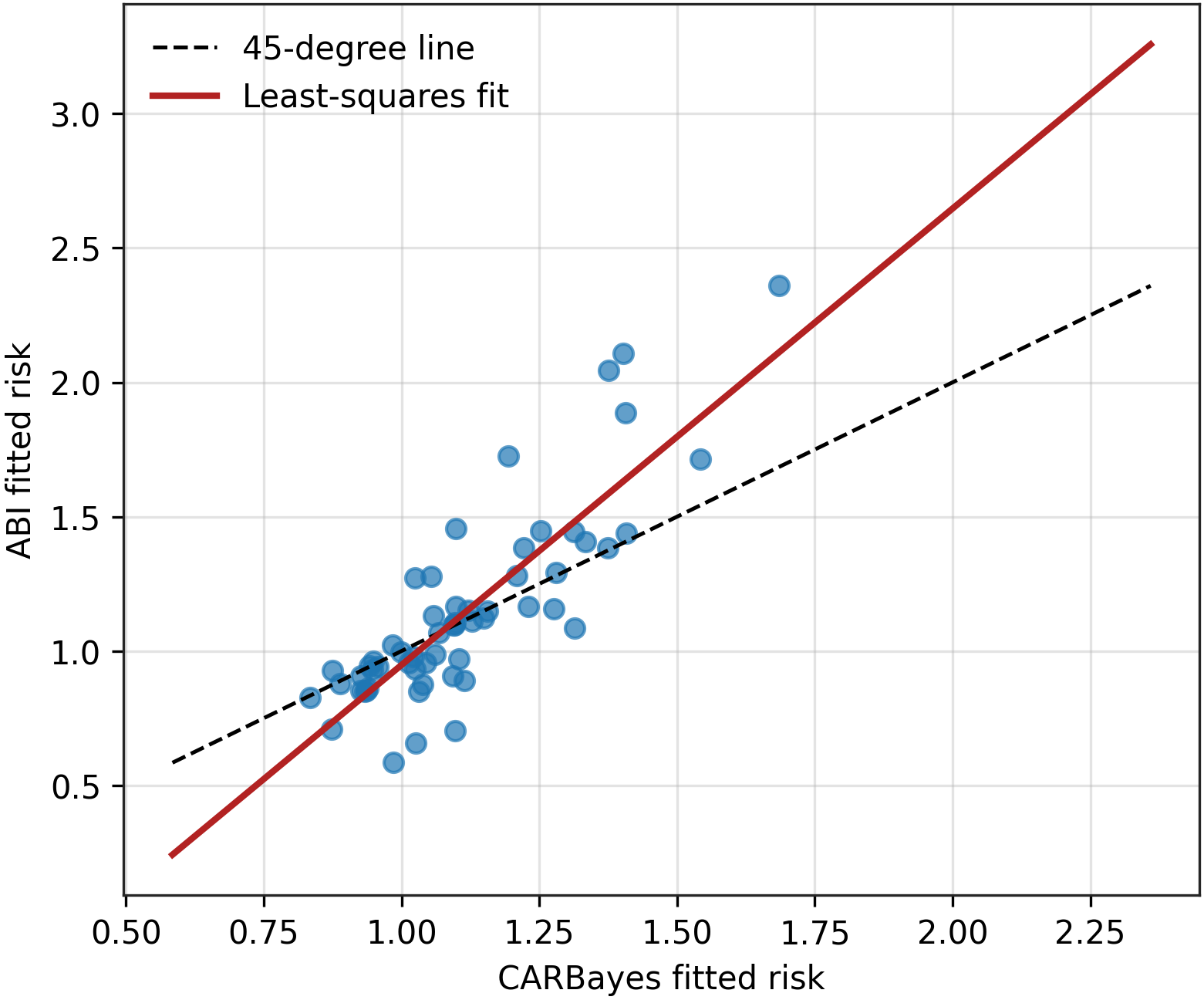}
\includegraphics[width=0.32\linewidth]{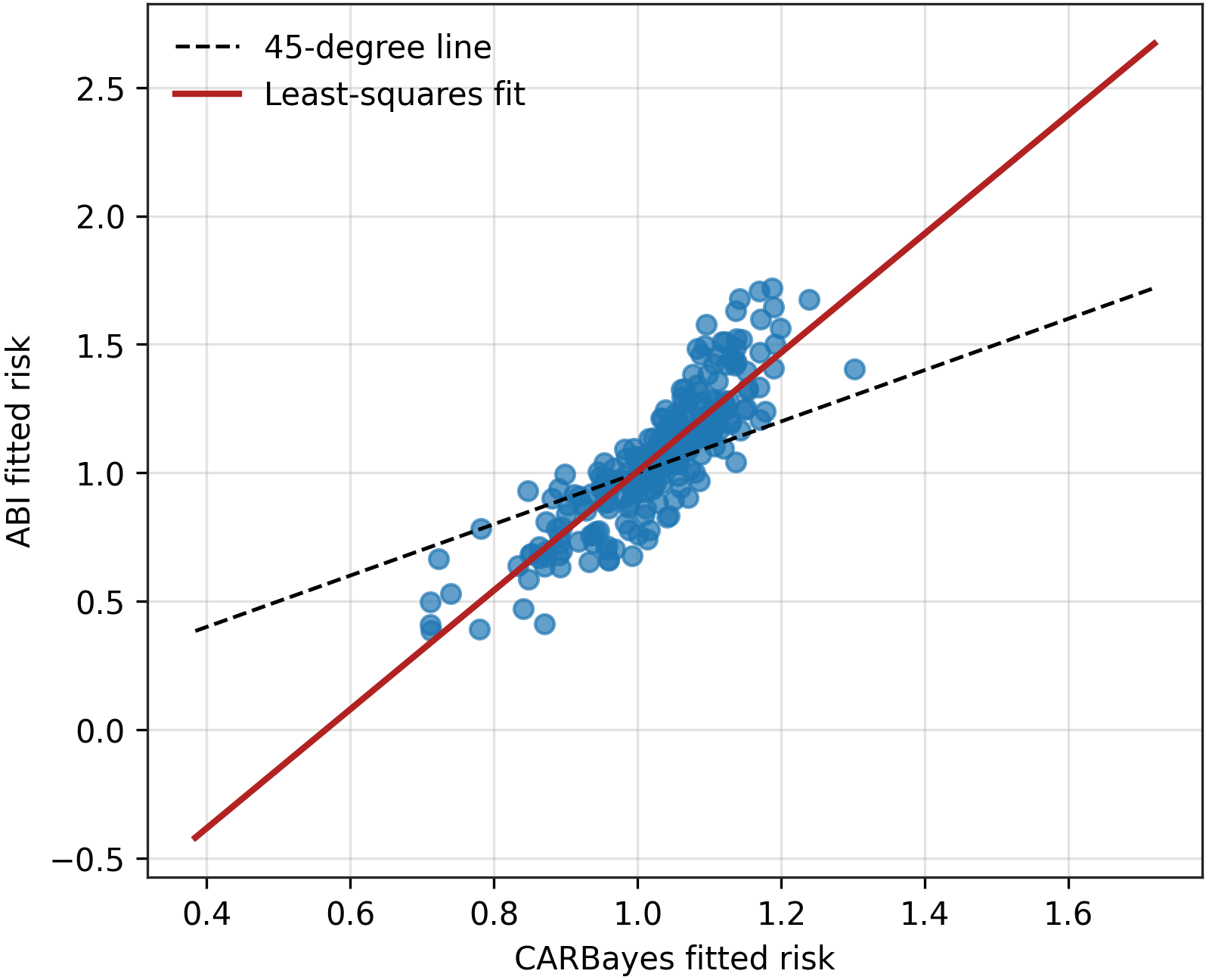}
\caption{Fitted-risk comparison between ABI-DAGAR and \texttt{CARBayes}.
Left to right: Greater Glasgow, California, and South Korea.}
\label{supp:fig:risk}
\end{figure}

\FloatBarrier
\subsection{Boundary probabilities and covariate dissimilarity}
\label{supp:real_dissimilarity}

Fig.~\ref{supp:fig:real_dissimilarity} complements the residual-contrast
diagnostic in the main manuscript by comparing how ABI-DAGAR and
\texttt{CARBayes} translate standardized covariate dissimilarity into posterior
boundary probability.

Under ABI-DAGAR, the Pearson correlation between posterior boundary probability
and standardized dissimilarity is $0.884$ in Greater Glasgow, $0.912$ in
California, and $0.906$ in South Korea, compared with $0.825$, $0.865$, and
$0.812$, respectively, under \texttt{CARBayes}. Thus, both analyses associate larger
deprivation or smoking-prevalence contrasts with stronger evidence for
interruption of spatial borrowing, although ABI-DAGAR assigns somewhat greater
boundary probability to some of the most dissimilar neighboring pairs.
Fig.~\ref{supp:fig:real_dissimilarity} displays these relationships for all
three applications.

\begin{figure}[htbp!]
\centering
\includegraphics[width=0.32\linewidth]
{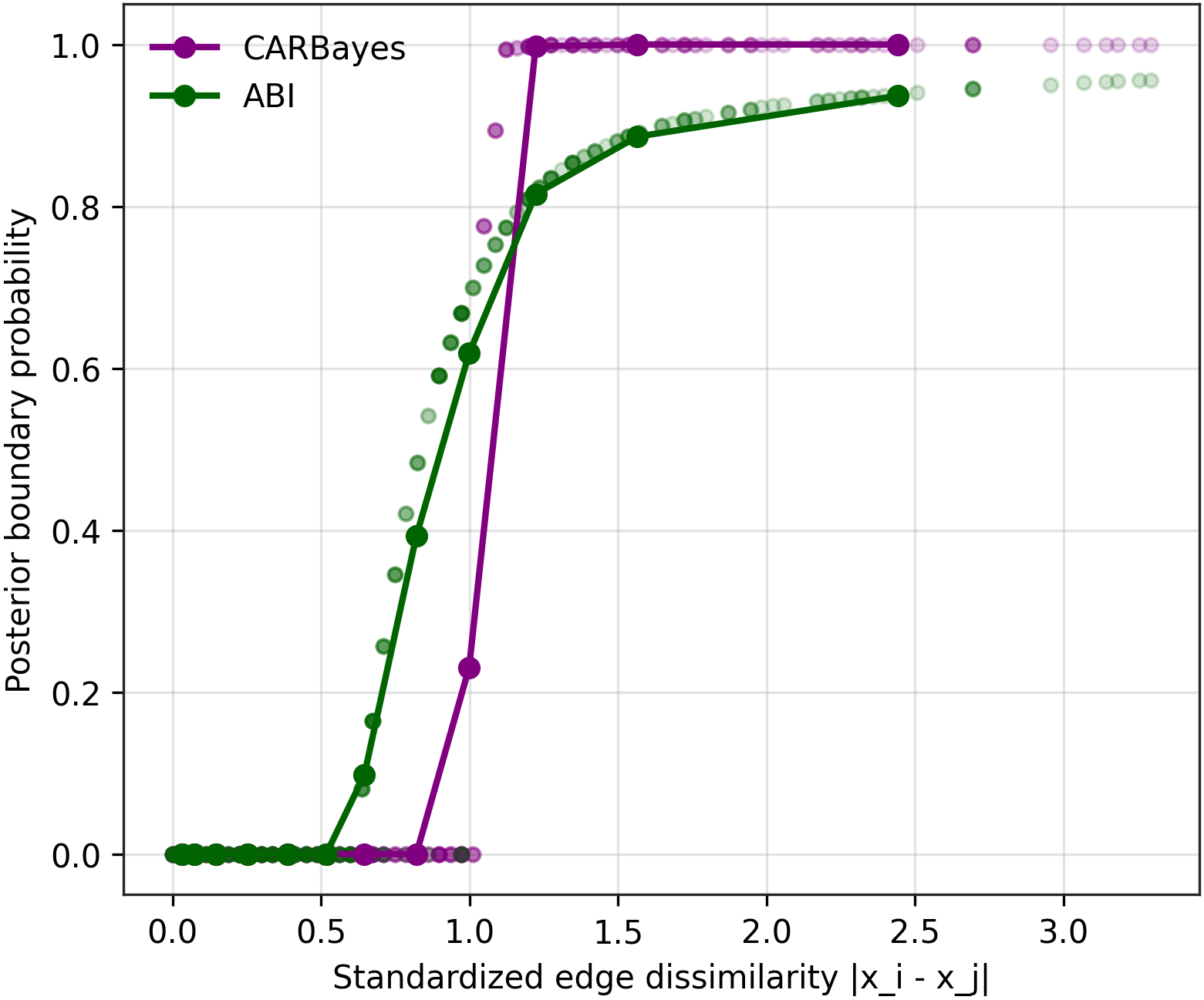}
\includegraphics[width=0.32\linewidth]
{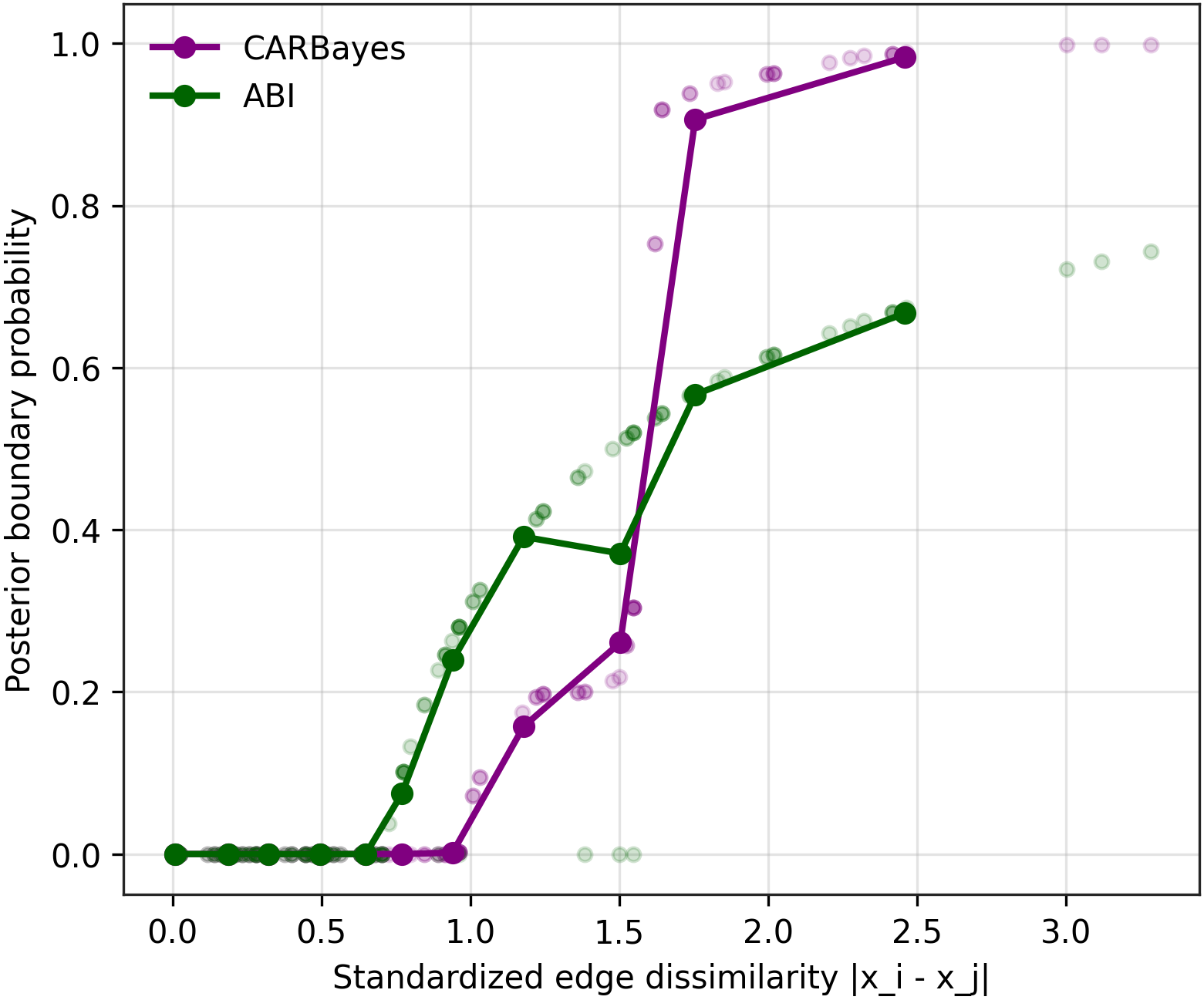}
\includegraphics[width=0.32\linewidth]
{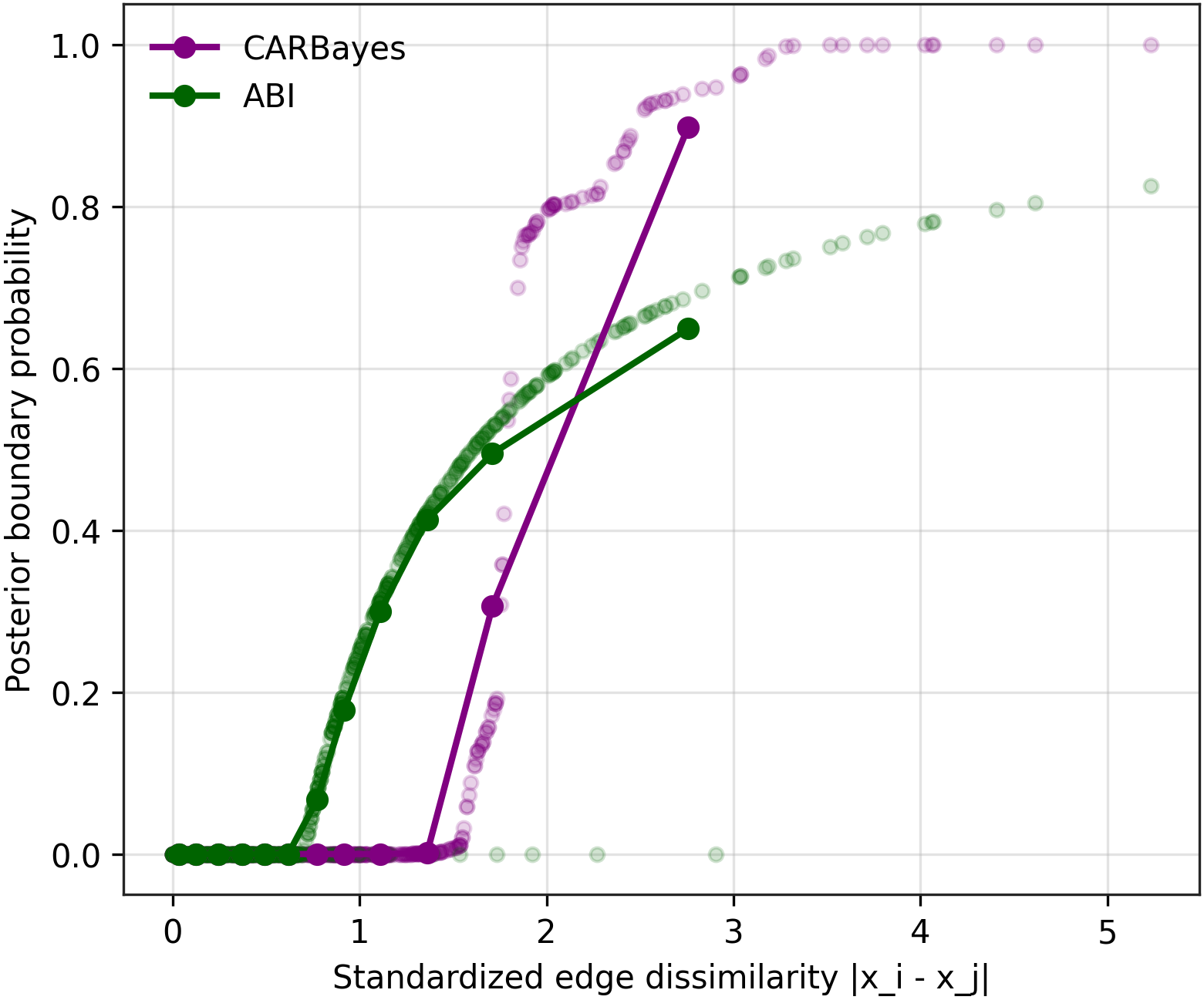}
\caption{Posterior boundary probability versus standardized edge
dissimilarity under ABI-DAGAR and \texttt{CARBayes}. Left to right: Greater Glasgow, California, and South Korea.}
\label{supp:fig:real_dissimilarity}
\end{figure}

\FloatBarrier
\subsection{DAGAR MCMC comparison}
\label{supp:real_mcmc}

The main manuscript summarizes the DAGAR MCMC comparison for the spatial
structure parameters $\eta$ and $\rho$ and reports the corresponding
edge-level probability agreement. Here we provide additional posterior
summaries, selected-boundary comparisons, and spatial maps for the same
benchmark.

\begin{table}[htbp!]
\centering
\caption{Posterior medians with 95\% credible intervals for ABI-DAGAR and
the DAGAR MCMC comparator.}
\label{supp:tab:real_mcmc_parameters}
\begin{tabular}{lccc}
\hline
\textbf{Dataset}
& \textbf{Parameter}
& \textbf{ABI-DAGAR}
& \textbf{MCMC-DAGAR} \\
\hline
Glasgow
& $\beta_0$
& $-0.239$ $(-0.309,-0.172)$
& $-0.220$ $(-0.242,-0.199)$ \\
Glasgow
& $\sigma_w^2$
& $0.336$ $(0.077,1.097)$
& $0.293$ $(0.125,1.683)$ \\
Glasgow
& $\eta$
& $0.831$ $(0.116,1.137)$
& $0.871$ $(0.702,1.083)$ \\
Glasgow
& $\rho$
& $0.878$ $(0.449,0.975)$
& $0.857$ $(0.643,0.976)$ \\
\hline
California
& $\beta_0$
& $0.086$ $(0.015,0.152)$
& $0.090$ $(0.070,0.108)$ \\
California
& $\sigma_w^2$
& $0.106$ $(0.011,0.839)$
& $0.031$ $(0.017,0.126)$ \\
California
& $\eta$
& $0.469$ $(0.017,0.964)$
& $0.440$ $(0.275,0.936)$ \\
California
& $\rho$
& $0.815$ $(0.051,0.990)$
& $0.578$ $(0.264,0.898)$ \\
\hline
South Korea
& $\beta_0$
& $0.022$ $(-0.021,0.062)$
& $0.019$ $(0.012,0.027)$ \\
South Korea
& $\sigma_w^2$
& $0.079$ $(0.007,0.812)$
& $0.015$ $(0.009,0.037)$ \\
South Korea
& $\eta$
& $0.433$ $(0.017,0.961)$
& $0.350$ $(0.214,0.511)$ \\
South Korea
& $\rho$
& $0.893$ $(0.143,0.994)$
& $0.634$ $(0.438,0.861)$ \\
\hline
\end{tabular}
\end{table}

Table~\ref{supp:tab:real_mcmc_parameters} shows that the MCMC-DAGAR intervals
are broader than the corresponding \texttt{CARBayes} intervals for the
boundary and spatial-variance parameters. Consequently,
wider intervals than \texttt{CARBayes} are not a feature unique to amortized
inference. Relative to MCMC-DAGAR, ABI-DAGAR has wider intervals for $\eta$
and $\rho$ in all three applications, but not uniformly for $\sigma_w^2$:
the Glasgow MCMC interval for the spatial variance is wider than its ABI
counterpart. Posterior locations for $\eta$ are comparatively close, whereas
the ABI posterior favors stronger residual dependence in California and South
Korea. Since the prior specifications also differ, these comparisons describe
the fitted analyses rather than isolating the contribution of amortization.

Let $\mathcal B_A$ and $\mathcal B_M$ denote the ABI-DAGAR and MCMC-DAGAR
median-probability boundary sets. In Table~\ref{supp:tab:real_mcmc_boundary},
``Shared'' is $|\mathcal B_A\cap\mathcal B_M|$,
``ABI in MCMC'' is
$|\mathcal B_A\cap\mathcal B_M|/|\mathcal B_A|$,
``MCMC in ABI'' is
$|\mathcal B_A\cap\mathcal B_M|/|\mathcal B_M|$, and
``Jaccard'' is
\[
\frac{
|\mathcal B_A\cap\mathcal B_M|
}{
|\mathcal B_A\cup\mathcal B_M|
}.
\]
The displayed percentages are the two directional containment proportions.

\begin{table}[htbp!]
\centering
\caption{Agreement between ABI-DAGAR and the DAGAR MCMC comparator. Risk and
boundary-probability correlations are Pearson correlations across areas and
geographic edges, respectively.}
\label{supp:tab:real_mcmc_boundary}
\resizebox{\textwidth}{!}{%
\begin{tabular}{lcccccccc}
\hline
\textbf{Dataset}
& \textbf{Risk corr.}
& \textbf{Bound. corr.}
& \textbf{ABI sel.}
& \textbf{MCMC sel.}
& \textbf{Shared}
& \textbf{ABI in MCMC}
& \textbf{MCMC in ABI}
& \textbf{Jaccard} \\
\hline
Glasgow
& 0.774 & 0.959 & 130 & 140 & 130 & 100.00\% & 92.86\% & 0.929 \\
California
& 0.847 & 0.922 & 31 & 24 & 24 & 77.42\% & 100.00\% & 0.774 \\
South Korea
& 0.881 & 0.777 & 89 & 52 & 50 & 56.18\% & 96.15\% & 0.549 \\
\hline
\end{tabular}%
}
\end{table}

Table~\ref{supp:tab:real_mcmc_boundary} shows the closest agreement in
Glasgow, where the MCMC-DAGAR set contains all 130 ABI-DAGAR selections and
10 additional edges. In California, all 24 MCMC-DAGAR selections are included
among the 31 ABI-DAGAR selections. In South Korea, 50 edges are shared, 39 are
selected only by ABI-DAGAR, and two are selected only by MCMC-DAGAR. Both
procedures use the same scalar-threshold mechanism and therefore share the
raw ordering induced by covariate dissimilarity. The 39 South Korean edges
selected only by ABI-DAGAR have probabilities concentrated near the decision
threshold (median $0.538$, range $0.503$--$0.580$), explaining why modest
probability differences produce appreciable differences in selected counts.
Fig.~\ref{supp:fig:real_mcmc} maps the corresponding boundary sets.

Both \texttt{CARBayes} and DAGAR MCMC used one chain of 300,000 iterations
per application, with 100,000 burn-in iterations and thinning by 20, yielding
10,000 retained draws. Table~\ref{supp:tab:real_mcmc_precision} reports ESS
and posterior-mean MCSE computed from these draws using the spectral-variance
method described in Section~\ref{supp:mcmc_sim}. The \texttt{CARBayes}
parameters $\alpha$ and $\tau^2$ denote its boundary and spatial-variance
parameters, respectively; they are not relabeled as DAGAR variance components.
The scripts save these quantities alongside trace, autocorrelation, and
running-mean diagnostics.

Table~\ref{supp:tab:real_mcmc_precision} reports ESS, posterior-mean MCSE, and
MCSE relative to posterior standard deviation for the scalar parameters in
both real-data MCMC benchmarks. Monte Carlo precision is strong for nearly all
reported posterior means. The main exception is the California
\texttt{CARBayes} boundary parameter, for which the diagnostics indicate
noticeably weaker Monte Carlo precision. The table therefore supports the
numerical stability of the posterior-mean comparisons overall while
identifying the estimate that warrants the greatest caution. These
single-chain summaries quantify Monte Carlo precision for posterior means; the
benchmark comparisons remain descriptive.

\begin{table}[htbp!]
\centering
\small
\caption{Single-chain posterior-mean precision for the real-data benchmarks.}
\label{supp:tab:real_mcmc_precision}
\begin{tabular}{lllrrr}
\hline
\textbf{Dataset} & \textbf{Sampler} & \textbf{Parameter}
& \textbf{ESS} & \shortstack{\textbf{MCSE}\\$(\times10^{-3})$}
& \shortstack{\textbf{MCSE / SD}\\(\%)} \\
\hline
Glasgow & \texttt{CARBayes} & $\beta_0$ & 9086.000 & 0.122 & 1.050 \\
& & $\alpha$ & 9269.300 & 0.184 & 1.040 \\
& & $\tau^2$ & 9847.800 & 0.244 & 1.010 \\
& DAGAR MCMC & $\beta_0$ & 9321.700 & 0.111 & 1.040 \\
& & $\sigma_w^2$ & 748.200 & 26.834 & 3.660 \\
& & $\eta$ & 7722.100 & 1.100 & 1.140 \\
& & $\rho$ & 1462.600 & 2.270 & 2.610 \\
\hline
California & \texttt{CARBayes} & $\beta_0$ & 4723.000 & 0.196 & 1.460 \\
& & $\alpha$ & 95.500 & 10.093 & 10.240 \\
& & $\tau^2$ & 428.200 & 0.506 & 4.830 \\
& DAGAR MCMC & $\beta_0$ & 2269.800 & 0.204 & 2.100 \\
& & $\sigma_w^2$ & 2252.600 & 0.903 & 2.110 \\
& & $\eta$ & 499.900 & 7.974 & 4.470 \\
& & $\rho$ & 3114.800 & 2.983 & 1.790 \\
\hline
South Korea & \texttt{CARBayes} & $\beta_0$ & 10000.000 & 0.041 & 1.000 \\
& & $\alpha$ & 566.300 & 2.374 & 4.200 \\
& & $\tau^2$ & 1150.900 & 0.093 & 2.950 \\
& DAGAR MCMC & $\beta_0$ & 10000.000 & 0.039 & 1.000 \\
& & $\sigma_w^2$ & 1079.600 & 0.310 & 3.040 \\
& & $\eta$ & 612.300 & 4.040 & 4.040 \\
& & $\rho$ & 1188.900 & 3.187 & 2.900 \\
\hline
\end{tabular}
\end{table}

\begin{figure}[htbp!]
\centering
\includegraphics[height=0.26\textwidth,keepaspectratio]{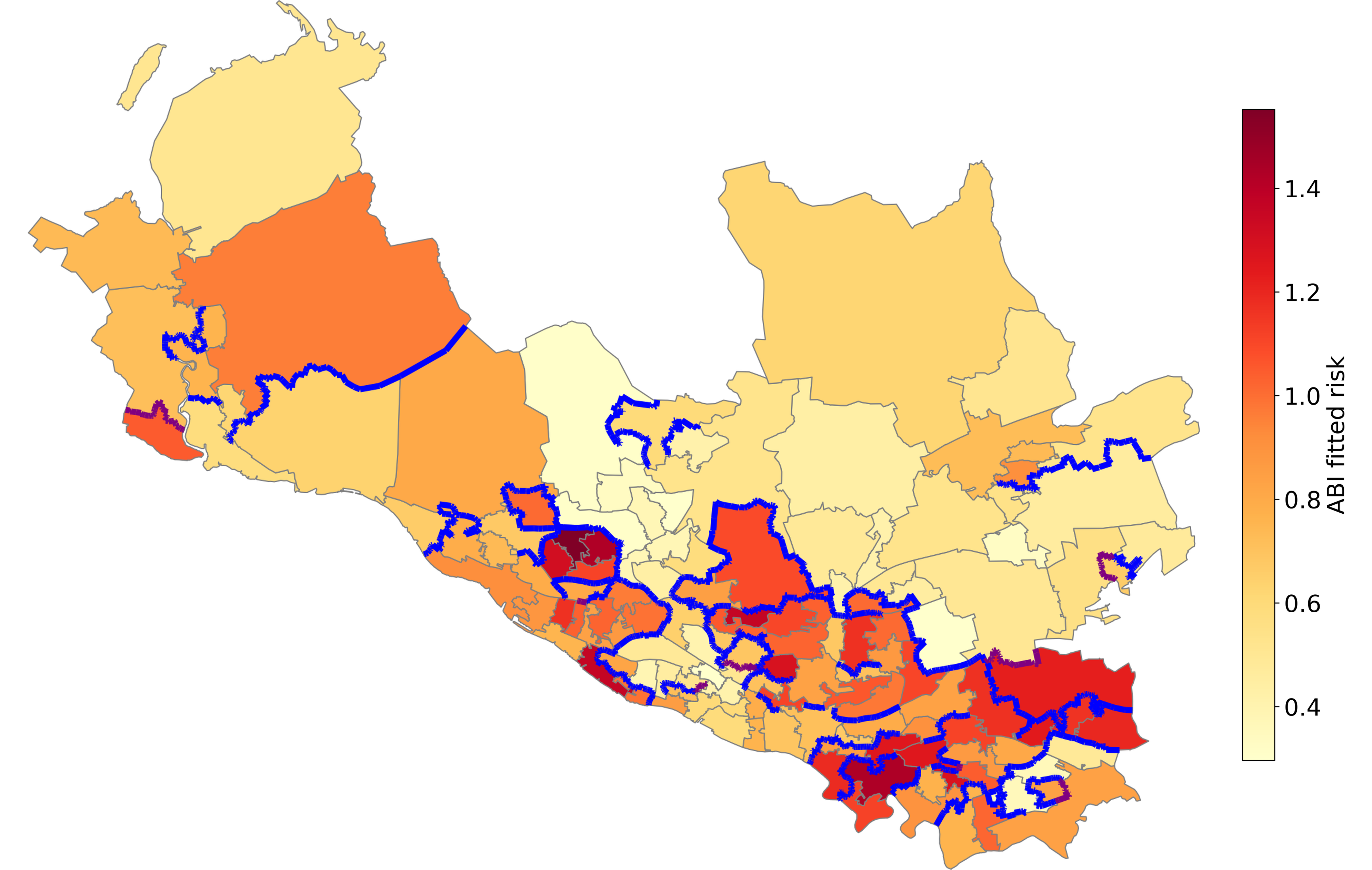}\hfill
\includegraphics[height=0.26\textwidth,keepaspectratio]{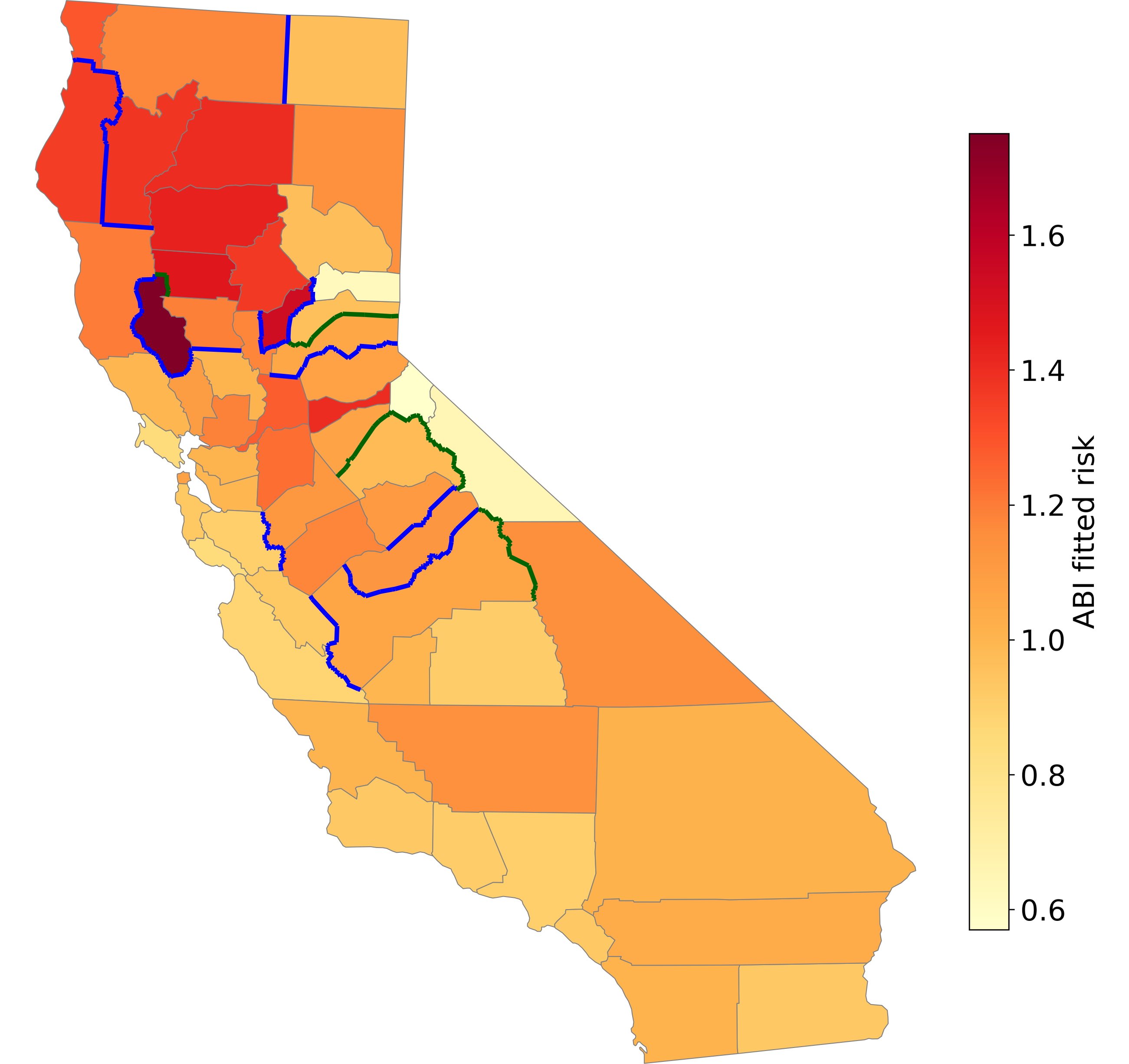}\hfill
\includegraphics[height=0.26\textwidth,keepaspectratio]{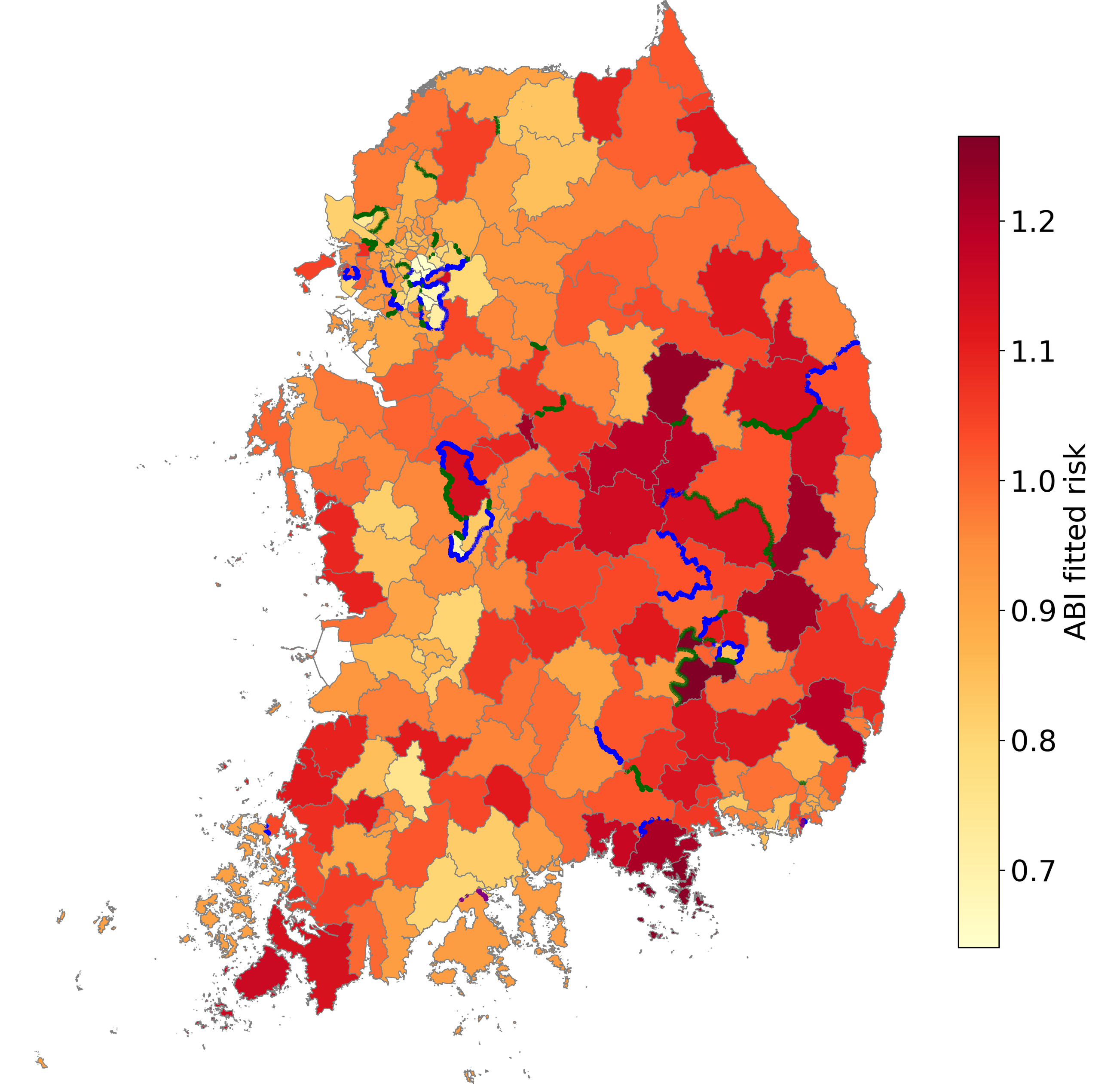}
\caption{Median-probability boundary agreement with DAGAR MCMC. Blue: both
methods; green: ABI only; purple: MCMC only. Area shading gives the
reconstructed ABI fitted relative risk. Left to right: Greater Glasgow,
California, and South Korea.}
\label{supp:fig:real_mcmc}
\end{figure}

\FloatBarrier
\subsection{Runtime comparison}
\label{supp:real_runtime}

Table~\ref{supp:tab:runtime} reports the wall-clock timing references for
ABI-DAGAR and the longer and shorter \texttt{CARBayes} configurations across
the three applications. The posterior summaries reported in the main
manuscript use the longer \texttt{CARBayes} configuration; the shorter runs are
included only as timing references.

\begin{table}[htbp!]
\centering
\caption{Real-data posterior sampling runtimes for ABI-DAGAR and
\texttt{CARBayes}.}
\label{supp:tab:runtime}
\begin{tabular}{llcc}
\hline
\textbf{Dataset}
& \textbf{Method}
& \textbf{Retained draws}
& \textbf{Runtime} \\
\hline
Glasgow
& ABI-DAGAR
& 10,000
& $\approx 1$ s \\
Glasgow
& \texttt{CARBayes}, 300k iter., 100k burn-in, thin 20
& 10,000
& 615 s \\
Glasgow
& \texttt{CARBayes}, 20k iter., 10k burn-in, thin 1
& 10,000
& 45 s \\
\hline
California
& ABI-DAGAR
& 10,000
& $\approx 1$ s \\
California
& \texttt{CARBayes}, 300k iter., 100k burn-in, thin 20
& 10,000
& 395 s \\
California
& \texttt{CARBayes}, 20k iter., 10k burn-in, thin 1
& 10,000
& 30 s \\
\hline
South Korea
& ABI-DAGAR
& 10,000
& $\approx 1$ s \\
South Korea
& \texttt{CARBayes}, 300k iter., 100k burn-in, thin 20
& 10,000
& 1,198 s \\
South Korea
& \texttt{CARBayes}, 20k iter., 10k burn-in, thin 1
& 10,000
& 69 s \\
\hline
\end{tabular}
\end{table}

The shorter \texttt{CARBayes} runs are timing references only; reported
posterior results use the longer configuration. Including the one-time
ABI-DAGAR training cost of 5 hours and 45 minutes, the break-even points are
approximately 34, 53, and 17 analyses under the longer Glasgow, California,
and South Korea runs, respectively, and 471, 714, and 304 under the shorter
runs. The computational advantage of amortization is therefore most relevant
for repeated deployment across related maps, outcomes, or sensitivity analyses.

\clearpage


\bibliographystyle{apalike} 
\bibliography{bibliography}       

@article{datta2019spatial,
  title={{Spatial disease mapping using directed acyclic graph auto-regressive (DAGAR) models}},
  author={Datta, Abhirup and Banerjee, Sudipto and Hodges, James S and Gao, Leiwen},
  journal={Bayesian Analysis},
  volume={14},
  number={4},
  pages={1221--1244},
  year={2019},
  doi={10.1214/19-BA1177},
  url={https://doi.org/10.1214/19-BA1177}
}

@article{gao2023spatial,
  title={Spatial difference boundary detection for multiple outcomes using {B}ayesian disease mapping},
  author={Gao, Leiwen and Banerjee, Sudipto and Ritz, Beate},
  journal={Biostatistics},
  volume={24},
  number={4},
  pages={922--944},
  year={2023},
  doi={10.1093/biostatistics/kxac013},
  url={https://doi.org/10.1093/biostatistics/kxac013}
}

@article{lee2012boundary,
  title={Boundary detection in disease mapping studies},
  author={Lee, Duncan and Mitchell, Richard},
  journal={Biostatistics},
  volume={13},
  number={3},
  pages={415--426},
  year={2012},
  doi={10.1093/biostatistics/kxr036},
  url={https://doi.org/10.1093/biostatistics/kxr036}
}

@article{li2015bayesian,
  title={{B}ayesian models for detecting difference boundaries in areal data},
  author={Li, Pei and Banerjee, Sudipto and Hanson, Timothy A and McBean, Alexander M},
  journal={Statistica Sinica},
  volume={25},
  number={1},
  pages={385--402},
  year={2015},
  doi={10.5705/ss.2013.238w},
  url={https://doi.org/10.5705/ss.2013.238w}
}

@article{lu2007bayesian,
  title={{B}ayesian areal wombling via adjacency modeling},
  author={Lu, Haolan and Reilly, Cavan S and Banerjee, Sudipto and Carlin, Bradley P},
  journal={Environmental and Ecological Statistics},
  volume={14},
  number={4},
  pages={433--452},
  year={2007},
  doi={10.1007/s10651-007-0029-9},
  url={https://doi.org/10.1007/s10651-007-0029-9}
}

@article{ma2007bayesian,
author = {Bradley P. Carlin and Haijun Ma},
title = {{Bayesian multivariate areal wombling for multiple disease boundary analysis}},
volume = {2},
journal = {Bayesian Analysis},
number = {2},
publisher = {International Society for Bayesian Analysis},
pages = {281 -- 302},
year = {2007},
doi = {10.1214/07-BA211},
URL = {https://doi.org/10.1214/07-BA211}
}

@article{ma2010hierarchical,
  title={Hierarchical and joint site-edge methods for Medicare hospice service region boundary analysis},
  author={Ma, Haijun and Carlin, Bradley P and Banerjee, Sudipto},
  journal={Biometrics},
  volume={66},
  number={2},
  pages={355--364},
  year={2010},
  doi={10.1111/j.1541-0420.2009.01291.x},
  url={https://doi.org/10.1111/j.1541-0420.2009.01291.x}
}

@book{koch2005cartographies,
title={Cartographies of Disease: Maps, Mapping, and Medicine},
author={Koch, Tom},
year={2005},
publisher={Esri Press},
address={Redlands, CA},
isbn={9781589481206}
}

@book{lawson2016handbook,
	title={Handbook of Spatial Epidemiology},
	editor={Lawson, Andrew B. and Banerjee, Sudipto and Haining, Robert and Ugarte, Mar{\'i}a D.},
	year={2016},
	publisher={CRC Press},
	address={Boca Raton, FL},
	isbn={9781482253016},
	url={https://www.routledge.com/Handbook-of-Spatial-Epidemiology/Lawson-Banerjee-Haining-Ugarte/p/book/9780367570385}
}

@article{lu2005bayesian,
	title={{B}ayesian Areal Wombling for Geographical Boundary Analysis},
	author={Lu, Haolan and Carlin, Bradley P},
	journal={Geographical Analysis},
	volume={37},
	number={3},
	pages={265--285},
	year={2005},
	doi={10.1111/j.1538-4632.2005.00624.x},
	url={https://doi.org/10.1111/j.1538-4632.2005.00624.x}
}

@manual{seer,
  author       = {{Surveillance Research Program, National Cancer Institute}},
  title        = {{SEER*Stat} Software},
  organization = {National Cancer Institute},
  address      = {Bethesda, MD},
  year         = {2019},
  note         = {Version 8.3.6, released August 8, 2019},
  url          = {https://seer.cancer.gov/seerstat/}
}

@article{pavani2025bayesian,
author = {Pavani, Jessica and Quintana, Fernando Andrés},
title = {A Bayesian Multivariate Model With Temporal Dependence on Random Partition of Areal Data for Mosquito-Borne Diseases},
journal = {Statistics in Medicine},
volume = {44},
number = {3-4},
pages = {e10325},
doi = {https://doi.org/10.1002/sim.10325},
url = {https://onlinelibrary.wiley.com/doi/abs/10.1002/sim.10325},
eprint = {https://onlinelibrary.wiley.com/doi/pdf/10.1002/sim.10325},
year = {2025}
}

@article{pavani2026modeling,
author = {Jessica Pavani and Rosangela H. Loschi and Fernando A. Quintana},
title = {{Modeling temporal dependence in a sequence of spatial random partitions driven by spanning tree: An application to mosquito-borne diseases}},
volume = {20},
journal = {The Annals of Applied Statistics},
number = {2},
publisher = {Institute of Mathematical Statistics},
pages = {1388 -- 1408},
year = {2026},
doi = {10.1214/26-AOAS2172},
URL = {https://doi.org/10.1214/26-AOAS2172}
}

@article{gianella2026bayesian,
author = {Matteo Gianella and Mario Beraha and Alessandra Guglielmi},
title = {{Bayesian Nonparametric Boundary Detection for Multiple Areal Data}},
journal = {Bayesian Analysis},
publisher = {International Society for Bayesian Analysis},
pages = {1 -- 25},
year = {2026},
doi = {10.1214/26-BA1605},
URL = {https://doi.org/10.1214/26-BA1605}
}

@Article{yao2017using,
    title = {Using stacking to average {B}ayesian predictive distributions},
    author = {Yuling Yao and Aki Vehtari and Daniel Simpson and Andrew Gelman},
    year = {2017},
    journal = {Bayesian Analysis},
    doi = {10.1214/17-BA1091},
  }

@article{zhang2025jasa,
author = {Lu Zhang and Wenpin Tang and Sudipto Banerjee},
title = {Bayesian Geostatistics Using Predictive Stacking},
journal = {Journal of the American Statistical Association},
volume = {121},
number = {554},
pages = {1549--1561},
year = {2026},
publisher = {Taylor \& Francis},
doi = {10.1080/01621459.2025.2566449},
URL = {https://doi.org/10.1080/01621459.2025.2566449},
eprint = {https://doi.org/10.1080/01621459.2025.2566449}
}

@article{besag1991bayesian,
  title={Bayesian image restoration, with two applications in spatial statistics},
  author={Besag, Julian and York, Jeremy and Molli{\'e}, Annie},
  journal={Annals of the Institute of Statistical Mathematics},
  volume={43},
  number={1},
  pages={1--20},
  year={1991},
  doi={10.1007/BF00116466},
  url={https://doi.org/10.1007/BF00116466}
}

@article{radev2020bayesflow,
  title={Bayes{F}low: Learning complex stochastic models with invertible neural networks},
  author={Radev, Stefan T and Mertens, Ulf K and Voss, Andreas and Ardizzone, Lynton and K{\"o}the, Ullrich},
  journal={IEEE Transactions on Neural Networks and Learning Systems},
  volume={33},
  number={4},
  pages={1452--1466},
  year={2022},
  doi={10.1109/TNNLS.2020.3042395},
  url={https://doi.org/10.1109/TNNLS.2020.3042395}
}

@article{zammit2025neural,
  title={Neural methods for amortized inference},
  author={Zammit-Mangion, Andrew and Sainsbury-Dale, Matthew and Huser, Rapha{\"e}l},
  journal={Annual Review of Statistics and Its Application},
  volume={12},
  number={1},
  pages={311--335},
  year={2025},
  doi={10.1146/annurev-statistics-112723-034123},
  url={https://doi.org/10.1146/annurev-statistics-112723-034123}
}

@article{sainsbury2024likelihood,
  title={Likelihood-free parameter estimation with neural {B}ayes estimators},
  author={Sainsbury-Dale, Matthew and Zammit-Mangion, Andrew and Huser, Rapha{\"e}l},
  journal={The American Statistician},
  volume={78},
  number={1},
  pages={1--14},
  year={2024},
  doi={10.1080/00031305.2023.2249522},
  url={https://doi.org/10.1080/00031305.2023.2249522}
}

@article{aiello2023detecting,
  title={Detecting spatial health disparities using disease maps},
  author={Aiello, Luca and Banerjee, Sudipto},
  journal={arXiv preprint arXiv:2309.02086},
  year={2023}
}

@InProceedings{rezende2015variational,
  title = 	 {Variational Inference with Normalizing Flows},
  author = 	 {Rezende, Danilo and Mohamed, Shakir},
  booktitle = 	 {Proceedings of the 32nd International Conference on Machine Learning},
  pages = 	 {1530--1538},
  year = 	 {2015},
  editor = 	 {Bach, Francis and Blei, David},
  volume = 	 {37},
  series = 	 {Proceedings of Machine Learning Research},
  address = 	 {Lille, France},
  month = 	 {07--09 Jul},
  publisher =    {PMLR},
  url = 	 {https://proceedings.mlr.press/v37/rezende15.html}
}

@inproceedings{papamakarios2017masked,
 author = {Papamakarios, George and Pavlakou, Theo and Murray, Iain},
 booktitle = {Advances in Neural Information Processing Systems},
 editor = {I. Guyon and U. Von Luxburg and S. Bengio and H. Wallach and R. Fergus and S. Vishwanathan and R. Garnett},
 pages = {},
 publisher = {Curran Associates, Inc.},
 title = {Masked Autoregressive Flow for Density Estimation},
 url = {https://proceedings.neurips.cc/paper_files/paper/2017/file/6c1da886822c67822bcf3679d04369fa-Paper.pdf},
 volume = {30},
 year = {2017}
}

@inproceedings{durkan2019neural,
author = {Durkan, Conor and Bekasov, Artur and Murray, Iain and Papamakarios, George},
title = {Neural spline flows},
year = {2019},
publisher = {Curran Associates Inc.},
address = {Red Hook, NY, USA},
booktitle = {Proceedings of the 33rd International Conference on Neural Information Processing Systems},
articleno = {675},
numpages = {12}
}

@article{papamakarios2021normalizing,
  author  = {George Papamakarios and Eric Nalisnick and Danilo Jimenez Rezende and Shakir Mohamed and Balaji Lakshminarayanan},
  title   = {Normalizing Flows for Probabilistic Modeling and Inference},
  journal = {Journal of Machine Learning Research},
  year    = {2021},
  volume  = {22},
  number  = {57},
  pages   = {1--64},
  url     = {http://jmlr.org/papers/v22/19-1028.html}
}

@InProceedings{lee2019set,
  title = 	 {Set {T}ransformer: A Framework for Attention-based Permutation-Invariant Neural Networks},
  author =       {Lee, Juho and Lee, Yoonho and Kim, Jungtaek and Kosiorek, Adam and Choi, Seungjin and Teh, Yee Whye},
  booktitle = 	 {Proceedings of the 36th International Conference on Machine Learning},
  pages = 	 {3744--3753},
  year = 	 {2019},
  editor = 	 {Chaudhuri, Kamalika and Salakhutdinov, Ruslan},
  volume = 	 {97},
  series = 	 {Proceedings of Machine Learning Research},
  month = 	 {09--15 Jun},
  publisher =    {PMLR},
  url = 	 {https://proceedings.mlr.press/v97/lee19d.html}
}

@book{lawson2018bayesian,
  title={Bayesian Disease Mapping: Hierarchical Modeling in Spatial Epidemiology},
  author={Lawson, Andrew B},
  year={2018},
  edition={3},
  publisher={Chapman and Hall/CRC},
  url={https://www.routledge.com/Bayesian-Disease-Mapping-Hierarchical-Modeling-in-Spatial-Epidemiology-Third-Edition/Lawson/p/book/9780367781224}
}

@article{wakefield2007disease,
  title={Disease mapping and spatial regression with count data},
  author={Wakefield, Jon},
  journal={Biostatistics},
  volume={8},
  number={2},
  pages={158--183},
  year={2007},
  doi={10.1093/biostatistics/kxl008},
  url={https://doi.org/10.1093/biostatistics/kxl008}
}

@incollection{leroux2000estimation,
  title={Estimation of disease rates in small areas: a new mixed model for spatial dependence},
  author={Leroux, Brian G and Lei, Xingye and Breslow, Norman},
  booktitle={Statistical Models in Epidemiology, the Environment, and Clinical Trials},
  editor={Halloran, M. Elizabeth and Berry, Donald},
  pages={179--191},
  year={2000},
  publisher={Springer},
  doi={10.1007/978-1-4612-1284-3_4},
  url={https://doi.org/10.1007/978-1-4612-1284-3_4}
}

@article{lee2014bayesian,
  title   = {A {B}ayesian Localized Conditional Autoregressive Model for Estimating the Health Effects of Air Pollution},
  author  = {Lee, Duncan and Mitchell, Richard},
  journal = {Biometrics},
  volume  = {70},
  number  = {2},
  pages   = {419--429},
  year    = {2014},
  doi     = {10.1111/biom.12156}
}

@article{rushworth2017adaptive,
  title={An adaptive spatiotemporal smoothing model for estimating trends and step changes in disease risk},
  author={Rushworth, Alastair and Lee, Duncan and Sarran, Christophe},
  journal={Journal of the Royal Statistical Society Series C: Applied Statistics},
  volume={66},
  number={1},
  pages={141--157},
  year={2017},
  doi={10.1111/rssc.12155},
  url={https://doi.org/10.1111/rssc.12155}
}

@article{lee2021improved,
  title   = {Improved Inference for Areal Unit Count Data Using Graph-Based Optimisation},
  author  = {Lee, Duncan and Meeks, Kitty and Pettersson, William},
  journal = {Statistics and Computing},
  volume  = {31},
  pages   = {51},
  year    = {2021},
  doi     = {10.1007/s11222-021-10025-7}
}

@article{wu2025assessing,
  title={Assessing spatial disparities: a {B}ayesian linear regression approach},
  author={Wu, Kyle and Banerjee, Sudipto},
  journal={Biostatistics},
  volume={26},
  number={1},
  pages={kxaf048},
  year={2025},
  doi={10.1093/biostatistics/kxaf048},
  url={https://doi.org/10.1093/biostatistics/kxaf048}
}

@Article{CARBayes2013,
    author = {Duncan Lee},
    title = {{CARBayes}: An {R} Package for {B}ayesian Spatial Modeling
      with Conditional Autoregressive Priors},
    year = {2013},
    journal = {{Journal of Statistical Software}},
    doi = {10.18637/jss.v055.i13},
    url = {https://doi.org/10.18637/jss.v055.i13},
    pages = {1--24},
    volume = {55},
    number = {13},
  }

@article{besag1974spatial,
  title={Spatial interaction and the statistical analysis of lattice systems},
  author={Besag, Julian},
  journal={Journal of the Royal Statistical Society: Series B (Methodological)},
  volume={36},
  number={2},
  pages={192--236},
  year={1974},
  doi={10.1111/j.2517-6161.1974.tb00999.x},
  url={https://doi.org/10.1111/j.2517-6161.1974.tb00999.x}
}

@inproceedings{zaheer2017deep,
 author = {Zaheer, Manzil and Kottur, Satwik and Ravanbakhsh, Siamak and Poczos, Barnabas and Salakhutdinov, Russ R and Smola, Alexander},
 booktitle = {Advances in Neural Information Processing Systems},
 editor = {I. Guyon and U. Von Luxburg and S. Bengio and H. Wallach and R. Fergus and S. Vishwanathan and R. Garnett},
 pages = {},
 publisher = {Curran Associates, Inc.},
 title = {Deep Sets},
 url = {https://proceedings.neurips.cc/paper_files/paper/2017/file/f22e4747da1aa27e363d86d40ff442fe-Paper.pdf},
 volume = {30},
 year = {2017}
}

@article{wikle2023statistical,
   author = "Wikle, Christopher K. and Zammit-Mangion, Andrew",
   title = "Statistical Deep Learning for Spatial and Spatiotemporal Data", 
   journal= "Annual Review of Statistics and Its Application",
   year = "2023",
   volume = "10",
   number = "Volume 10, 2023",
   pages = "247-270",
   doi = "https://doi.org/10.1146/annurev-statistics-033021-112628",
   url = "https://www.annualreviews.org/content/journals/10.1146/annurev-statistics-033021-112628",
   publisher = "Annual Reviews",
   issn = "2326-831X",
   type = "Journal Article",
  }

@inproceedings{
navott2026deeprv,
title={Deep{RV}: Accelerating Spatiotemporal Inference with Pre-trained Neural Priors},
author={Jhonathan Navott and Daniel Jenson and Seth Flaxman and Elizaveta Semenova},
booktitle={The 29th International Conference on Artificial Intelligence and Statistics},
year={2026},
url={https://openreview.net/forum?id=5lM1So6mQN}
}

@article{mishra2022pi,
  title={$\pi$ VAE: a stochastic process prior for Bayesian deep learning with MCMC},
  author={Mishra, Swapnil and Flaxman, Seth and Berah, Tresnia and Zhu, Harrison and Pakkanen, Mikko and Bhatt, Samir},
  journal={Statistics and Computing},
  volume={32},
  number={6},
  pages={96},
  year={2022},
  publisher={Springer},
  doi={https://doi.org/10.1007/s11222-022-10151-w}
}

@article{semenova2022priorvae,
    author = {Semenova, Elizaveta and Xu, Yidan and Howes, Adam and Rashid, Theo and Bhatt, Samir and Mishra, Swapnil and Flaxman, Seth},
    title = {PriorVAE: encoding spatial priors with variational autoencoders for small-area estimation},
    journal = {Journal of The Royal Society Interface},
    volume = {19},
    number = {191},
    pages = {20220094},
    year = {2022},
    month = {06},
    issn = {1742-5689},
    doi = {10.1098/rsif.2022.0094},
    url = {https://doi.org/10.1098/rsif.2022.0094},
    eprint = {https://royalsocietypublishing.org/rsif/article-pdf/doi/10.1098/rsif.2022.0094/926596/rsif.2022.0094.pdf},
}

@article{semenova2023priorcvae,
  title={PriorCVAE: scalable MCMC parameter inference with Bayesian deep generative modelling},
  author={Semenova, Elizaveta and Verma, Prakhar and Cairney-Leeming, Max and Solin, Arno and Bhatt, Samir and Flaxman, Seth},
  journal={arXiv preprint arXiv:2304.04307},
  year={2023}
}

@article{gelfand2005spatialdp,
author = {Alan E Gelfand and Athanasios Kottas and Steven N MacEachern},
title = {Bayesian Nonparametric Spatial Modeling With Dirichlet Process Mixing},
journal = {Journal of the American Statistical Association},
volume = {100},
number = {471},
pages = {1021--1035},
year = {2005},
publisher = {Taylor \& Francis},
doi = {10.1198/016214504000002078},
URL = {https://doi.org/10.1198/016214504000002078},
eprint = {https://doi.org/10.1198/016214504000002078}
}

@article{duan2007generalized,
    author = {Duan, Jason A. and Guindani, Michele and Gelfand, Alan E.},
    title = {Generalized Spatial Dirichlet Process Models},
    journal = {Biometrika},
    volume = {94},
    number = {4},
    pages = {809-825},
    year = {2007},
    month = {12},
    issn = {0006-3444},
    doi = {10.1093/biomet/asm071},
    url = {https://doi.org/10.1093/biomet/asm071},
    eprint = {https://academic.oup.com/biomet/article-pdf/94/4/809/681380/asm071.pdf},
}

@article{presicce_bayesian_2024,
  author  = {Luca Presicce and Sudipto Banerjee},
  title   = {Bayesian Transfer Learning for Artificially Intelligent Geospatial Systems: A Predictive Stacking Approach},
  journal = {Journal of Machine Learning Research},
  year    = {2026},
  volume  = {27},
  number  = {196},
  pages   = {1--60},
  url     = {http://jmlr.org/papers/v27/26-0307.html}
}

@article{panEtAl2025ba,
author = {Soumyakanti Pan and Lu Zhang and Jonathan R. Bradley and Sudipto Banerjee},
title = {{Bayesian Inference for Spatial-Temporal Non-Gaussian Data Using Predictive Stacking}},
journal = {Bayesian Analysis},
publisher = {International Society for Bayesian Analysis},
volume = {(In press)},
pages = {},
year = {2025},
doi = {10.1214/25-BA1582},
URL = {https://doi.org/10.1214/25-BA1582}
}

@misc{kosis2026,
  author       = {{Korean Statistical Information Service}},
  title        = {{KOSIS} Statistical Database},
  year         = {2026},
  note         = {Tables DT\_1B34E13 and DT\_1B34E11 for municipal and
    national cause-specific mortality, and table DT\_1B040M5 for municipal
    midyear population; accessed August 2026},
  url          = {https://kosis.kr/eng/}
}

@misc{kdca2026,
  author       = {{Korea Disease Control and Prevention Agency}},
  title        = {Community Health Outcomes and Health Determinants Database,
    Version 1.7},
  year         = {2026},
  note         = {Released April 1, 2026; accessed August 2026},
  url          = {https://chs.kdca.go.kr/chs/recsRoom/dataBaseDownloadPop.do}
}

@misc{koreaBoundaries2019,
  author       = {{Ministry of the Interior and Safety, Republic of Korea}},
  title        = {Road Name Address Electronic Map: May 2019 Municipal
    Boundary Layer
    (\texttt{TL\_SCCO\_SIG})},
  year         = {2019},
  note         = {Legal Si/Gun/Gu boundary layer; the May 2019 archive used by
    the reproducibility scripts was retrieved through the GIS Developer
    mirror},
  url          = {https://eng.juso.go.kr/addrlink/adresInfoProvd/guidance/provdAdresInfo.do}
}

@techreport{californiaTobacco2018,
  author = {{California Department of Public Health, California Tobacco
    Control Program}},
  title = {California Tobacco Facts and Figures 2018},
  institution = {California Department of Public Health},
  address = {Sacramento, CA},
  year = {2018},
  url = {https://www.cdph.ca.gov/Programs/CCDPHP/DCDIC/CTCB/CDPH\%20Document\%20Library/ResearchandEvaluation/FactsandFigures/CATobaccoFactsFigures2018.pdf}
}

@Manual{CARBayesdata2022,
  title = {{CARBayesdata}: Data Used in the Vignettes Accompanying the
    {CARBayes} and {CARBayesST} Packages},
  author = {Duncan Lee},
  year = {2022},
  note = {R package documentation for \texttt{respiratorydata} and
    \texttt{GGHB.IZ}; version 3.0},
  url = {https://cran.r-project.org/web/packages/CARBayesdata/CARBayesdata.pdf}
}

@article{wang2026inference,
title = {Inference for stationary Log-Gaussian Cox point processes using Bayesian deep learning: Application to human oral microbiome image data},
journal = {Spatial Statistics},
volume = {73},
pages = {100973},
year = {2026},
issn = {2211-6753},
doi = {https://doi.org/10.1016/j.spasta.2026.100973},
url = {https://www.sciencedirect.com/science/article/pii/S2211675326000217},
author = {Shuwan Wang and Christopher K. Wikle and Athanasios C. Micheas and Jessica L. {Mark Welch} and Jacqueline R. Starr and Kyu Ha Lee}
}

\end{document}